\documentclass[10pt,a4paper,twocolumn]{article}
\usepackage[margin=17mm,top=20mm,bottom=20mm,columnsep=6mm]{geometry}
\usepackage{cite}
\usepackage[T1]{fontenc}
\usepackage{newtxtext,newtxmath}
\usepackage{amsmath,bm,mathtools}
\usepackage{graphicx,booktabs,tabularx,array,multirow}
\newcolumntype{P}[1]{>{\raggedright\arraybackslash}p{#1}}
\newcolumntype{Y}{>{\raggedright\arraybackslash}X}
\usepackage{tikz}
\usetikzlibrary{arrows.meta,positioning,fit,calc,backgrounds}
\usepackage{algorithm,algpseudocode}
\usepackage[font=small,labelfont=bf]{caption}
\usepackage{subcaption}
\usepackage{microtype,xurl,xcolor} 
\usepackage{stfloats,flushend}
\usepackage{fancyhdr,enumitem}
\usepackage[hidelinks,bookmarksopen=true,bookmarksopenlevel=1,pdftitle={QNETSIM for memory assisted design of satellite and hybrid quantum networks},pdfauthor={Seid Koudia, Francesco Vista, Intidhar Bedhief, Ilora Maity, Heyang Peng, Symeon Chatzinotas}]{hyperref}
\allowdisplaybreaks[1]
\setlist{nosep,leftmargin=*}
\newcommand{\ket}[1]{\lvert #1\rangle}
\newcommand{\bra}[1]{\langle #1\rvert}
\newcommand{\proj}[1]{\ket{#1}\!\bra{#1}}
\newcommand{\tr}{\operatorname{Tr}}
\newcommand{\id}{\mathbb{I}}
\newcommand{\E}{\mathbb{E}}

\newcommand{\code}[1]{\texttt{\small #1}}

\newcommand{\dd}{\mathrm{d}}
\newtheorem{proposition}{Proposition}

\newenvironment{proof}{\par\noindent\textit{Proof.}\ }{\hfill$\square$\par}

\newcommand{\AuthorNames}{Seid Koudia\textsuperscript{*}, Francesco Vista, Heyang Peng, Ilora Maity, Intidhar Bedhief, Juan Carlos Hernandez Hernandez, Symeon Chatzinotas}
\newcommand{\AuthorAffiliation}{Signal Processing and Communications (SIGCOM),\\Interdisciplinary Centre for Security, Reliability and Trust (SnT),\\University of Luxembourg, L-1855 Luxembourg, Luxembourg}
\newcommand{\CorrespondingAuthor}{\textsuperscript{*}Corresponding author: \href{mailto:seid.koudia@uni.lu}{seid.koudia@uni.lu}}


\newcommand{\PaperTitle}{QNETSIM for memory assisted design of satellite and hybrid quantum networks}

\begin{document}
\twocolumn[{
\begin{center}
{\LARGE\bfseries \PaperTitle\par}
\vspace{8pt}
{\normalsize\AuthorNames\par}
\vspace{4pt}
{\small\AuthorAffiliation\par}
\vspace{3pt}
{\small\CorrespondingAuthor\par}
\end{center}
\vspace{6pt}
\noindent\textbf{Abstract.} Space-based quantum information networks require simulation tools that capture orbital dynamics, probabilistic quantum operations, and classical coordination. We introduce QNETSIM, a modular discrete-event simulator for the design and performance evaluation of satellite and hybrid space–ground quantum networks. The framework integrates orbital geometry, free-space optical and fiber channels, quantum memories, entanglement generation, and Bell-state measurements within a common event-driven architecture. Physical models account for optical loss, atmospheric turbulence, detector imperfections, finite memory capacity, and storage-induced decoherence. Space–time routing and segmented nesting support entanglement distribution across changing topologies, while a software-defined control plane coordinates generation, storage, and swapping. Configurable scenarios encompass onboard repeaters, satellite constellations, ground-based repeaters, and direct satellite distribution. Through explicit interfaces between physical models, network protocols, and orchestration, QNETSIM links component characteristics and operational policies to entanglement fidelity, throughput, latency, availability, and memory utilization, enabling systematic exploration of architectural and resource-management trade-offs in future space-based quantum information networks.
\vspace{12pt}
}]

\section{Introduction}
\label{sec:introduction}

Quantum information networks aim to connect quantum processors, memories, sensors, and photonic devices through the distribution of quantum states and entanglement. By extending quantum resources beyond individual laboratories, these networks can support quantum key distribution (QKD), distributed quantum computation, and entanglement-enhanced sensing and metrology \cite{wehner2018,parny2023}. Their development requires more than the transmission of isolated photons: useful quantum correlations must be established between selected endpoints, preserved until they are needed, and consumed by applications under appropriate timing and quality constraints. Space-based quantum information networks offer a route to extending these capabilities across continental and intercontinental distances, complementing terrestrial optical-fiber infrastructure with satellite links.

The motivation for a space segment originates in the physical limitations of long-distance quantum transmission. Optical-fiber transmissivity decreases exponentially with distance, imposing a fundamental rate--loss trade-off on repeaterless quantum communication \cite{pirandola2017}. Unknown quantum states cannot be noiselessly copied or regenerated using conventional optical amplification. Satellite links offer a different propagation geometry: most of the path lies outside the atmosphere, avoiding the material absorption accumulated along a comparable fiber connection. Their performance is instead governed by diffraction, finite receiving apertures, atmospheric propagation, and terminal pointing \cite{pirandola_sat}. The satellite distribution of entanglement demonstrated by Yin et al. established an experimental foundation for this approach \cite{yin2017}. Its extension to network services requires connecting satellite-generated resources to ground-based users and, where appropriate, to quantum memories and repeater nodes.

This hybrid perspective is particularly relevant to the European Quantum Communication Infrastructure (EuroQCI). The European Commission and the European Space Agency envisage an infrastructure combining national terrestrial quantum networks with a space segment, supporting the protection of sensitive communications and critical infrastructure across Europe \cite{euroqci2025}. EAGLE-1 is being developed to demonstrate satellite-to-ground QKD and provide operational experience for this wider infrastructure \cite{eagle1}. Such initiatives make the integration of optical ground stations, terrestrial networks, and their control systems an immediate engineering concern. QKD delivers classical secret keys, whereas general entanglement services additionally require the distribution and manipulation of shared quantum states. Moving toward memory-assisted networking therefore raises further questions about storage, swapping, resource allocation, and interfaces between independently operated network segments.

Space-based quantum links face several coupled physical limitations. Orbital motion changes slant range, elevation, propagation delay, and the duration of useful contact, while atmospheric extinction, turbulence, clouds, and background illumination influence the optical opportunities within that contact \cite{pirandola_sat,khatri2021}. Finite apertures and pointing errors determine how much light is collected; wavefront distortion and adaptive-optics performance influence coupling into the accepted receiving modes \cite{ao}. Source brightness, multipair emission, detector efficiency, dark counts, timing jitter, and dead time then determine which arrivals contribute to usable correlations \cite{ma2007,szlachetka,bonato}. Consequently, increasing the emitted photon rate or extending geometrical visibility does not necessarily improve application performance. Wavelength, receiver acceptance, detector technology, and operating conditions must be assessed together.

Quantum memories and Bell-state measurements introduce a further set of constraints. Memories permit asynchronous elementary links to be joined, but their usefulness depends jointly on write and retrieval efficiencies, coherence time, bandwidth, and the number of independently addressable modes. Ensemble optical depth and control-field dynamics influence storage and retrieval, while environmental noise degrades the stored logical state \cite{gorshkov_prl,gorshkov_free}. Entanglement swapping additionally requires compatible optical modes and a successful measurement, with passive linear-optical Bell analyzers subject to an intrinsic discrimination limit under their standard resource assumptions \cite{bsm_limit}. Repeater placement consequently trades reduced synchronization demands against storage loss, decoherence, and additional probabilistic operations. For onboard memories, these trade-offs also involve payload integration and space qualification, motivating quantitative capability studies rather than assumptions of ideal storage \cite{gundogan2021}.

The same physical constraints shape network control and management. A geometrically available route is useful only if its elementary resources can be generated, stored, and joined before their quality or the contact opportunities expire. Routing must therefore account for time-dependent connectivity, memory occupancy, and the ordering of swaps. Classical communication is integral to this process: heralding, resource identification, reservation, measurement outcomes, and completion notifications must remain associated with the correct quantum resources \cite{rfc9340}. Signaling and processing delays can consume part of the available storage interval. A software-defined architecture separates network-wide scheduling and admission policies from local execution, while fast pointing, synchronization, and optical correction remain terminal functions. Management must also reconcile heterogeneous device capabilities, service requirements, and control interfaces across the space and terrestrial segments \cite{parny2023,rfc9340}.

These interdependencies make system-level simulation an essential instrument for mission planning. Before committing to payload and ground-segment designs, mission studies must assess orbit selection, ground-station placement, optical apertures, wavelengths, memory provisioning, and operating policies against a common set of service objectives. A longer contact can be offset by higher propagation loss; additional memory capacity can remain underused when loading is inefficient; and a shorter route can perform poorly when its resource-generation times are incompatible. Link-budget calculations provide necessary physical inputs, but do not by themselves resolve stochastic waiting, finite-capacity contention, failed swaps, or delayed control actions. Discrete-event simulation provides a natural means of representing these dependencies and relating component assumptions to delivered-pair rate, conditional fidelity, latency, availability, and resource utilization \cite{netsquid,sequence,meister}. Configuration-controlled studies can then support requirement allocation and sensitivity analysis before experimental deployment.

Existing simulation and architecture studies provide important foundations. NetSquid~\cite{netsquid}, QuNetSim~\cite{qunetsimdocs}, QNET~\cite{qnetdocs}, SeQUeNCe~\cite{sequence}, and QuISP~\cite{quisp} offer complementary approaches to device modeling, event execution, and quantum-network protocols. Satellite-specific work has examined ground-memory architectures, orbiting repeaters, constellation coverage, and space-borne memory requirements \cite{boone2015,liorni2021,khatri2021,gundogan2021}. Event-based multi-satellite studies, satellite-connected quantum cities, and coupled orbital--optical simulations further demonstrate the importance of temporal and physical modeling \cite{wallnofer2022,yehia,meister}. The QUASAR simulator also addresses dynamic satellite architectures and routing \cite{shi2026}. Table~\ref{tab:comparison} summarizes the simulator abstractions relevant to the present work. These contributions motivate an integrated mission-design environment in which physical assumptions, finite quantum resources, and programmable policies can be varied through explicit interfaces.

The roadmap of de Forges de Parny et al. distinguishes a direct entanglement-resource demonstrator from a networking stage that assembles elementary links through storage and swapping \cite{parny2023}. QNETSIM addresses the system-level design questions raised by this transition. We introduce it as a modular discrete-event simulator for space-based and hybrid space--ground quantum information networks. Its architecture combines orbital geometry, free-space optical and fiber channels, quantum memories, entanglement generation and swapping, network algorithms, and software-defined orchestration on a common event timeline. Physical models are separated from the scheduling core and expose operation probabilities and conditional quality descriptors to node-level resource management. The accompanying physical formulation relates matter--light interactions and open-system evolution to effective storage and measurement channels, providing a basis for selecting and refining the network-level descriptions. Space--time-based distribution and segmented nesting build on the framework of Koudia et al. \cite{stbd2024}. Within QNETSIM, these mechanisms connect evolving link opportunities to the scheduling and consumption of stored entanglement resources.

Three scenarios organize the evaluation around complementary mission-design questions. In S1, an onboard repeater connects two optical ground stations, exposing the interaction between geometry, asynchronous loading, and memory coherence. In S2, a low-Earth-orbit constellation introduces time-dependent routing and segmented swapping. In S3, a ground-based repeater at Windhof joins a satellite downlink to a fiber connection toward Kirchberg, examining heterogeneous resource supply. A separate direct Matera--Helmos calculation provides a memory-free optical reference for receiver and source operating-point studies. Together, these cases connect architectural choices to observable network behavior and establish a framework for investigating the transition from individual satellite quantum links to managed entanglement services. The Results present the service indicators, architecture, and scenario findings; the Methods detail the physical models, algorithms, configurations, and statistical treatment, including the derivations in Sections~\ref{app:memories}--\ref{app:stbd}.

\begin{table*}[!tp]
\centering\small
\caption{Representative simulator abstractions. Entries identify modeling and interface scope, rather than an accuracy or speed ranking.}
\label{tab:comparison}
\begin{tabularx}{\textwidth}{P{1.65cm}P{2.65cm}YY}
\toprule
Tool & Execution / interface & State and component representation & Protocol and satellite scope\\
\midrule
QuNetSim \cite{qunetsimdocs} & Python, event-driven & Backend quantum states; configurable connections & Quantum/classical messages, routing, teleportation, and entanglement protocols.\\
NetSquid \cite{netsquid,netsquiddocs} & Python; optimized C/Cython DES & Ket, stabilizer, and density matrices; components with time-dependent noise & Asynchronous protocols and classical control; satellite applications include \cite{yehia}.\\
QNET \cite{qnetdocs} & Python DES; QCompute-related interfaces & Modular devices, channels, and physical models & Resource/protocol simulation and computation interfaces; documented Micius example.\\
SeQUeNCe \cite{sequence} & Python DES & Hardware and quantum-resource components & Layered entanglement, resource, network, and application management.\\
QuISP \cite{quisp} & C++ / OMNeT++ & Scalable error-based resource representation & Heterogeneous repeater-network protocols.\\
QNETSIM & Native token DES; Python workflows; REST control & Analytical FSO/fiber and platform-specific memories; finite modes; field-to-instrument reference models & Single repeater, constellation, and hybrid workflows; STBD, nesting, and request/QoS control.\\
\bottomrule
\end{tabularx}
\end{table*}
\section{Results}
\subsection{Service requirements and performance measures}
\label{sec:services}
The memory-assisted service is a pair shared between specified endpoints, with a quality descriptor and the classical information required to interpret its measurement frame. A direct detected coincidence is a different product: detection consumes the arriving photons. Table~\ref{tab:kpis} maps the service indicators of Ref.~\cite{parny2023} to quantities recorded by the simulator and to the associated design decisions.

\begin{table*}[!tp]
\centering\small
\caption{Service requirements and corresponding observables. Observation time, contact time, and event count are kept distinct when reporting rates.}
\label{tab:kpis}
\begin{tabularx}{\textwidth}{P{2.3cm}P{7.3cm}Y}
\toprule
Requirement & Observable & Design decision\\
\midrule
Resource rate & Delivered pairs per observation second; separately pairs per pass or contact second & Source allocation and optical/memory provisioning.\\
State quality & Conditional Bell fidelity of a declared delivery population; accepted counts at a target & Memory coherence, age cutoff, and swap depth.\\
Latency & Request-to-delivery time, separating cold start from stocked inventory and physical readiness from known frame & Proactive storage and classical signaling.\\
Availability & Fraction of requests meeting quality and deadline targets, distinct from geometrical visibility & Temporal routing and coverage.\\
Resource cost & Occupied memory-mode seconds per delivered pair; overflow and expiry counts & Bank capacity and buffer policy.\\
Timing and control & Relative photon-arrival uncertainty, gate acceptance, control traffic and delay & Synchronization and controller placement.\\
\bottomrule
\end{tabularx}
\end{table*}

A usable-pair rate counts only deliveries meeting the specified fidelity and deadline. Failed requests remain in the reliability denominator. Conditional fidelity is undefined when there are no deliveries, and a bin average cannot determine how many individual pairs exceed a new quality threshold. These distinctions prevent a high raw count from being mistaken for a high-quality service. Full KPI definitions and the retained bin-level statistics are given in Sections~\ref{si:records} and \ref{sec:extended-results}.

\subsection{A common execution and resource architecture}
\label{sec:architecture}
Figure~\ref{fig:architecture} shows the separation of application requests, network decisions, local agents, and physical models. The event core supplies one ordered simulation timeline. Source, storage, and repeater agents evaluate physical operations when their events execute. A resource ledger records which excitations exist and where they are stored. A controller authorizes future actions using information received through classical messages; it does not directly modify a physical state.
\begin{figure*}[!tbp]
\centering
\begin{subfigure}[t]{\textwidth}
\centering
\begin{tikzpicture}[font=\small,box/.style={draw,rounded corners=2pt,align=center,minimum height=1.0cm,text width=10.0cm},side/.style={draw,rounded corners=2pt,align=center,text width=3.5cm,minimum height=1.2cm},arr/.style={-{Stealth[length=2mm]},thick}]
\node[box] (app) at (0,0) {\textbf{Application and experiment plane}\\Request $(A,B,F_{\min},D)$; workload, configuration, seeds; service consumption};
\node[box] (ctl) at (0,-1.65) {\textbf{SDN and network policy}\\STBD candidates $\rightarrow$ resource admission $\rightarrow$ segmented nesting; retries and deadlines};
\node[box] (agents) at (0,-3.3) {\textbf{Node execution agents}\\\texttt{ControlPlaneProxy}, \texttt{EntanglementSourceAgent},\\\texttt{DynamicTokenStoreAgent}, \texttt{RepeaterBSM}};
\node[box] (phys) at (0,-5.1) {\textbf{Physical instruments and geometry}\\Memory write--storage--read; FSO/fiber/AO; source/detector/BSM; orbital contacts};
\node[side] (des) at (7.2,-1.65) {\textbf{DES substrate}\\Global clock; event ordering; random streams; causal guards};
\node[side] (bank) at (7.2,-3.35) {\textbf{Physical resource ledger}\\Endpoint identities; mode capacity; original write times};
\node[side] (logs) at (7.2,-5.1) {\textbf{Evidence artifacts}\\Events, commands, delivered pairs, failures, manifests};
\draw[arr] (app.south) -- node[right,font=\scriptsize]{requests / outcomes} (ctl.north);
\draw[arr] ([xshift=-12mm]ctl.south) -- node[left,font=\scriptsize]{commands} ([xshift=-12mm]agents.north);
\draw[arr] ([xshift=12mm]agents.north) -- node[right,font=\scriptsize]{heralds / reports} ([xshift=12mm]ctl.south);
\draw[arr] (agents.south) -- node[right,font=\scriptsize]{instrument calls / outcomes} (phys.north);
\draw[arr] (des.west) -- (ctl.east);
\draw[arr] (des.south) -- (bank.north);
\draw[arr] (bank.west) -- (agents.east);
\draw[arr] (phys.east) -- (logs.west);
\draw[arr] (bank.south) -- (logs.north);
\end{tikzpicture}

\caption{Layered execution and resource architecture.}
\label{fig:architecture}
\end{subfigure}
\par\vspace{3mm}
\begin{subfigure}[t]{\textwidth}
\centering
\begin{tikzpicture}[x=1cm,y=1cm,font=\small,>=Latex]
\foreach \x/\label in {0/Application,4/SDN controller,8/Node agents,12/DES and physics}{
\node[draw,rounded corners,minimum width=2.8cm,minimum height=.5cm] at (\x,0) {\label};
\draw[densely dashed] (\x,-.3)--(\x,-5.35);}
\draw[->] (0,-.65)--node[above,font=\scriptsize]{request, target fidelity, deadline}(4,-.65);
\draw[->] (4,-1.25)--node[above,font=\scriptsize]{plan, reservations, StartSchedule}(8,-1.25);
\draw[->] (8,-1.85)--node[above,font=\scriptsize]{scheduled attempt and write}(12,-1.85);
\draw[->] (12,-2.45)--node[above,font=\scriptsize]{physical outcome, original write time}(8,-2.45);
\draw[->] (8,-3.05)--node[above,font=\scriptsize]{REPORT\_STORAGE after signaling}(4,-3.05);
\draw[->] (4,-3.65)--node[above,font=\scriptsize]{InstructSwap with operand versions}(8,-3.65);
\draw[->] (8,-4.25)--node[above,font=\scriptsize]{recall, BSM, consumption, survivor ages}(12,-4.25);
\draw[->] (12,-4.8)--node[above,font=\scriptsize]{SWAP\_RESULT via node/proxy}(4,-4.8);
\draw[->] (4,-5.3)--node[above,font=\scriptsize]{DeliverPair / failure after causal delay}(0,-5.3);
\end{tikzpicture}

\caption{Event/action information flow.}
\label{fig:controlflow}
\end{subfigure}
\caption{\textbf{QNETSIM architecture and orchestration.} (a) Physical models are evaluated at discrete-event boundaries, node agents execute local operations, and the resource ledger records endpoint occupancy and state descriptors. Solid arrows denote software interactions, not quantum-state copying; the instrument-valued formulation specifies the meaning of compact physical interfaces. (b) Request, generation, storage, swap, and delivery messages are scheduled operations rather than instantaneous transfers. Physical outcome, controller knowledge, and endpoint frame readiness have distinct timestamps. Local BSM operands are consumed while the original ages of surviving remote memories are preserved. The external-control transport carries telemetry and commands; effective execution times remain simulation events.}
\label{fig:architecture-control}
\end{figure*}

A pair has a unique identity, a request identifier, two endpoint records, a state representation, and a known or pending Pauli frame. Each endpoint retains its memory bank, mode, original write time, last evolution time, and allocation version. A successful swap consumes the two local operands and creates a new pair identity for the two surviving endpoints. Their original write times are preserved. Capacity counts both occupied and committed slots, including operations awaiting a herald. Multiple labels referring to the same physical bank share its capacity.

The physical interface separates outcome probability from the conditional output state. Compact configurations represent quality by a scalar fidelity or Werner parameter. Bell-probability vectors preserve independent Pauli errors at almost the same state-storage cost; general state-dependent filtering requires a density matrix or a more general quantum instrument. This choice is recorded with the simulation configuration. Sections~\ref{app:memories} and \ref{app:bsm} derive the accepted memory and measurement operations underlying those representations.

Events distinguish emission, arrival, memory loading, herald reception, swapping, and final notification. Time ordering uses the scheduled time followed by declared priority and insertion order. Source rates are allocated across modes rather than multiplied by memory capacity. Poisson thinning can skip unsuccessful attempts when those failures have no effect on the system state; periodic sources instead require geometric skipping. Detector busy periods, failed-attempt reservations, and source multipair statistics require explicit treatment where relevant. These conditions define the acceleration boundary without changing the resource budget.

\subsection{Orchestration and information flow}
\label{sec:control}
QNETSIM supports autonomous local execution, an in-process controller, and an external Ryu-hosted REST controller. Node proxies report timestamped storage and swap outcomes and retrieve pending actions. The controller maintains observed inventories, topology, request deadlines, and swap dependencies. Figure~\ref{fig:controlflow} follows the request, generation, storage, swap, and notification sequence; Table~\ref{tab:api} gives the message fields.

The separation between control and execution matters when messages are delayed. A decision based on a valid storage report can arrive after an operand has expired or been consumed. Local validation of the allocation version and request epoch rejects such stale instructions. Swaps reserve and consume their operands atomically. Quality estimates may be propagated from original write times, while future random channel realizations remain outside the controller's information set.

\begin{figure*}[!tp]
\centering
\includegraphics[width=.98\textwidth]{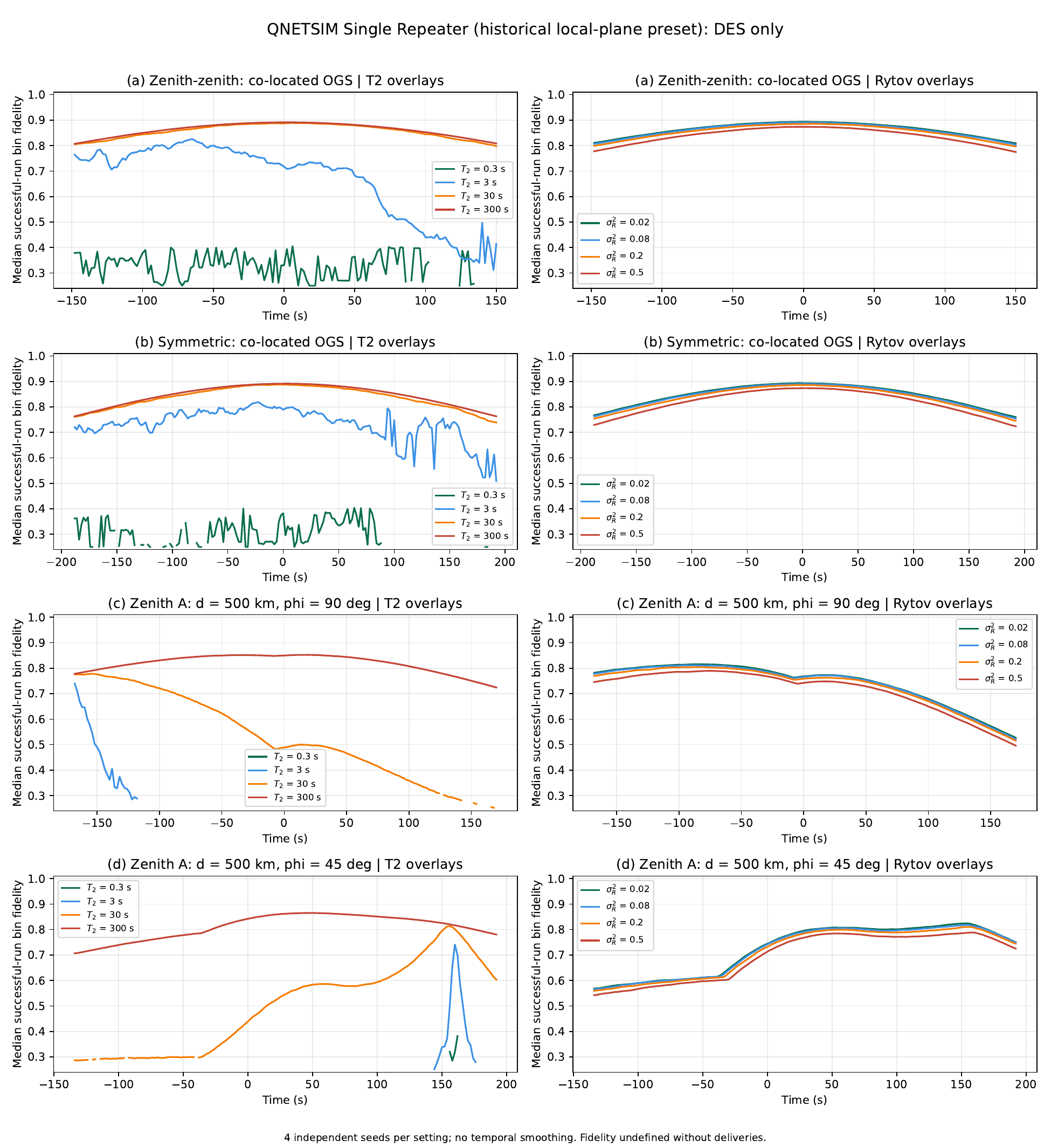}
\caption{Complete S1 conditional-fidelity sweep. Rows are cases A--D; the left column varies $T_2$, and the right varies $\sigma_R^2$. Each curve is the median of supported run-bin means over four repeats, with no temporal smoothing. Empty-support bins remain undefined. The first two rows have co-located stations, not a 1000\,km separation; the lower rows have 500\,km separation and an Alice-anchored pass.}
\label{fig:full-fidelity}
\end{figure*}

\begin{figure*}[!tp]
\centering
\includegraphics[width=.98\textwidth]{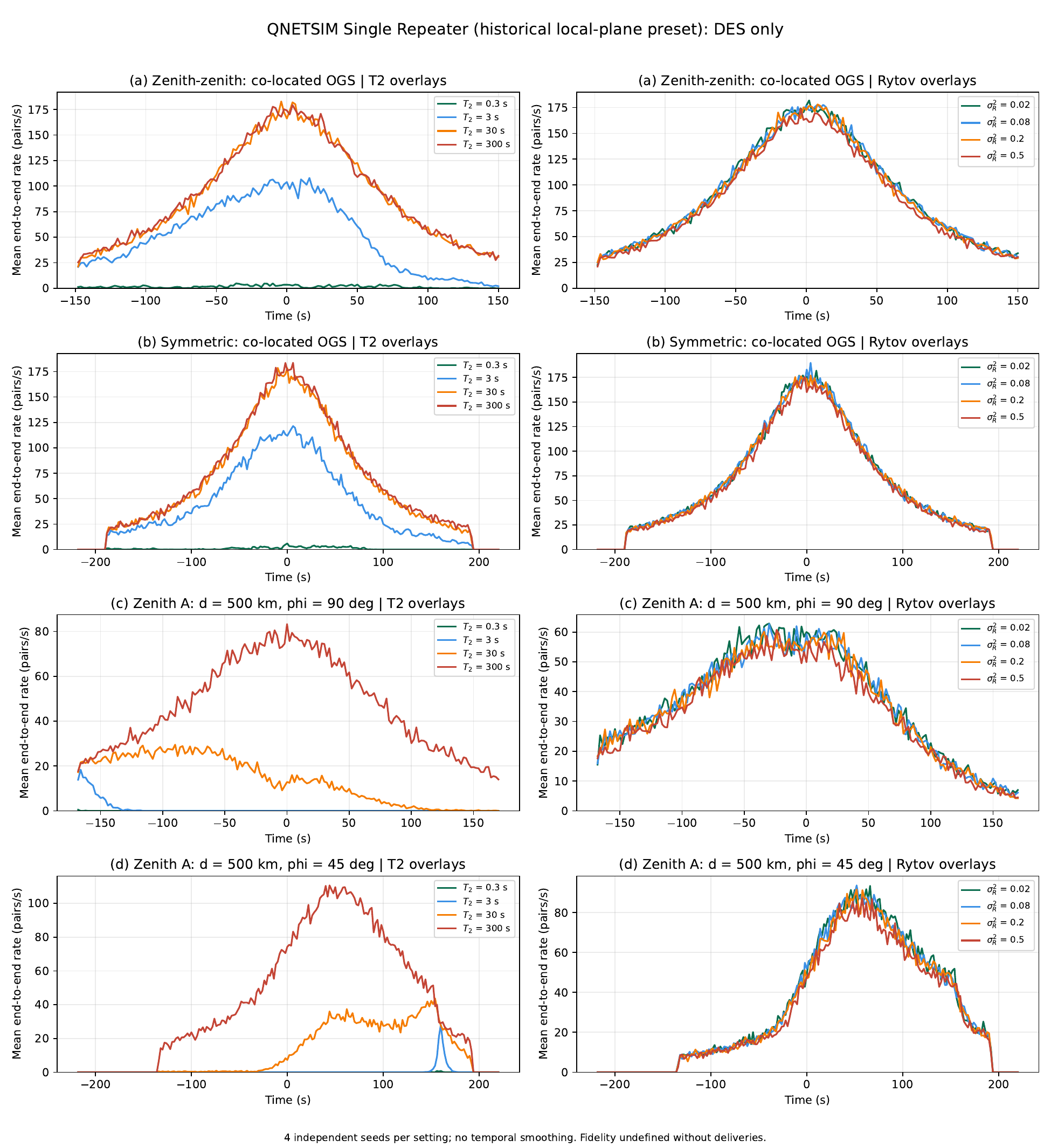}
\caption{Complete S1 rate sweep for the same 32 settings and 128 runs as Fig.~\ref{fig:full-fidelity}. The ordinate is the mean recorded end-to-end delivery rate in pairs/s, obtained from 2\,s bins; no rate multiplier or temporal smoothing is applied. The geometrical and observation-window definitions are given in Section~\ref{sec:s1}.}
\label{fig:full-rate}
\end{figure*}

Classical propagation, serialization, queueing, processing, and polling are separate modeled delays. Their sum determines when an instruction can become eligible on the simulation timeline; wall-clock HTTP execution time is a software measurement rather than automatically a simulated network delay. Optical wavefront correction, pointing, and source synchronization remain local control functions. The network controller sets slower scheduling objectives and responds to telemetry.

Ryu is retained as a compatibility adapter, while the scientific message contract is transport independent. Its upstream maintenance status motivates this separation \cite{ryu2026}. The scenario results below characterize the selected execution configurations; an external-controller equivalence  would additionally require matched input and action traces. The resource and event contracts provide the common comparison boundary.

\subsection{S1: Onboard storage and geometry-dependent memory requirements}
\label{sec:scenarios}\label{sec:s1}
Scenario S1 uses two elementary resources whose local halves occupy separate memories on a satellite; their remote halves belong to two OGSs. The onboard BSM consumes the local excitations and, on success, connects the ground endpoints. In a satellite-source realization, one photon from each elementary pair travels down its corresponding arm and its partner is retained locally. A stored-pair service also requires heralded loading at the remote endpoint. Optical conversion and loading are therefore distinct from photon detection and must be included in a hardware realization.

The controlled campaign comprises four local-plane geometries at 500\,km altitude and 7.2\,km/s projected speed. Cases A and B have co-located terminals and form limiting cases with $0^\circ$ and $90^\circ$ track directions. Cases C and D have 500\,km station separation and an Alice-anchored pass at $90^\circ$ and $45^\circ$. Their observation windows are 300, 440, 340, and 440\,s. These definitions prevent the co-located presets from being interpreted as separated-city links or their different-duration yields as a paired crossing-angle comparison.

The recorded cold-atom settings are $d=20$, dephasing exponent $n=1.3$, initial mode efficiency 0.9, and retrieval-timescale parameter $T_{\rm mem}=1000$\,s. The sweeps allocate 8192 modes per label, use a nominal source rate $5\times10^5$\,s$^{-1}$, and record counts in 2\,s bins. Optical defaults include 810\,nm wavelength, a 0.15\,m transmitter, a 0.50\,m receiver, and 12\,$\mu$rad divergence. Table~\ref{tab:campaign-parameters} records the numerical configuration fields.

Each geometry has four coherence settings at fixed Rytov variance $\sigma_R^2=0.08$ and four turbulence settings at the default $T_2=100$\,s. Four runs per setting give 32 settings and 128 runs. Table~\ref{tab:t2-yields} reports the mean integrated count and sample standard deviation between runs. Figures~\ref{fig:full-fidelity} and \ref{fig:full-rate} show the complete conditional-fidelity and delivery-rate sweeps for cases A--D, including the co-located limiting cases.

\begin{table*}[!tp]
\centering\small
\caption{S1 integrated delivered pairs across the coherence sweep. Entries are mean $\pm$ sample standard deviation over four runs, not confidence intervals. A and B are co-located limits; C and D have 500\,km station separation. Observation durations are 300, 440, 340, and 440\,s, respectively. All cases use $\sigma_R^2=0.08$.}
\label{tab:t2-yields}
\begin{tabular}{lrrrr}\toprule Case & $T_2=0.3$\,s & $T_2=3$\,s & $T_2=30$\,s & $T_2=300$\,s\\ \midrule
A & $446.2\,\pm\,383.7$ & $15,423.5\,\pm\,4,614.9$ & $27,201.5\,\pm\,242.6$ & $27,570.8\,\pm\,204.2$ \\
B & $362.5\,\pm\,335.3$ & $18,580.5\,\pm\,5,049.0$ & $28,838.8\,\pm\,621.7$ & $29,842.5\,\pm\,217.4$ \\
C & $1.2\,\pm\,1.0$ & $284.8\,\pm\,106.2$ & $4,750.2\,\pm\,267.6$ & $15,921.8\,\pm\,41.5$ \\
D & $4.2\,\pm\,3.3$ & $213.2\,\pm\,17.4$ & $5,264.2\,\pm\,137.2$ & $18,689.0\,\pm\,148.7$ \\
\bottomrule\end{tabular}

\end{table*}

\begin{table*}[!tp]
\centering\footnotesize
\caption{All 32 S1 settings. $\bar N$ and $s_N$ are the mean and sample standard deviation of integrated deliveries over four runs. $R_{\rm peak}$ is the maximum ensemble mean rate; $t_R$ is its bin-end time relative to the preset origin. $F_{\rm peak}$ is the maximum median of supported run-bin mean fidelities; $\bar F_{\rm pair}$ is the success-count-weighted fidelity over all bins and repeats. These are reanalyses of the recorded bin tables.}
\label{tab:all-settings}
\begin{tabular}{llrrrrrr}\toprule Case & Setting & $\bar N$ & $s_N$ & $R_{\rm peak}$ (s$^{-1}$) & $t_R$ (s) & $F_{\rm peak}$ & $\bar F_{\rm pair}$\\ \midrule
A & $T_2=0.3$ s & 446.25 & 383.74 & 4.750 & -36 & 0.4053 & 0.3665 \\
A & $T_2=3$ s & 15,423.50 & 4,614.93 & 108.000 & +16 & 0.8246 & 0.7538 \\
A & $T_2=30$ s & 27,201.50 & 242.63 & 182.875 & -4 & 0.8889 & 0.8690 \\
A & $T_2=300$ s & 27,570.75 & 204.23 & 179.375 & +4 & 0.8915 & 0.8741 \\
A & $\sigma_R^2=0.02$ & 27,673.50 & 54.76 & 181.875 & +2 & 0.8935 & 0.8761 \\
A & $\sigma_R^2=0.08$ & 27,583.50 & 135.74 & 177.750 & +8 & 0.8911 & 0.8729 \\
A & $\sigma_R^2=0.2$ & 27,212.25 & 173.48 & 178.000 & +8 & 0.8855 & 0.8671 \\
A & $\sigma_R^2=0.5$ & 26,047.75 & 73.68 & 174.000 & -4 & 0.8740 & 0.8535 \\
B & $T_2=0.3$ s & 362.50 & 335.31 & 6.000 & +0 & 0.4043 & 0.3603 \\
B & $T_2=3$ s & 18,580.50 & 5,048.96 & 121.375 & +6 & 0.8182 & 0.7697 \\
B & $T_2=30$ s & 28,838.75 & 621.74 & 178.625 & -10 & 0.8878 & 0.8611 \\
B & $T_2=300$ s & 29,842.50 & 217.43 & 183.625 & +6 & 0.8915 & 0.8682 \\
B & $\sigma_R^2=0.02$ & 29,653.50 & 242.08 & 181.875 & +10 & 0.8932 & 0.8694 \\
B & $\sigma_R^2=0.08$ & 29,574.50 & 221.04 & 189.875 & +4 & 0.8901 & 0.8666 \\
B & $\sigma_R^2=0.2$ & 29,217.50 & 169.95 & 177.000 & +4 & 0.8858 & 0.8605 \\
B & $\sigma_R^2=0.5$ & 28,031.00 & 182.33 & 175.375 & -2 & 0.8738 & 0.8468 \\
C & $T_2=0.3$ s & 1.25 & 0.96 & 0.500 & -168 & 0.2500 & 0.2735 \\
C & $T_2=3$ s & 284.75 & 106.19 & 18.375 & -166 & 0.7397 & 0.6237 \\
C & $T_2=30$ s & 4,750.25 & 267.65 & 29.500 & -90 & 0.7786 & 0.6428 \\
C & $T_2=300$ s & 15,921.75 & 41.55 & 83.250 & +0 & 0.8526 & 0.8327 \\
C & $\sigma_R^2=0.02$ & 12,474.75 & 180.66 & 62.875 & -30 & 0.8158 & 0.7708 \\
C & $\sigma_R^2=0.08$ & 12,255.75 & 176.35 & 62.750 & -30 & 0.8115 & 0.7676 \\
C & $\sigma_R^2=0.2$ & 12,020.00 & 121.91 & 60.125 & -34 & 0.8045 & 0.7606 \\
C & $\sigma_R^2=0.5$ & 11,382.00 & 80.03 & 60.750 & -24 & 0.7897 & 0.7450 \\
D & $T_2=0.3$ s & 4.25 & 3.30 & 0.750 & +156 & 0.3819 & 0.3157 \\
D & $T_2=3$ s & 213.25 & 17.44 & 28.125 & +160 & 0.7412 & 0.6310 \\
D & $T_2=30$ s & 5,264.25 & 137.15 & 43.625 & +154 & 0.8130 & 0.6298 \\
D & $T_2=300$ s & 18,689.00 & 148.65 & 110.625 & +44 & 0.8651 & 0.8343 \\
D & $\sigma_R^2=0.02$ & 14,342.25 & 151.59 & 93.375 & +66 & 0.8243 & 0.7737 \\
D & $\sigma_R^2=0.08$ & 14,142.25 & 128.11 & 93.625 & +52 & 0.8182 & 0.7699 \\
D & $\sigma_R^2=0.2$ & 13,798.50 & 101.16 & 91.625 & +52 & 0.8102 & 0.7630 \\
D & $\sigma_R^2=0.5$ & 13,075.25 & 67.70 & 87.250 & +58 & 0.7886 & 0.7462 \\
\bottomrule\end{tabular}

\end{table*}

\begin{figure*}[!tp]
\centering
\includegraphics[width=.91\textwidth]{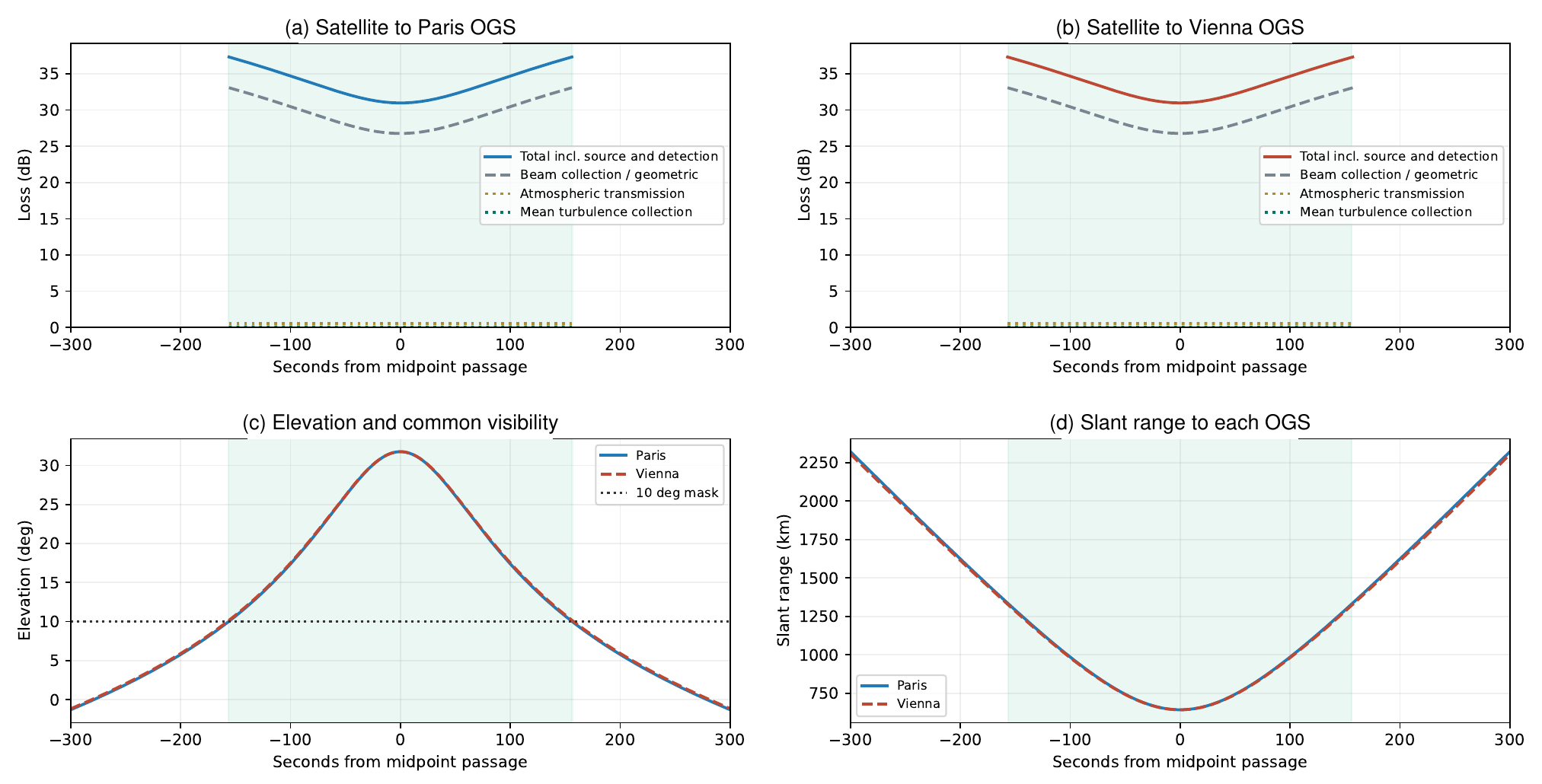}\par\vspace{3pt}
\includegraphics[width=.47\textwidth]{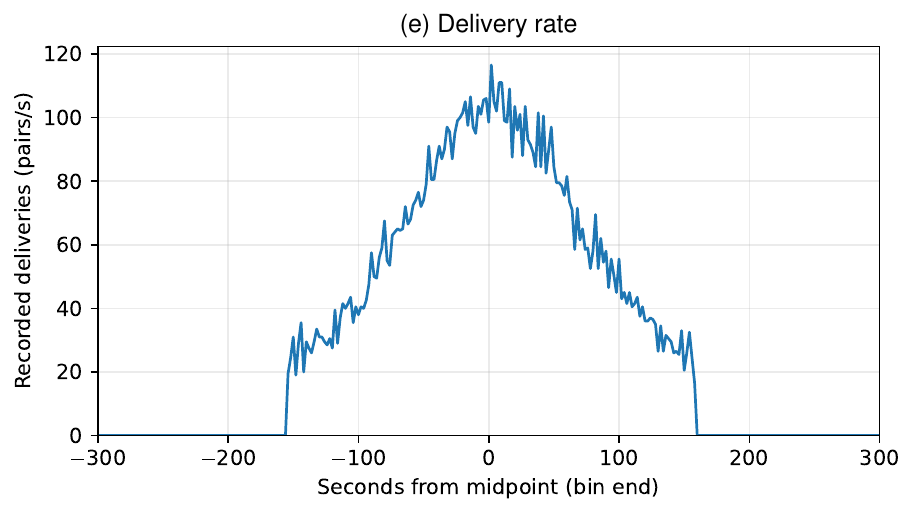}\hfill
\includegraphics[width=.47\textwidth]{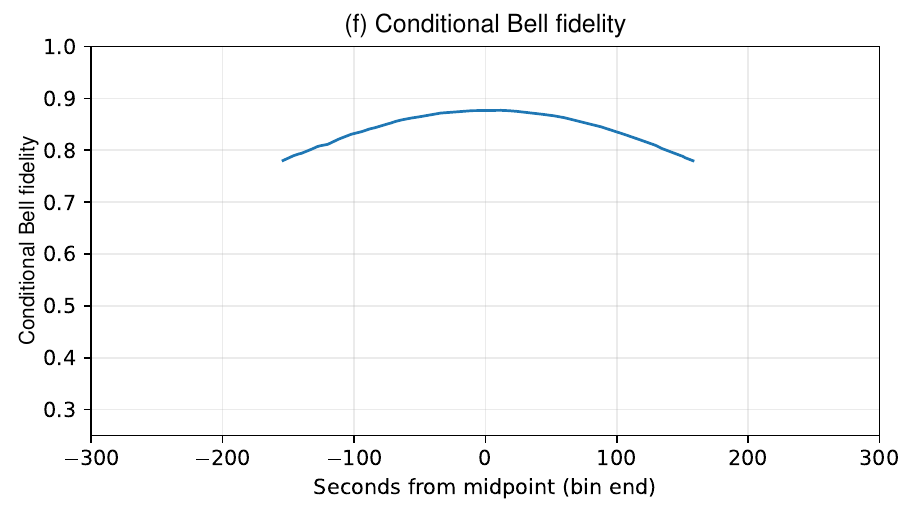}
\caption{\textbf{Dated S1 optical conditions and delivery dynamics.} (a,b) Paris and Vienna downlink budgets. (c,d) Common visibility and slant range from the geometry used for the run. (e,f) Delivery rate and conditional mean fidelity from the 2\,s counting bins. Time zero is the assumed pass midpoint. Rate and fidelity use bin-end times; attached optical geometry is sampled at bin midpoints. Fidelity gaps contain no deliveries. Bank occupancy is summarized in Section~\ref{sec:dated-s1}; the component budget and illumination context are given in Section~\ref{si:dated-context}.}
\label{fig:dated-metrics}\label{fig:dated-budgets}
\end{figure*}

Coherence requirements depend strongly on loading geometry. In case A, increasing $T_2$ from 30 to 300\,s increases mean delivery by only 1.36\%. In separated cases C and D, the corresponding factors are 3.35 and 3.55. Thus a coherence value that is already near a delivery plateau in a co-located test remains limiting when different link histories must be bridged. The large run-to-run spread at short coherence also shows why one favorable realization is an inadequate basis for a provisioning threshold.

Turbulence produces a smaller change within the particular configured range. For case A, integrated mean delivery decreases from 27\,673.5 to 26\,047.8 as $\sigma_R^2$ increases from 0.02 to 0.5; in case D it decreases from 14\,342.3 to 13\,075.3. These within-campaign sensitivities do not establish a universal ordering of turbulence and memory impairments. The complete numerical grid, including fidelity aggregation and rate maxima, is retained in Table~\ref{tab:all-settings}.

The onboard case is a capability study. The synchronization concern in Ref.~\cite{parny2023} concerns independent incoming photons without suitable space-qualified memories. Controlled recall from local storage changes the synchronization problem, but introduces memory, cooling, vacuum, optical-control, and qualification requirements. Space-borne memory proposals motivate assessing those requirements explicitly \cite{gundogan2021}. The cold-atom numerical settings are not a flight-qualified payload specification; selecting a rare-earth crystal would instead require its own cryogenic and transition-matching assumptions.

\subsubsection{Complete single-repeater parameter grid}
\label{sec:extended-results}
The full parameter grid consists of cases A--D, both control variables, four settings per variable, and four repetitions per setting. Table~\ref{tab:all-settings} places integrated yield beside maximum ensemble rate, maximum plotted median-bin fidelity, and success-count-weighted fidelity. The maximum rate and fidelity need not occur in the same time bin. Figures~\ref{fig:full-fidelity} and \ref{fig:full-rate} preserve every panel of the original grids.

Case windows are 300, 440, 340, and 440\,s, respectively. Differences between cases consequently include both loading geometry and exposure duration. The two co-located cases also use distinct seeds and windows and are not a paired test of crossing angle. The table reports between-run sample dispersion, not confidence intervals estimated from adjacent bins. Neither the median-bin fidelity nor its count-weighted counterpart reconstructs a new individual-pair threshold count.

\subsubsection{A dated Paris--Vienna pass: from optical geometry to delivered pairs}
\label{sec:dated-s1}

A companion S1 run uses Paris and Vienna city-center proxies, a circular 360\,km design orbit, and an assumed midpoint of 25 September 2026 at 17:00 UTC. A $10^\circ$ elevation mask gives approximately 312.7\,s of common visibility during a 600\,s observation. Both stations reach approximately $31.75^\circ$ elevation. This run uses 256 modes per label and $T_2=100$\,s, separating finite-bank behavior from the high-capacity controlled sweep.

Figure~\ref{fig:dated-metrics} combines the two downlink budgets with the delivery records generated for the same geometry. The per-arm mean loss near the midpoint is approximately 30.98\,dB, including source, collection, atmosphere, configured mean turbulence, static pointing, coupling, and detection. Memory recall and BSM are separate operations. Of the 300 two-second bins, 157 contain deliveries, totaling 19\,502 pairs: 32.503 pairs/s over the full observation, or 62.375 pairs/s when normalized by common visibility. The peak bin rate is 116.5 pairs/s, and success-count weighting of the bin mean fidelities gives 0.8550.

\begin{figure*}[!tp]
\centering
\includegraphics[width=.98\textwidth]{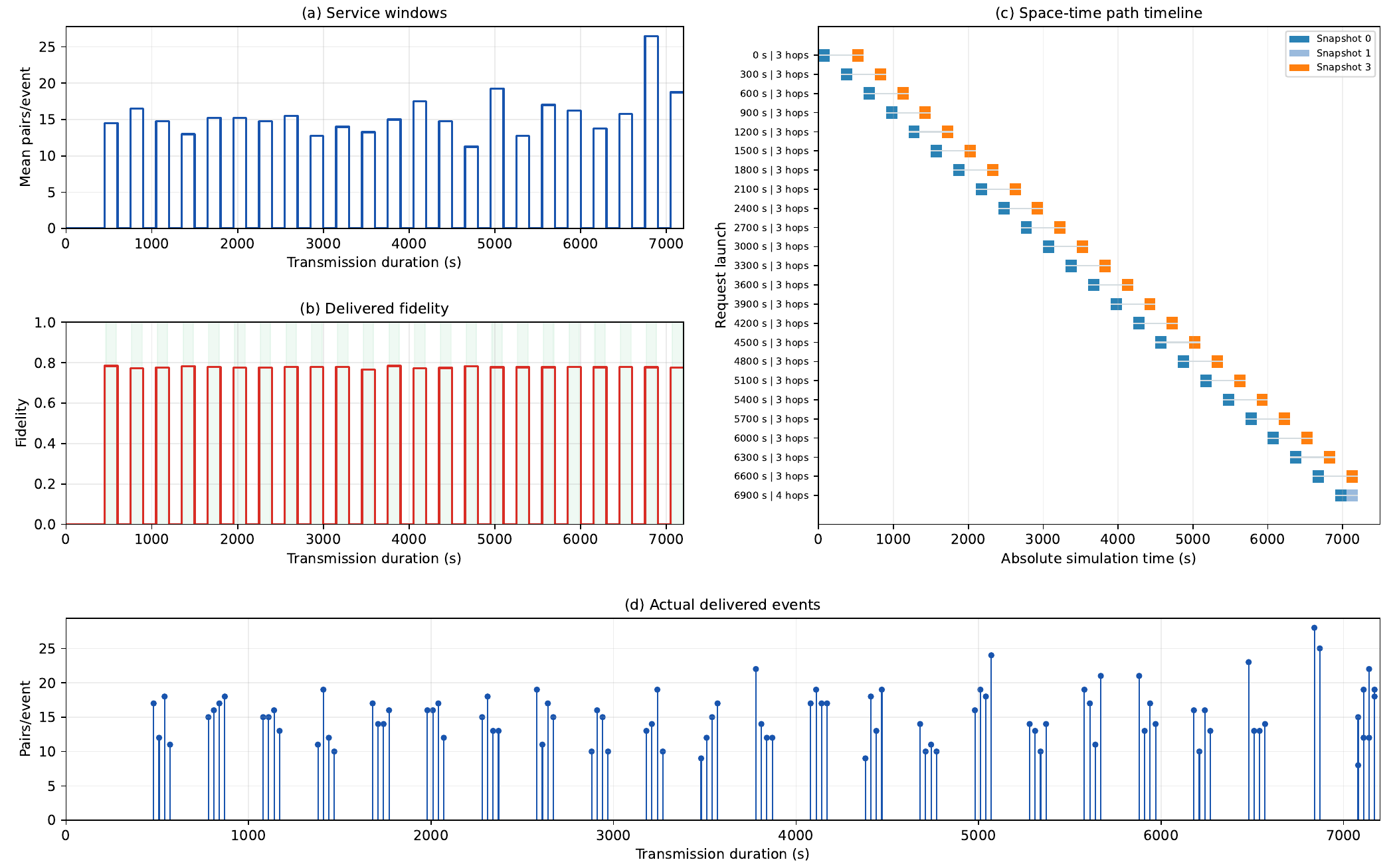}
\caption{\textbf{S2 space--time distribution between Luxembourg/Windhof and Singapore.} (a) Mean delivered pairs per event. (b) Quality during service windows; zero intervals mark no delivery. (c) Route timeline for successive requests. (d) Recorded delivery events. Snapshot duration, request spacing, planning horizon, and delivery cadence are 150, 300, 600, and 30\,s, respectively. Counts remain in their original per-event units.}
\label{fig:s2-campaign}
\end{figure*}

The peak Paris- and Vienna-side occupancies are 251 and 256 records, respectively, and the maximum combined occupancy is 321. The Vienna-side bank reaches its 256-mode capacity. At the end, the Paris-side bank is empty and the Vienna-side bank still holds 256 records. Those residuals are excluded from delivered counts. Capacity and expiry are separate controls: retained records need not be retrievable, and coherence decay does not itself release a slot. The successful-wait statistic averages 0.529\,s after count weighting.

This case shows why bank-level monitoring complements an optical link budget. Nearly balanced pass geometry does not guarantee balanced inventory throughout generation and consumption. Admission, consumption, and clearing decisions must operate at each bank, rather than only on total memory capacity.

\subsection{S2: Time-windowed service through a LEO constellation}
\label{sec:s2}
Scenario S2 connects Luxembourg/Windhof to Singapore through time-varying satellite paths. The timing configuration uses 150\,s snapshots, requests every 300\,s, a 600\,s planning horizon, and 30\,s delivery cadence. Figure~\ref{fig:s2-campaign} shows the resulting service windows, fidelity, route chronology, and individual counted events.

The route timeline spans launches from 0 to 6900\,s, mostly with three-hop routes and a final four-hop route. Quality remains near 0.78 during the plotted delivery windows. Between them, the original quality trace uses zero to indicate no delivery; it does not describe a zero-fidelity quantum state. Both count panels retain pairs-per-event units. A continuous-time throughput must instead sum these counts over a stated observation interval.

\begin{figure*}[!tp]
\centering
\begin{minipage}{.94\textwidth}
\begin{subfigure}[t]{\linewidth}
\centering
\includegraphics[width=\linewidth]{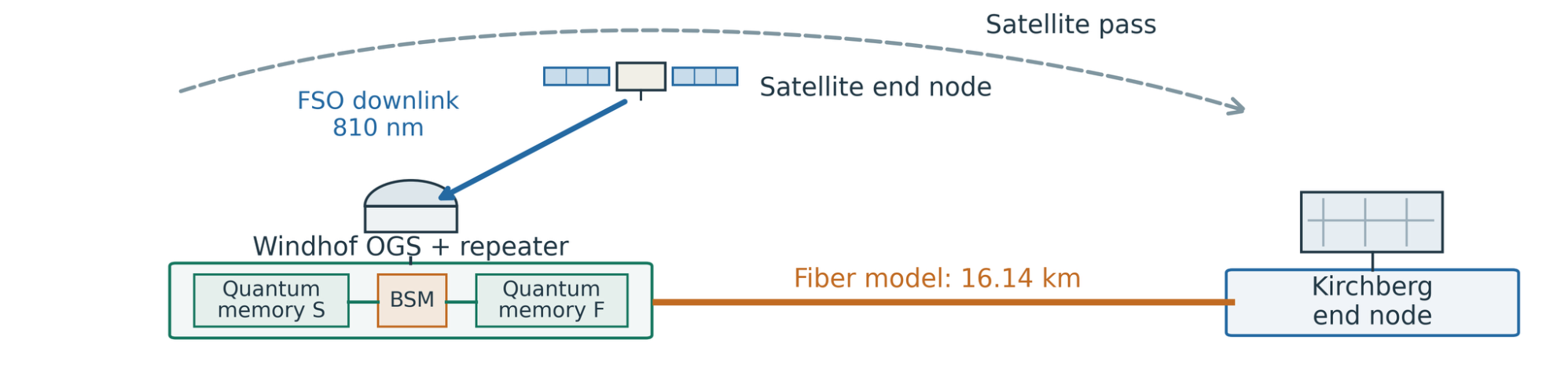}
\caption{Satellite-to-ground repeater architecture.}
\label{fig:s3-architecture-panel}
\end{subfigure}
\par\vspace{3mm}
\begin{subfigure}[c]{.56\linewidth}
\centering
\includegraphics[width=\linewidth]{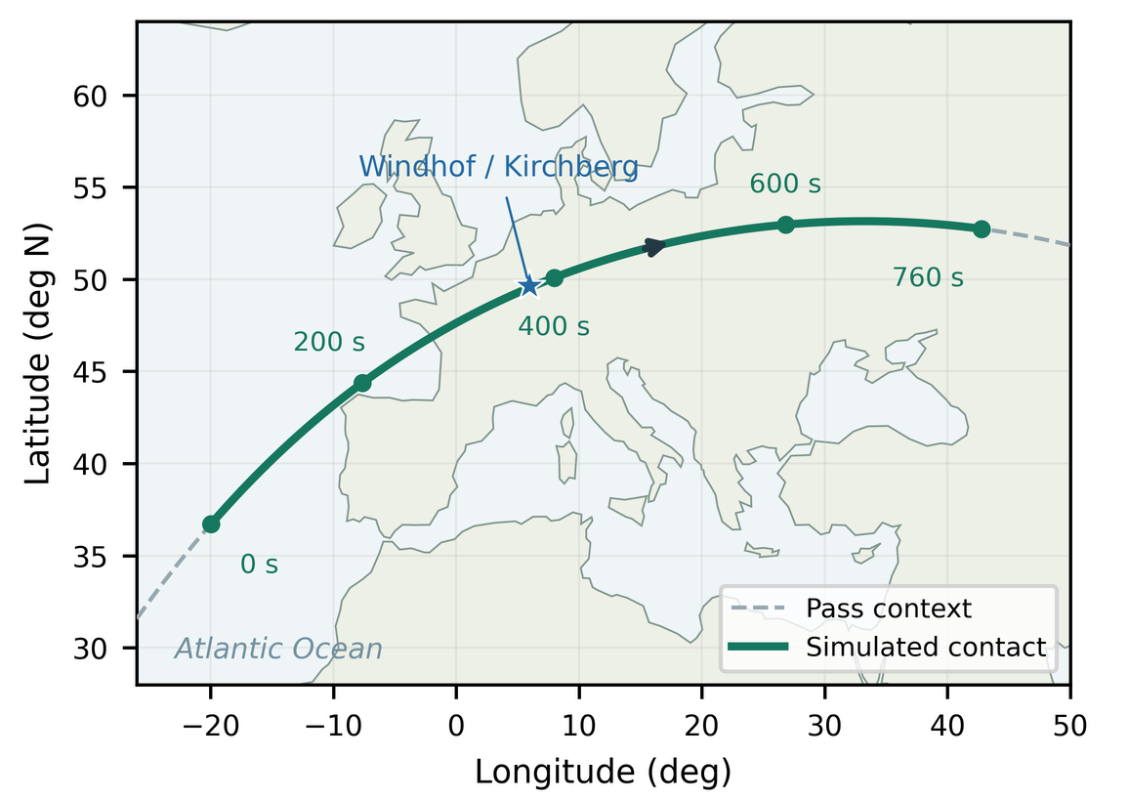}
\caption{Satellite ground track.}
\label{fig:s3-ground-track-panel}
\end{subfigure}\hfill
\begin{subfigure}[c]{.412\linewidth}
\centering
\includegraphics[width=\linewidth]{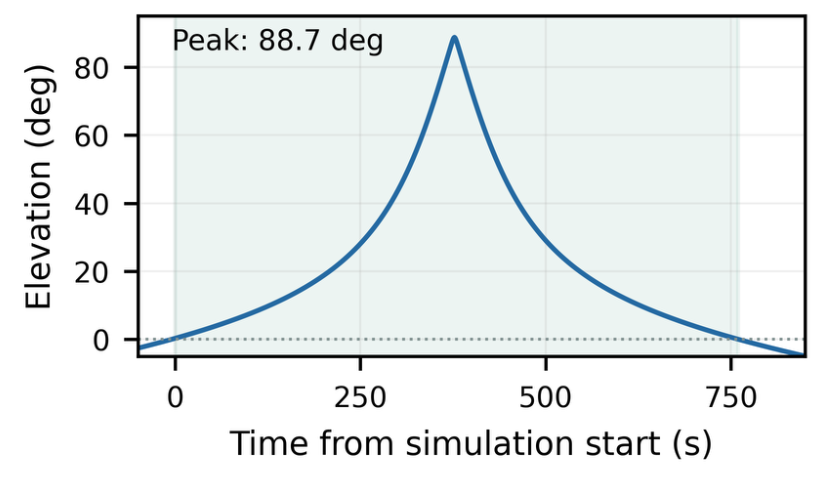}
\caption{Windhof contact geometry.}
\label{fig:s3-contact-panel}
\end{subfigure}
\end{minipage}
\caption{\textbf{S3 architecture and contact geometry.} (a) An 810\,nm satellite downlink and a 16.14\,km modeled fiber link meet at the Windhof repeater, which has separate memories and a Bell-state measurement (BSM). (b) Satellite ground track over the simulated contact. (c) Windhof elevation, reaching $88.7^\circ$. The track and times are scenario coordinates, rather than an operational acquisition schedule.}
\label{fig:s3-layout}
\end{figure*}

Connectivity becomes a service through both the contact plan and execution schedule. The four views expose this transformation more clearly than a path-availability value alone. A comparison of routing policies should hold the planning horizon, optical resources, source budgets, and traffic fixed. In particular, changing the graph horizon together with $T_2$ would mix the benefit of a larger search space with that of longer physical storage. The present trace demonstrates temporal service organization; the common policy-comparison contract is specified in Section~\ref{app:stbd}.

\subsection{S3: Ground-based swapping between a satellite downlink and fiber}
\label{sec:s3}
Scenario S3 places the repeater at Windhof, between a satellite endpoint and a terrestrial endpoint at Kirchberg. Figure~\ref{fig:s3-layout} defines the topology and its pass geometry. An 810\,nm downlink loads the satellite-facing memory, while a 16.14\,km fiber model connects the second elementary segment. A local Bell-state measurement joins the two resources. The numerical geometry has a near-zenith passage with peak elevation $88.7^\circ$ and an approximately 760\,s contact. Times are measured from simulation start.

This is a heterogeneous supply problem: the satellite segment is intermittent and strongly time dependent, whereas the fiber segment remains available with approximately constant attenuation. The two memory banks allow their elementary resources to become ready at different times before local swapping. Locating the repeater at the OGS also places receiving optics, storage controls, and spectral matching equipment on the ground. The fiber length is a simulation input, not a surveyed infrastructure length.

\begin{figure*}[!t]
\centering
\includegraphics[width=.98\textwidth]{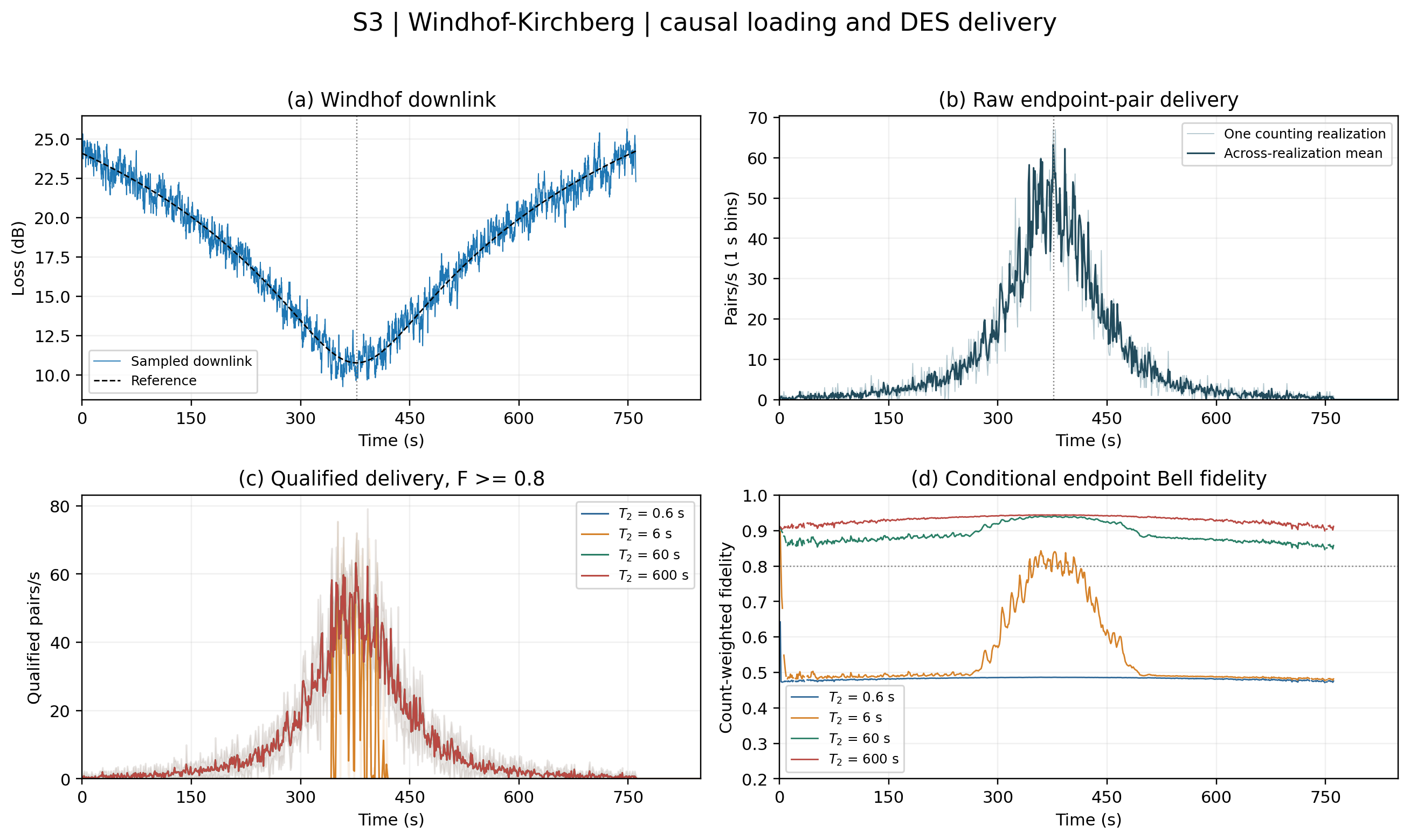}
\caption{\textbf{S3 causal loading and endpoint delivery, Part I.} (a) Sampled and reference Windhof downlink loss. (b) Raw endpoint-pair delivery in 1\,s bins, showing one counting realization and the across-realization mean. (c) Delivery satisfying $F\geq0.8$ for $T_2=0.6,6,60,$ and $600$\,s. (d) Count-weighted conditional endpoint Bell fidelity; the dotted line marks the qualification threshold. (e) Mean actual storage ages at BSM readout for the satellite-link and fiber-link OGS memories, on a logarithmic age axis. (f) Reserved and stored OGS modes in each bank. Part~II continues with cumulative qualified counts, age distributions, and resource/policy controls.}
\label{fig:s3-campaign}
\end{figure*}

Figure~\ref{fig:s3-campaign}  separates raw endpoint-pair delivery from delivery qualified by $F\geq0.8$. The downlink loss falls from approximately 24\,dB to 11\,dB near closest approach. The raw delivery envelope follows this changing link opportunity, with its maximum near 380\,s and an across-realization mean of roughly 50--60 pairs/s near the peak. The 1\,s counting realization resolves fluctuations around that envelope rather than a contact-wide constant rate.

The coherence sweep uses $T_2=0.6,6,60,$ and $600$\,s. At 0.6\,s, the conditional endpoint fidelity remains close to 0.49 and the displayed qualified yield is essentially zero. For 6\,s, fidelity exceeds 0.8 only around the middle of the pass, producing a narrow qualified-delivery window. The 60 and 600\,s cases stay above the threshold throughout the plotted contact, although their conditional fidelities differ away from closest approach. Their cumulative qualified counts nearly overlap at approximately $7.9\times10^3$ pairs, compared with approximately $2.0\times10^3$ for 6\,s. These are rounded readings of the plotted outputs.

\begin{figure*}[!tp]
\centering
\includegraphics[width=.48\textwidth]{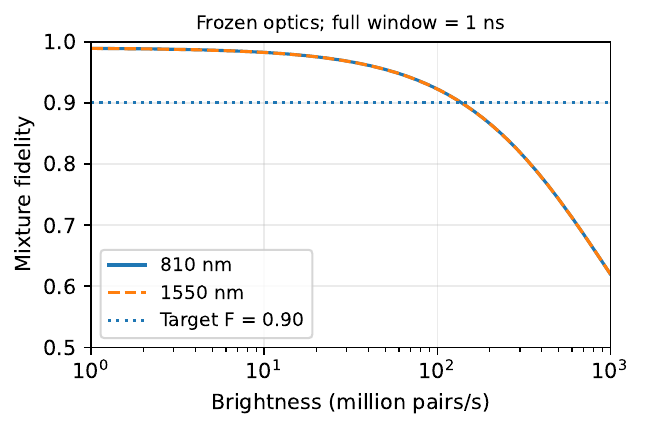}\hfill
\includegraphics[width=.48\textwidth]{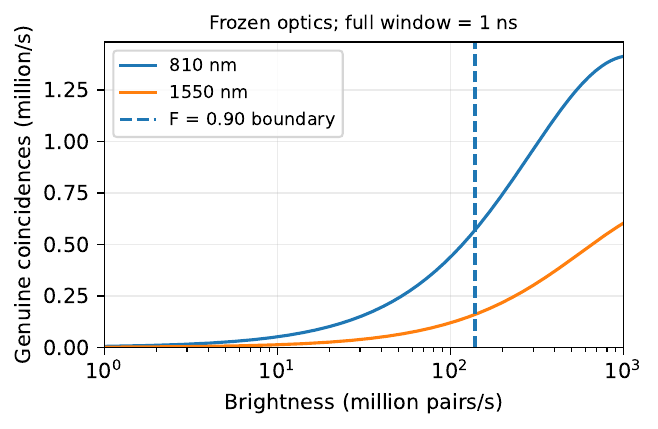}
\caption{\textbf{Direct Matera--Helmos operating-point trade-off.} Left: accepted-mixture fidelity against source brightness for a full 1\,ns relative-time window. Right: genuine-coincidence contribution with four detectors per station and a 50\,ns nonparalyzable dead-time approximation. Optical probabilities are frozen at the joint-rate peak. Equal detection efficiency and jitter are used at both wavelengths. The displayed rates are conditional receiver calculations, not stored-pair network delivery rates.}
\label{fig:direct-operation}\label{fig:brightness-fidelity}\label{fig:brightness-rate}
\end{figure*}

\subsection{Matera--Helmos: Direct distribution and a fidelity-constrained optical operating point}
\label{sec:quasar}

The Matera--Helmos case sends the two photons of one onboard entangled-pair source through separate 0.25\,m transmitters to 1.5\,m and 2.3\,m ground receivers. It provides the memory-free resource-production reference. A circular two-body orbit with semimajor axis 6939.13\,km and inclination $97.64^\circ$ gives about 293\,s of shared visibility above $20^\circ$ in the selected pass. Table~\ref{tab:quasar-inputs} states the epoch, coordinate conventions, obscurations, and receiver assumptions.

The optical calculation uses an equivalent Gaussian beam, $M^2=1$, 1\,$\mu$rad per-axis pointing uncertainty, and equal detection efficiency 0.60 at 810 and 1550\,nm. Each detector has 50\,ps RMS timing jitter. This compares wavelength-dependent propagation under equal receiver assumptions, rather than particular detector technologies. At the joint-rate peak the two-arm loss including detection is approximately 22.78\,dB at 810\,nm and 28.80\,dB at 1550\,nm. These optimistic optical assumptions are stated separately from the 12\,$\mu$rad divergence used in S1.

The conditional peak rates at a literal $10^9$ emitted pairs/s are $5.27\times10^6$ and $1.32\times10^6$ genuine coincidences/s for ideal linear detectors. That high-brightness reference is not the selected service operating point: different-pair accidentals at a 1\,ns coincidence window reduce the modeled mixture fidelity to approximately 0.62. Figure~\ref{fig:direct-operation} shows how brightness, finite detector live time, and the fidelity target jointly restrict operation.

At target fidelity 0.90 and fixed 1\,ns window, the upper allowed brightness is approximately 138.4 million pairs/s. With the four-detector, 50\,ns dead-time approximation, the genuine-coincidence contributions are approximately 570\,000/s at 810\,nm and 160\,000/s at 1550\,nm. At target 0.95, they fall to approximately 271\,000/s and 71\,000/s. Joint brightness/window optimization gives conditional 0.90-target operating points near 490 million pairs/s and 266\,ps at 810\,nm, and 534 million pairs/s and 234\,ps at 1550\,nm. The corresponding genuine contributions are approximately $1.14\times10^6$/s and $4.02\times10^5$/s. Tables~\ref{tab:target-rates} and \ref{tab:joint-optimum} retain the numerical values and assumptions.

The receiver field stop must affect signal and background consistently. In the receiver sensitivity in Methods, the same angular radius defines the encircled signal fraction and accepted sky solid angle. At 810\,nm and $20^\circ$ elevation, a 10\,$\mu$rad half-angle produces a combined field-stop penalty of approximately 0.50\,dB for the prescribed sea-level-referenced turbulence profile, but 5.17\,dB when its surface exponential is rebased to each station. This is a conditional, phase-only receiver sensitivity, not a new turbulent pass simulation. It shows why a field of view cannot be selected solely from a background count estimate; its signal acceptance must also be evaluated. Section~\ref{sec:fieldstop} gives the calculation.

The comparison with experiment is likewise conditional. Yin et al.\ measured dual-downlink attenuation of 64--82\,dB \cite{yin2017}. Their reported fine-tracking uncertainty was smaller than the pointing uncertainty assumed here, so improved pointing cannot explain the low reference loss. Divergence, aperture illumination, receiver throughput, and accepted optical modes must be compared at the same budget boundaries. For fixed small receivers in the far field, two-arm collection scales approximately as the inverse fourth power of divergence; changing 2 to 10\,$\mu$rad alone costs about 28\,dB under that approximation. The direct case therefore illustrates a coupled operating-point calculation, rather than a mission-performance advantage.


\section{Discussion}
\label{sec:discussion}

The scenarios reveal different design constraints rather than one universal memory requirement. In S3, the 60 and 600\,s coherence settings yield nearly the same number of pairs satisfying $F\geq0.8$, despite different conditional fidelities away from closest approach. The 6\,s case meets the target only near the pass peak, while the subsecond case produces essentially no qualified deliveries. By contrast, the separated-terminal S1 geometries continue to benefit strongly between 30 and 300\,s. A memory selected from a balanced, co-located test can therefore be underprovisioned for asynchronous link histories. Beyond a target-dependent count plateau, retrieval efficiency, loading balance, and BSM service time deserve comparison with further coherence improvements; S3's 50\,ms control explicitly exposes a measurement-service bottleneck. These results favor joint dimensioning of optical links, memory banks, and operating policies rather than selecting each component independently.

The dated S1 pass connects optical budgets to counted deliveries and exposes a second constraint: per-bank inventory can remain imbalanced even when the pass geometry is nearly symmetric. In S3, millisecond satellite-link storage contrasts with second-scale fiber-link storage, while the fiber-facing OGS bank remains heavily occupied. The relevant control variables are not only total capacity and generation rate, but also which bank accepts, consumes, or clears each excitation and the quality of the resources still held. At constellation scale, the timing of a viable route and the cadence of its execution jointly determine when the endpoints receive a service. Evaluating resource production and consumption on a common timeline is therefore necessary to distinguish geometrical connectivity from usable network availability.

The architecture supports this evaluation through separate physical, execution, and policy interfaces. Loss determines which resources survive, storage histories determine their quality, swaps consume their operands, and classical messages determine when an outcome is known. A richer physical model can replace a compact one at this interface without changing the definition of the delivered service. The reference instruments and detailed derivations in Methods specify this model boundary and support progressive calibration against experimental observations.

\subsection*{From simulation to testbed orchestration}

The existing APIs provide a concrete integration point for orchestration protocols in quantum-network testbeds. QNETSIM exposes an HTTP/JSON event--action interface through which external controllers receive resource and execution reports and issue scheduling commands. The implemented messages include \texttt{ENT\_REQUEST}, \texttt{REPORT\_STORAGE}, and \texttt{SWAP\_RESULT}, with corresponding actions for generation, storage, swapping, delivery, and request termination. Request and resource identifiers associate commands with their outcomes, while timestamps preserve the distinction between physical completion and controller knowledge.

This separation allows orchestration logic to be developed against simulated nodes and subsequently connected to physical-node adapters through the same logical service contract. The adapter translates device-specific observations and commands without placing hardware details inside the routing policy. A staged integration can retain simulated satellite links while incorporating a real controller or ground-node management stack, enabling repeatable tests of admission, scheduling, delayed acknowledgements, and failure recovery. Hardware-in-the-loop operation additionally requires explicit mapping between simulation time and laboratory clocks, authenticated interfaces, and device-specific timing validation. Fast optical synchronization, pointing, and wavefront correction remain local functions; the orchestration interface coordinates service-level operations rather than replacing those control loops.

\subsection*{KMS integration for satellite-assisted QKD networks}

An important extension is coupling QNETSIM to a key management system (KMS) for satellite-assisted quantum key distribution (QKD). This direction is relevant to EuroQCI's integration of terrestrial and space segments and to the ground-service interfaces required by initiatives such as EAGLE-1~\cite{euroqci2025,eagle1}. The integration objective is to connect mission-dependent key production to key availability at authorized applications, while preserving the separation between quantum-network simulation, the satellite provider's service, and cryptographic key management.

The proposed coupling uses a provider-service adapter between QNETSIM and the KMS. QNETSIM supplies contact opportunities, link conditions, and time-resolved key-production estimates from an explicitly selected QKD and post-processing model. In emulation, the adapter represents processing delays, the availability of matching key blocks at the intended endpoints, and the provider's key-identifier semantics. The KMS then manages key pooling, authorization, reservation, retrieval, and consumption. In a live deployment, the satellite or terrestrial QKD service remains authoritative for actual final-key availability; QNETSIM supplies predictions and receives operational metadata for model calibration. Coincidence counts or entanglement fidelities are not themselves secret-key blocks: secret-key accounting requires protocol-specific parameter estimation, reconciliation, privacy amplification, and the associated security assumptions~\cite{ma2007}.

Standardized interfaces provide complementary integration targets. ETSI GS QKD 014 specifies application-facing key delivery, ETSI GS QKD 015 specifies the interface between an SDN-enabled QKD node and its controller, and ETSI GS QKD 020 supports horizontal key transfer between interoperating key-management entities within a common trusted node~\cite{etsi_qkd014,etsi_qkd015,etsi_qkd020}. Device and service adapters can map QNETSIM's orchestration abstractions to these interfaces while respecting their different scopes. The orchestrator requires availability, demand, allocation status, and completion metadata, rather than access to secret key values. Test key material must remain isolated from operational cryptographic services, and provider-specific internal key-relay mechanisms can remain behind their service interfaces.

This coupling would extend mission planning from estimating optical yield to evaluating sustained key-service availability. Predicted satellite contacts can be compared with application demand and the inventories of the participating KMSs to assess reserve requirements, depletion risk between passes, request latency, and prioritization under limited supply. Delayed post-processing, weather-related contact loss, authentication costs, and trusted-relay consumption can then be included in the same operational study. Unlike stored entanglement, classical key inventories do not undergo quantum decoherence; their availability is governed by production, allocation, consumption, and security policy. These distinctions make the integration useful both for EuroQCI-oriented testbeds and for other heterogeneous satellite--terrestrial QKD networks.

\subsection*{Scope and limitations}
\label{sec:limitations}

The numerical campaigns characterize their declared configurations. The S1 runs use compact state evolution; the exact Bell-vector and general instrument formulations are physical reference models, not retrospective reinterpretations of those curves. The S3 presentation distinguishes raw and fidelity-qualified delivery, storage ages, memory occupancy, and BSM resource/policy controls. It includes one counting realization and an across-realization mean, but no confidence interval or replication count is specified for these panels. Routing-policy gains, native-engine versus independent-simulator agreement, and local versus external-controller equivalence require matched experiments beyond these outputs. The direct-link study covers one pass with assumed receiver parameters; its frozen-peak optimization and separate field-stop sensitivity are not a seasonal, detector-specific, or finite-key mission prediction. Space-qualified hardware, spectral conversion, and heralded loading require platform-specific calibration. The event--action APIs establish the software integration points; operation with physical testbed devices and a production KMS, including clock synchronization, interface conformance, and security controls, requires separate integration and validation.

QNETSIM thus connects the resource and networking stages of the satellite-quantum-internet roadmap~\cite{parny2023}, while providing an interface-based path from mission studies to testbed orchestration and key-service evaluation. Its memory-assisted scenarios show why temporal resource accounting matters, and the direct case illustrates how optical performance must be qualified by receiver statistics and state quality. Releasing configurations, counting conventions, interface definitions, and figure-generation data alongside the simulator makes these design decisions inspectable and supports their subsequent assessment in hybrid experimental networks.

\section{Methods}
\label{sec:methods}\label{page:short-methods}
The physical interfaces, derivations, numerical configurations, and verification procedures are collected here. The overview establishes the common calculation boundaries; the subsequent subsections provide the detailed memory, measurement, optical, and routing models and the records underlying the scenario results.
\subsection{Overview of physical interfaces and calculation conventions}
\label{methods:overview}

\subsubsection{Quantum memory operations and measurement models}
\label{sec:memories}\label{sec:bell-reference}
The physical reference begins with an optical encoding and an open matter--light model. Projection onto accepted output modes defines a completely positive, trace-nonincreasing memory operation
\begin{equation}
\mathcal J_s(\rho)=\sum_\mu B_\mu\rho B_\mu^\dagger,\qquad
Q_s=\sum_\mu B_\mu^\dagger B_\mu\leq\id.
\end{equation}
Its trace is the probability of successful recall. An orthogonal erasure outcome completes the operation into a trace-preserving channel. Balanced mode efficiencies and conditional pure dephasing give the reduced channel
\begin{align}
\mathcal M_\tau(\rho)&=\eta(\tau)\mathcal D_{\lambda_L(\tau)}(\rho)
+[1-\eta(\tau)]\tr(\rho)\proj{\emptyset},\label{eq:main-memory}\\
\mathcal D_{\lambda_L}(\rho)&=\frac{1+\lambda_L}{2}\rho+
\frac{1-\lambda_L}{2}Z\rho Z.
\end{align}
Here $\eta$ describes non-erasure probability and $\lambda_L$ describes surviving logical coherence. The Kraus operators are the appropriately embedded $\sqrt{\eta(1+\lambda_L)/2}\id$, $\sqrt{\eta(1-\lambda_L)/2}Z$, and $\sqrt{1-\eta}\ket{\emptyset}\bra{j}$ for $j=0,1$. After compensating a deterministic unitary, the conditional entanglement fidelity is $(1+\lambda_L)/2$, whereas the unconditional overlap is $\eta(1+\lambda_L)/2$.

Section~\ref{app:memories} develops separate reductions for cold atoms, rare-earth-ion-doped crystals, and trapped ions. Cold-atom storage starts from Maxwell--Bloch equations for the signal field, optical polarization, and spin wave, with optical depth and control pulses determining the write/read kernels \cite{fleischhauer,gorshkov_prl,gorshkov_free}. The ideal backward-CRIB expression $(1-e^{-d})^2$ belongs to its specified echo model \cite{crib_kraus,crib_sangouard}. Dynamical decoupling modifies the spin-noise filter and its coherence function \cite{dd_filter}. Trapped-ion interfaces retain finite-dimensional internal dynamics, optical coupling, and effective dissipative operators \cite{ion_memory,reiter}. A fitted $T_2$ law is distinguished from separately modeled population loss or retrieval leakage.

For monotone $\lambda_L(\tau)=e^{-\chi(\tau)}$, an already evolved state aged from $a$ to $b$ receives
\begin{equation}
\lambda_L(b\mid a)=e^{-\chi(b)+\chi(a)}.
\label{eq:main-aging}
\end{equation}
This preserves the physical age when events subdivide storage. Replacing it by $e^{-\chi(b-a)}$ resets a nonexponential history.

In the Bell-diagonal representation, let $p_g$ denote probabilities labeled by $g\in\mathbb Z_2^2$. Ideal swapping of independent elementary resources, followed by the known Pauli correction, gives
\begin{equation}
p_g^{\rm out}=\sum_h p_h^{(1)}p_{g\oplus h}^{(2)}.
\label{eq:main-bell-convolution}
\end{equation}
For Werner inputs this reduces to $F_{\rm out}=F_1F_2+(1-F_1)(1-F_2)/3$. Pure phase damping remains distinct from isotropic depolarization. Section~\ref{app:bsm} derives the click-pattern instrument, its conditioning, and the scope of compact alternatives.

For ideal indistinguishability and the stated two-of-four passive analyzer, a reduced genuine-herald probability is
\begin{equation}
p_b=\frac12\eta_{r,1}\eta_{r,2}\eta_{o,1}\eta_{o,2}\eta_{d,1}\eta_{d,2}A_t,
\label{eq:main-bsm-prob}
\end{equation}
where recall, post-recall optics, detection, and timing acceptance have separate boundaries. The factor $1/2$ reflects the restricted linear-optical analyzer \cite{bsm,bsm_limit}; any loss already sampled upstream is omitted here. General distinguishability and false-herald effects require the accepted measurement instrument rather than assigning one visibility parameter to both success and conditional fidelity.

\subsubsection{Optical channels, receiver coupling, and orbital geometry}
\label{sec:optics}
For a Gaussian beam of radius $w_L$ at a circular receiver of diameter $D_r$, geometrical collection is $1-\exp[-D_r^2/(2w_L^2)]$. Extinction, pointing, aperture collection, mode coupling, and detection define distinct accounting boundaries. Random pointing can be integrated into an effective spot size or sampled as an offset, but not counted twice. Annular receivers require integration over the clear annulus rather than a uniform-area correction to a centrally peaked beam \cite{pirandola_sat}.

The detailed atmospheric model uses the Hufnagel--Valley profile, Fried coherence length, and Rytov diagnostics with explicit elevation and altitude conventions \cite{andrews2005,fried1966}. The Gamma--Gamma description follows its normalized-irradiance formulation \cite{alhabash2001}. A unit-mean fading factor alone does not imply an additional mean attenuation. Finite-aperture propagation, wavefront correction, and accepted-mode coupling determine that loss; the field-based construction and reduced model choices are given in Section~\ref{app:channels} \cite{vasylyev,ao}.

For receiver angular acceptance $\theta$, the conditional signal fraction is $E_j(\theta)=\int_{|\boldsymbol\alpha|\leq\theta}P_j(\boldsymbol\alpha)\,\mathrm d^2\boldsymbol\alpha$, with a normalized receiving point-spread function $P_j$. The same angle sets the sky solid angle $2\pi(1-\cos\theta)$. Fiber intensity transmission is $10^{-[\alpha(\lambda_0)L+\ell_{\rm ins}]/10}$, with attenuation and insertion loss in dB. Its spectral and polarization transfer, conversion noise, and matching to the memory transition are additional interface properties.

Geometry supplies position, range, elevation, visibility, and propagation time. The dated cases state epoch, coordinate and time conventions, Earth model, propagator, and station positions. A circular design orbit and a controlled local-plane pass answer different questions; neither is relabeled as a reconstructed mission ephemeris \cite{vallado,skyfield}. Original memory-write times are independent of the snapshot bins used for routing.

\subsubsection{Space--time routing and causal resource execution}
\label{sec:distribution}
STBD represents changing connectivity by node--time vertices, communication opportunities, and memory-waiting edges \cite{stbd2024}. For declared fidelity and memory utilities,
\begin{align}
U_e&=(U_e^{\rm fid})^{\omega_f}(U_e^{\rm mem})^{\omega_m},\quad \omega_f+\omega_m=1,\\
U(P)&=\prod_{e\in P}U_e,\qquad c_e=-\ln U_e.
\end{align}
The additive cost ranks routes. A realizable plan must additionally reserve both endpoints of each stored pair, respect optical-terminal and source budgets, provide two local operands for every BSM, and propagate final quality. A communication edge creates a two-ended resource; it does not move an unknown qubit through the graph as if it were a classical packet. Heralded storage must preserve that excitation \cite{heralded_memory}.

Binary nesting joins adjacent ready segments. Segmented nesting completes snapshot-local segments early and carries surviving endpoints across later opportunities. An $\ell$-link route needs $\ell-1$ successful swaps, with failures consuming the measured operands. The full algorithm, resource-state updates, and computational boundaries appear in Section~\ref{app:stbd}. The usable-pair estimator is
\begin{equation}
R_u=\frac{1}{T_{\rm obs}}\sum_j\mathbf1\{j\text{ delivered},F_j\geq F_{\min},t_j\leq D_j\}.
\label{eq:main-service-rate}
\end{equation}
Per-link threshold checks do not replace this final boundary: four independent Werner links of fidelity 0.9 yield approximately 0.6731 after ideal swapping.

\subsubsection{Direct-link coincidence and live-time calculation}
\label{sec:direct-methods}
For CW Poisson pair brightness $R$, per-arm detection probabilities $\eta_M,\eta_H$, station noise rates $N_M,N_H$, and a full relative window $\tau$,
\begin{align}
g(\tau)&=\operatorname{erf}\!\left[\frac{\tau}{2\sqrt2\sigma_\Delta}\right],\\
C&=R\eta_M\eta_Hg(\tau),\\
A&=(R\eta_M+N_M)(R\eta_H+N_H)\tau,\\
F_{\rm mix}&=\frac{F_0C+A/4}{C+A}.
\label{eq:main-direct-mixture}
\end{align}
The accidental model assumes all-pairs association and a completely mixed effective accidental state \cite{szlachetka}. Sky counts follow entrance radiance times clear area, accepted solid angle, filter bandwidth, optical/detection efficiencies, and inverse photon energy \cite{bonato}. For target $1/4<F_\star<F_0$, the feasible brightness interval is obtained from $R^2-KR+b_Mb_H\leq0$, where $b_j=N_j/\eta_j$ and $K=[(F_0-F_\star)/(F_\star-1/4)]g/\tau-b_M-b_H$.

With four evenly loaded nonparalyzable detectors per station, the independent live fractions are
\begin{equation}
\ell_j=\left[1+(R\eta_j+N_j)t_d/4\right]^{-1},\qquad C_{\rm live}=C\ell_M\ell_H.
\end{equation}
The argument is detected singles before dead-time suppression. The optimization maximizes $C_{\rm live}$ subject to the mixture-fidelity constraint. Section~\ref{si:direct} specifies the numerical roots, timing range, equal-efficiency assumptions, and the separate field-stop sensitivity.




\subsection{Quantum memory models: field modes, open dynamics, and Kraus operators}
\label{app:memories}
\subsubsection*{Scope and conventions}
This subsection derives quantum-memory instruments for REID crystals, cold atomic ensembles, and trapped ions. Starting from the optical encoding and open matter--light dynamics, it defines accepted output modes, non-erasure probability, conditional state, and Kraus operators. Analytical retrieval and coherence laws are then identified as reduced models with stated assumptions, not unique microscopic solutions. Optical-depth conventions, independent-loss and coupled-recall alternatives, and the noise observable defining $T_2$ are kept explicit.

The optical carrier wavelength is denoted $\lambda_0$; the dimensionless logical coherence factor is $\lambda_L(\tau)$. The storage duration is $\tau$, not the orbital time. Optical depth is $d$; the number of optical interface modes is $M_{\rm int}$ and the number of simultaneously occupied memory slots is $M_{\rm cap}$. These are different quantities. A platform's native $T_2$ is the $1/e$ time of the specified coherence observable, not an automatic erasure lifetime or request timeout. The different native optical and memory parameters are highlighted in Table.5

\subsubsection*{Quantized modes and the encoded input}
For orthonormal transverse/polarization modes $u_\mu$, introduce frequency operators satisfying
\begin{equation}
[\hat a_\mu(\omega),\hat a_\nu^\dagger(\omega')]
=\delta_{\mu\nu}\delta(\omega-\omega').
\end{equation}
A normalized wavepacket and its one-photon state are
\begin{equation}
\hat a_\mu^\dagger[f]=\int \dd\omega\,f(\omega)\hat a_\mu^\dagger(\omega),
\qquad \int |f(\omega)|^2\dd\omega=1.
\label{eq:mode-packet}
\end{equation}
The positive-frequency field is an expansion in these modes,
\begin{equation}
\hat{\boldsymbol E}^{(+)}(\boldsymbol r,t)
=\sum_\mu \boldsymbol e_\mu\int\dd\omega\,
\mathcal E_\omega u_\mu(\boldsymbol r_\perp,\omega)
\hat a_\mu(\omega)e^{i[k(\omega)z-\omega t]}.
\end{equation}
The normalization factor $\mathcal E_\omega$ depends on the chosen continuum convention; its product with the transition dipole determines the coupling $g_\mu$. The logical input is encoded by the isometry
\begin{equation}
V_{\rm in}\ket j=\hat a_j^\dagger[f_j]\ket{\mathrm{vac}},
\quad j=0,1,
\quad \ket\psi=\alpha\ket0+\beta\ket1.
\end{equation}
Two polarization components, two time bins, or two rails require an interface that preserves their relative amplitude and phase. A single three-level transition, by itself, specifies only one optical coupling. An additional rail or a suitable multilevel implementation must be stated to obtain the full polarization-qubit memory. This is an encoding requirement, not extra multiplexing capacity.

\subsubsection*{Open-system evolution and the exact reduction}
Let $S$ include the accepted optical modes and the material degrees of freedom. Initially assume $\rho_{SB}=V_{\rm in}\rho V_{\rm in}^\dagger\otimes\rho_B$ and write
\begin{align}
H_{\rm tot}(t)&=H_S(t)+H_B+H_{SB},\\
H_B&=\sum_\alpha\hbar\omega_\alpha b_\alpha^\dagger b_\alpha,
\qquad H_{SB}=\sum_\alpha Q_\alpha\otimes B_\alpha.
\end{align}
The exact propagator is time ordered. For $\rho_B=\sum_\nu p_\nu\proj\nu$, a Kraus family for the reduced system evolution is
\begin{equation}
E_{\alpha\nu}(\tau)=\sqrt{p_\nu}\bra\alpha
\mathcal T e^{-i\int H_{\rm tot}(t)\dd t/\hbar}\ket\nu,
\quad \sum_{\alpha\nu}E_{\alpha\nu}^\dagger E_{\alpha\nu}=\id.
\label{eq:dilation-kraus}
\end{equation}
Preparation of the material ground state and the write/read intervals are included in this evolution. This equation is the formal field-plus-bath to Kraus connection.

When weak system--bath coupling, short bath correlations, and the secular approximation are appropriate, the reduced dynamics take the time-dependent master-equation form
\begin{align}
\dot\varrho&=-\frac{i}{\hbar}[H_S+H_{\rm LS},\varrho]
+\sum_\alpha\mathcal D[L_\alpha]\varrho,\\
\mathcal D[L]\varrho&=L\varrho L^\dagger
-\tfrac12\{L^\dagger L,\varrho\}.
\label{eq:gksl-app}
\end{align}
Classical colored detuning noise can instead be sampled in $H_S(t)$, followed by ensemble averaging. The same noise contribution must not also be included in a fitted Lindblad rate. A stretched-exponential coherence law is not evidence that the underlying bath is Markovian.

Let $V_{\rm out}$ embed the accepted logical output. Projection of the reduced evolution gives successful operators $B_r=V_{\rm out}^\dagger E_rV_{\rm in}$ and
\begin{equation}
\mathcal J_s(\rho)=\sum_r B_r\rho B_r^\dagger,
\qquad Q_s=\sum_r B_r^\dagger B_r\leq\id_2.
\label{eq:success-instrument-app}
\end{equation}
All rejected output sectors can be deliberately coarse-grained into an orthogonal flag $\ket\emptyset$. With $V$ embedding the two-dimensional accepted space into the three-dimensional flagged space, one complete Kraus family is
\begin{equation}
K_{s,r}=VB_r,\qquad
K_{f,j}=\ket\emptyset\bra j\sqrt{\id_2-Q_s},\quad j=0,1.
\label{eq:general-flagged}
\end{equation}
Indeed, the successful effects sum to $Q_s$ and the failure effects to $\id_2-Q_s$. The operational quantities are
\begin{equation}
p_s(\rho)=\tr(Q_s\rho),\qquad
\rho_{s}=\mathcal J_s(\rho)/p_s(\rho).
\end{equation}
Conditioning is undefined when $p_s=0$. The normalization is generally input dependent; the conditional transformation is a linear trace-preserving channel only in special cases, including $Q_s=\eta\id_2$.

\subsubsection*{From the field transfer kernel to a survival operator}
For a fixed classical realization $\xi$, a passive linearized, vacuum-noise memory can be written
\begin{equation}
\hat a^{\rm out}_\mu(t)=\sum_\nu\int\dd t'\,
\mathcal K_{\mu\nu}^{(\xi)}(t,t';\tau)\hat a^{\rm in}_\nu(t')
+\hat v_\mu(t).
\label{eq:memory-kernel-app}
\end{equation}
The Langevin/environment term $\hat v$ restores the output commutators and accounts for lost excitations. Projecting onto normalized incoming modes $f_\nu$ and accepted outgoing modes $g_\mu$ yields
\begin{equation}
A^{(\xi)}_{\mu\nu}(\tau)=\iint\dd t\,\dd t'\,
 g_\mu^*(t)\mathcal K_{\mu\nu}^{(\xi)}(t,t';\tau)f_\nu(t').
\label{eq:A-from-kernel}
\end{equation}
For a passive, number-preserving system--environment dilation with vacuum environment and a single input excitation, $A_\xi$ is a contraction. The successful map is
\begin{equation}
\mathcal J_s(\rho)=\E_\xi[A_\xi\rho A_\xi^\dagger],
\quad Q_s=\E_\xi[A_\xi^\dagger A_\xi].
\label{eq:average-cp}
\end{equation}
One averages the unnormalized operation and then conditions. In general $\E[A\rho A^\dagger]\neq\E[A]\rho\E[A]^\dagger$.

This reduction requires more than linear equations: the noise state and commutators must be specified. A mixture of Gaussian channels need not itself be Gaussian. A single ion is not a bosonic ensemble merely because a single-excitation calculation is linear. Added photons from four-wave mixing, fluorescence, thermal excitation, or imperfect optical pumping require further operators and detector-conditioned sectors; they are not captured by a vacuum-erasure model. The passive, vacuum-noise reduction is used only under these conditions.

\subsubsection*{Balanced-rail dephasing plus erasure: explicit Kraus construction}
Suppose the accepted operation has no rail mixing and write
\begin{align}
A_\xi&=U_{\rm det}\operatorname{diag}(a_{0,\xi},a_{1,\xi}),\\
\eta_j&=\E|a_{j,\xi}|^2,\qquad
\zeta=\E[a_{0,\xi}a_{1,\xi}^*].
\end{align}
For balanced efficiencies $\eta_0=\eta_1=\eta>0$ and a compensated mean phase, set $\lambda_L=\zeta/\eta\in[0,1]$. Then
\begin{equation}
\mathcal J_s(\rho)=\eta U_{\rm det}
\begin{pmatrix}\rho_{00}&\lambda_L\rho_{01}\\
\lambda_L\rho_{10}&\rho_{11}\end{pmatrix}U_{\rm det}^\dagger.
\end{equation}
Since $Z\rho Z$ changes the sign of the off-diagonal elements, this equals $\eta U_{\rm det}\mathcal D_{\lambda_L}(\rho)U_{\rm det}^\dagger$, where
\begin{equation}
\mathcal D_{\lambda_L}(\rho)=\frac{1+\lambda_L}{2}\rho+
\frac{1-\lambda_L}{2}Z\rho Z.
\label{eq:dephasing-app}
\end{equation}
Thus the four operators used in the balanced-rail reduced channel are
\begin{align}
K_0&=\sqrt{\eta(1+\lambda_L)/2}\,VU_{\rm det},&
K_1&=\sqrt{\eta(1-\lambda_L)/2}\,VU_{\rm det}Z,\\
K_2&=\sqrt{1-\eta}\ket\emptyset\bra0,&
K_3&=\sqrt{1-\eta}\ket\emptyset\bra1.
\label{eq:four-kraus-app}
\end{align}
Their completeness is $\sum_jK_j^\dagger K_j=\eta\id_2+(1-\eta)\id_2=\id_2$. Identifying $\lambda_L=e^{-\chi}$ gives $p_\phi=(1-e^{-\chi})/2$.

For a maximally entangled reference input, after correcting $U_{\rm det}$,
\begin{equation}
F_{e|s}=\frac{1+\lambda_L}{2},\qquad
F_e=\eta\frac{1+\lambda_L}{2},\qquad p_s=\eta.
\label{eq:memory-fidelity-app}
\end{equation}
The conditional average pure-state fidelity, a different quantity, is $F_{\rm av|s}=(2F_{e|s}+1)/3$. A phase-dephased ideal Bell pair approaches $F_{e|s}=1/2$, not $1/4$. The latter is the isotropic depolarization limit and belongs to a different channel.

\paragraph{Unequal rails.}
Cauchy--Schwarz gives $|\zeta|\leq\sqrt{\eta_0\eta_1}$. After removing the phase of $\zeta$, let $\lambda_u=|\zeta|/\sqrt{\eta_0\eta_1}$. Replace the successful operators in Eq.~\eqref{eq:four-kraus-app} by
\begin{align}
K_0&=\sqrt{(1+\lambda_u)/2}\,V\operatorname{diag}(\sqrt{\eta_0},\sqrt{\eta_1}),\\
K_1&=\sqrt{(1-\lambda_u)/2}\,V\operatorname{diag}(\sqrt{\eta_0},-\sqrt{\eta_1}),
\end{align}
and the failure factors by $\sqrt{1-\eta_j}$. Known output phases/unitaries can be appended. Now
\begin{align}
p_s(\rho)&=\eta_0\rho_{00}+\eta_1\rho_{11},\\
F_{e|s}&=\frac{\eta_0+\eta_1+2|\zeta|}{2(\eta_0+\eta_1)}
\label{eq:unbalanced-fidelity}
\end{align}
for the maximally entangled reference. A scalar efficiency is insufficient when the rails have unequal survival or couple differently to the bath.

\paragraph{Common attenuation is not logical dephasing.}
If $a_{0,\xi}=a_{1,\xi}=a_\xi$, then $\zeta=\eta$ and $\lambda_L=1$, even when $\eta\ll1$ or the common phase fluctuates. Conversely, a random relative phase can give $\lambda_L<1$ with no loss. This establishes why the retrieval envelope and conditional coherence must be separately identified before imposing any relation between them.

\subsubsection{REID spin-wave CRIB and dynamical decoupling}
\label{app:reid}
\paragraph{Material platform and matter--light Hamiltonian.}
Consider Pr$^{3+}$:Y$_2$SiO$_5$ with an optical interface near $605.977$\,nm, controlled reversible inhomogeneous broadening (CRIB), and a long-lived hyperfine spin wave protected by a clock/ZEFOZ operating point and dynamical decoupling. The storage level $\ket s$ and transfer control field connect optical excitation to the spin-wave stage:
\begin{align}
H/\hbar={}&H_{\rm field}/\hbar
+\sum_j[\Delta_j(t)\sigma_{ee}^{(j)}+\delta_j(t)\sigma_{ss}^{(j)}]
\nonumber\\
&-\sum_j\left[\sum_\mu g_{\mu j}\hat a_\mu(z_j)\sigma_{eg}^{(j)}
+\Omega_j(t)\sigma_{es}^{(j)}+\mathrm{h.c.}\right],\\
\Delta_j(t)&=\Delta_j^{(0)}+\Delta_{\rm CRIB}(z_j,t)+\delta\Delta_j(t).
\end{align}
Here $H_{\rm field}/\hbar=\sum_\mu\int\dd z\,\hat a_\mu^\dagger(-ic\partial_z)\hat a_\mu$. The sign of the coupling is a phase convention. Optical spontaneous decay, spin leakage, and fluctuating detunings enter through Eq.~\eqref{eq:gksl-app} or an explicit bath. With most ions in $\ket g$, the optical polarization obeys the linear optical Maxwell--Bloch equations; retaining $\Omega$ adds the spin-wave equation used below for ensembles. Immobility removes the atomic-diffusion term but not spectral diffusion or magnetic noise.

\paragraph{Optical depth and backward-echo amplitude.}
For ideal, phase-matched backward CRIB, complete detuning reversal, adequate prepared bandwidth, and lossless spin-transfer pulses, an excitation created at depth $z$ has input and return amplitude weights $e^{-\alpha z/2}$. The coherent source contribution is proportional to $\alpha\dd z$. Consequently the normalized retrieval amplitude is
\begin{equation}
A_{\rm CRIB}^{(0)}=\int_0^L\alpha e^{-\alpha z}\dd z=1-e^{-d},
\quad d=\alpha L,
\end{equation}
and
\begin{equation}
\eta_{\rm CRIB}^{(0)}(d)=(1-e^{-d})^2.
\label{eq:crib-eff-app}
\end{equation}
This is the ideal backward-CRIB limit under the stated assumptions \cite{crib_kraus,crib_sangouard}. Imperfect preparation, irreversible optical dephasing, insufficient bandwidth, imperfect transfer pulses, and collection must enter the actual kernel or additional efficiencies. Increasing $d$ does not remove these errors. Atomic-frequency-comb storage is a distinct echo protocol, not another name for CRIB \cite{afc}.

\paragraph{From the spin bath to the DD coherence factor.}
In a logical toggling frame, use $H_\phi(t)=\hbar y(t)\delta\omega(t)Z/2$, where ideal $\pi$ pulses change $y(t)$ between $+1$ and $-1$. For one realization,
\begin{equation}
\rho_{01}(\tau)=e^{-i\phi(\tau)}\rho_{01}(0),\qquad
\phi(\tau)=\int_0^\tau y(t)\delta\omega(t)\dd t.
\end{equation}
For zero-mean Gaussian noise with correlation $C(t-t')$, the Gaussian characteristic function gives
\begin{equation}
\lambda_L(\tau)=e^{-\chi(\tau)},\quad
\chi=\frac12\int_0^\tau\!\int_0^\tau y(t)y(t')C(t-t')\dd t\dd t'.
\label{eq:chi-time}
\end{equation}
With the two-sided spectrum $S_2(\omega)=\int_{-\infty}^{\infty}C(u)e^{-i\omega u}\dd u$ and $Y(\omega)=\int_0^\tau y(t)e^{i\omega t}\dd t$,
\begin{equation}
\chi=\frac{1}{4\pi}\int_{-\infty}^{\infty}S_2(\omega)|Y(\omega)|^2\dd\omega.
\label{eq:chi-psd}
\end{equation}
For a classical real process, the positive-frequency form is $\chi=(2\pi)^{-1}\int_0^\infty S_2(\omega)|Y(\omega)|^2\dd\omega$. An equivalent convention defines $F(\omega\tau)=\omega Y(\omega)$ and $S_F=S_2/2$, giving $\chi=\pi^{-1}\int_0^\infty S_F|F|^2/\omega^2\dd\omega$. The spectral definition must accompany the prefactor \cite{dd_filter}.

At a ZEFOZ point the first derivative of the transition frequency with respect to magnetic field vanishes. Residual quadratic magnetic shifts, pulse errors, and technical noise remain. A fitted coherence and efficiency model
\begin{equation}
\chi_R(\tau)=\left(\frac{\tau}{T_{2,R}^{\rm DD}}\right)^{n_R},
\quad \eta_{\rm DD}=e^{-(\tau/T_{\rm mem})^{m_R}}
\label{eq:reid-closure}
\end{equation}
therefore replaces a specified noise spectrum and pulse sequence over a declared fitting interval. $m_R$ is the efficiency-envelope exponent, not the mode capacity. Non-Gaussian noise may require higher cumulants than Eq.~\eqref{eq:chi-time}.

\paragraph{Independent-loss and coupled-recall variants.}
The independent-loss variant uses
\begin{equation}
\eta_R^{(0)}(\tau)=\eta_{\rm CRIB}^{(0)}(d)\eta_{\rm DD}(\tau),
\label{eq:reid-q0}
\end{equation}
whereas the coupled-recall variant uses
\begin{equation}
\eta_R^{(2)}(\tau)=\eta_R^{(0)}(\tau)e^{-2\chi_R(\tau)}.
\label{eq:reid-q2}
\end{equation}
Both are retained as explicitly different effective models. Each gives a CPTP instrument when inserted with $\lambda_L=e^{-\chi_R}$ into Eq.~\eqref{eq:four-kraus-app}. Only the first is implied by the scalar erasure-plus-dephasing form with independent envelopes. The second additionally assumes that recoverable collective-mode population decays with the square of the same fitted coherence amplitude. That extra identification needs a microscopic correlation model or calibration; the general field reduction does not impose it. Optional population loss adds $e^{-\tau/T_{\rm loss}}$ when it is genuinely leakage out of both logical rails. The same loss must not be charged again at recall.

\subsubsection{Cold-atom EIT/Raman memory}
\label{app:cold}
\paragraph{Hamiltonian, linearization, and noise.}
For an atomic $\Lambda$ system ($^{87}$Rb as its example), replace the REID detunings by a one-photon detuning $\Delta$ and two-photon detunings $\delta_j(t)$, and retain the quantum coupling and classical control $\Omega(t)$. Define coarse-grained collective polarization $P\propto\sum_j\sigma_{ge}^{(j)}$ and spin wave $S\propto\sum_j\sigma_{gs}^{(j)}$. Weak excitation lets the population differences be replaced by their ground-state values; an appropriate collective normalization gives
\begin{align}
(\partial_t+c\partial_z)\hat{\mathcal E}&=iG\hat P,\\
\partial_t\hat P&=-(\gamma+i\Delta)\hat P+iG\hat{\mathcal E}+i\Omega\hat S+\hat F_P,\\
\partial_t\hat S&=-(\gamma_s+i\delta)\hat S+i\Omega^*\hat P+\hat F_S.
\label{eq:MB-cold}
\end{align}
Here $\gamma$ is an \emph{amplitude} decay rate, and $G$ is the collective coupling in the selected normalization. The reservoir correlations, not the mere presence of $\hat F_P$, determine whether added noise is negligible. EIT and off-resonant Raman storage are different operating limits of these equations, although an ideal optimization can relate their maximum efficiencies \cite{fleischhauer,gorshkov_prl}.

\paragraph{A factor-of-two optical-depth convention.}
Set $\Omega=\Delta=0$ and eliminate the stationary polarization: $P=iG\mathcal E/\gamma$. Then
\begin{equation}
\partial_z\mathcal E=-\frac{G^2}{c\gamma}\mathcal E,
\quad \frac{I(L)}{I(0)}=e^{-2G^2L/(c\gamma)}.
\end{equation}
Matching the intensity optical-depth convention $I(L)/I(0)=e^{-d}$ gives $d=2G^2L/(c\gamma)$. The free-space solution of Gorshkov \emph{et al.}\ uses $d_G=G^2L/(c\gamma)$, so $d_G=d/2$ for these normalizations \cite{gorshkov_free}. This conversion must precede importing a storage kernel. Otherwise even a numerically exact kernel can be evaluated at the wrong physical optical depth.

\paragraph{EIT storage as a change of field character.}
On two-photon resonance with slowly varying, fixed-phase control, the dark excitation has negligible optical polarization:
\begin{equation}
\hat\Psi=\cos\vartheta\,\hat{\mathcal E}-\sin\vartheta\,\hat S,
\quad \tan\vartheta=G/|\Omega|,
\quad v_g=c\cos^2\vartheta.
\end{equation}
Reducing $|\Omega|$ converts the propagating dark excitation to a material spin wave. Reversing the control converts it back. Pulse bandwidth, finite optical depth, nonadiabatic coupling to bright states, and spin decoherence limit the efficiency; a dark-state expression alone does not determine the write/read probability.

\paragraph{Explicit kernel derivation in the adiabatic limit.}
Introduce retarded dimensionless time $s=\gamma(t-z/c)$, coordinate $x=z/L$ for uniform density, $\bar\Delta=\Delta/\gamma$, and $\bar\Omega=\Omega/\gamma$. Neglect spin decay only during the short write/read step. With normalized mode amplitudes the equations become
\begin{align}
\partial_x\mathcal E&=i\sqrt{d_G}P,\\
\partial_sP&=-(1+i\bar\Delta)P+i\sqrt{d_G}\mathcal E+i\bar\Omega S,\\
\partial_sS&=i\bar\Omega^*P.
\end{align}
Laplace transforming $x\mapsto u$ yields
\begin{align}
\mathcal E(u,s)&=i\sqrt{d_G}P(u,s)/u+\mathcal E_{\rm in}(s)/u,\\
D(u)&=1+i\bar\Delta+d_G/u,\\
\partial_sS&=-\frac{|\bar\Omega|^2}{D(u)}S
-\frac{\bar\Omega^*\sqrt{d_G}}{uD(u)}\mathcal E_{\rm in},
\label{eq:laplace-write}
\end{align}
where the last equation uses adiabatic elimination of $P$. This scalar inhomogeneous equation is solved with an integrating factor. Define $h(s_1,s_2)=\int_{s_1}^{s_2}|\bar\Omega(v)|^2\dd v$ and $q=1+i\bar\Delta$. The inverse Laplace transform gives the write kernel
\begin{align}
S(x,s_w)&=\int_0^{s_w} W(x,s)\mathcal E_{\rm in}(s)\dd s,\\
W(x,s)&=-\frac{\sqrt{d_G}\bar\Omega^*(s)}{q}
 e^{-[h(s,s_w)+d_Gx]/q}\nonumber\\
 &\quad\times I_0\!\left(\frac{2\sqrt{h(s,s_w)d_Gx}}{q}\right).
\label{eq:write-kernel}
\end{align}
Here $I_0$ is the modified Bessel function. The corresponding forward read kernel, with read time reset to zero and $S_r(x)$ the initial spin-wave profile, is
\begin{align}
\mathcal E_{\rm out}(s)&=\int_0^1 R(s,x)S_r(x)\dd x,\\
R(s,x)&=-\frac{\sqrt{d_G}\bar\Omega(s)}{q}
 e^{-[h(0,s)+d_G(1-x)]/q}\nonumber\\
 &\quad\times I_0\!\left(\frac{2\sqrt{h(0,s)d_G(1-x)}}{q}\right).
\label{eq:read-kernel}
\end{align}
A global phase convention is immaterial to efficiencies. Backward readout reverses the spin-wave coordinate and must also satisfy the wavevector-matching condition. These are free-space write and read kernels under the boundary and adiabatic assumptions just stated \cite{gorshkov_free}.

Let $D_{\rm sw}(x,x';\tau)$ describe storage evolution. The full write--store--read mode kernel is now explicitly
\begin{equation}
\mathcal K(s,s';\tau)=\int_0^1\!\int_0^1
R(s,x)D_{\rm sw}(x,x';\tau)W(x',s')\dd x\dd x'.
\label{eq:WSR}
\end{equation}
Projection by Eq.~\eqref{eq:A-from-kernel} and the open-system average by Eq.~\eqref{eq:average-cp} complete the path to the Kraus operators. A solver can precompute $\mathcal K$ or its singular modes offline; DES events need only the reduced instrument and its parameter identifier.

For a normalized initial spin wave, completed retrieval without spin decay also admits the independent efficiency check
\begin{align}
\eta_r&=\int_0^1\!\int_0^1 k_{d_G}(x,y)S_r(1-x)S_r^*(1-y)\dd x\dd y,\\
k_{d_G}(x,y)&=\frac{d_G}{2}e^{-d_G(x+y)/2}
 I_0(d_G\sqrt{xy}).
\label{eq:retrieval-kernel-check}
\end{align}
It follows by integrating the polarization/spin excitation-balance equation over readout time. For a flat unit-norm spin wave,
\begin{equation}
\eta_{r,\rm flat}=1-e^{-d_G}[I_0(d_G)+I_1(d_G)].
\label{eq:flat-retrieval}
\end{equation}
This is retrieval from a prescribed spin wave, not the total write--read efficiency. The total efficiency includes $\eta_w=\int_0^1|S(x,s_w)|^2\dd x$ and the overlap of the retrieved field with the accepted temporal mode. A tabulated $d$ alone cannot set all three.

\paragraph{Parametric efficiency and kernel-based efficiency.}
The compact law $\eta_{\rm EIT}(d)\simeq(1-e^{-d})^2$ is a \emph{parametric} backward-recall surrogate, not a general consequence of EIT. Equations~\eqref{eq:write-kernel}--\eqref{eq:flat-retrieval} define a kernel-based alternative. A simulation must select one efficiency model; the two are not interchangeable at fixed $d$.

\paragraph{Motion, magnetic inhomogeneity, and $T_2$.}
The collective spin wave has wavevector $\boldsymbol q=\boldsymbol k_s-\boldsymbol k_c$. A displacement $\Delta\boldsymbol r$ contributes $e^{i\boldsymbol q\cdot\Delta\boldsymbol r}$ to its retrieval overlap. Under Gaussian diffusion with $\E[\Delta r_a\Delta r_b]=2D\tau\delta_{ab}$,
\begin{equation}
C_{\rm mot}(\tau)=\E e^{i\boldsymbol q\cdot\Delta\boldsymbol r}
=e^{-D|\boldsymbol q|^2\tau}.
\end{equation}
Independent quasistatic two-photon detuning of variance $\sigma_\delta^2$ contributes $e^{-\sigma_\delta^2\tau^2/2}$; homogeneous spin decay contributes $e^{-\gamma_0\tau}$. The combined functional is therefore
\begin{equation}
\chi_C(\tau)=a\tau+b\tau^2,
\quad a=D|\boldsymbol q|^2+\gamma_0,\quad b=\sigma_\delta^2/2.
\label{eq:cold-chi}
\end{equation}
Its $1/e$ time obeys $aT_2+bT_2^2=1$, giving the stable expression
\begin{equation}
T_{2,C}=\frac{2}{a+\sqrt{a^2+4b}}.
\label{eq:cold-t2}
\end{equation}
For $a=b=0$, the coherence time is infinite. For ballistic thermal motion, the different model $C_{\rm mot}=\exp[-|\boldsymbol q|^2(k_BT/m_a)\tau^2/2]$ replaces the diffusion factor. A trapping model may require neither limit. These alternatives must not be superimposed for the same motional mechanism.

An alternative fitted model sets $\chi_C=(\tau/T_{2,C})^{n_C}$ and, for the coupled-recall variant,
\begin{equation}
\eta_C^{(2)}(\tau)=\eta_{\rm EIT}(d)\eta_{\rm mode}(\tau)e^{-2\chi_C(\tau)},
\quad \lambda_{L,C}=e^{-\chi_C(\tau)}.
\label{eq:cold-coupled}
\end{equation}
This coupled model identifies optical-mode survival and logical dephasing through the same functional. For two physical rails, that identification is an assumption about their noise correlations. In a kernel backend, motional loss of collective retrieval and differential logical noise are extracted from $\eta_j$ and $\zeta$. The mode factor $\eta_{\rm mode}=\tr(M^\dagger M)/M_{\rm int}$ is an average over a specified equally weighted input ensemble, not a universal state-independent efficiency when $M^\dagger M$ is non-scalar.

\paragraph{Native-parameter consistency test.}
Microscopic parameters and a fitted coherence time cannot be varied independently for the same mechanism. For example, $D=10^{-4}$--$10^{-3}$\,m$^2$/s and $|\boldsymbol q|=10^4$--$10^6$\,m$^{-1}$ would place $(D|\boldsymbol q|^2)^{-1}$ between $10^{-9}$ and $10^{-4}$\,s under the stated diffusion law. They cannot simultaneously describe millisecond-to-second motional coherence without changing the motion or signal--control geometry model. A run therefore uses either a consistent microscopic parameter set or a calibrated effective $T_2,n_C$ pair.

\subsubsection{Trapped-ion Raman interface and internal-state memory}
\label{app:ion}
\paragraph{Finite-dimensional matter dynamics.}
A single-ion $\Lambda$ interface couples optical excitation to a long-lived internal state through a Raman interaction. For a cavity-assisted realization, the optical input/output boundary is specified by
\begin{align}
H/\hbar={}&\delta_c a^\dagger a+\Delta\proj e+\delta\proj s\nonumber\\
&+\left(g a\ket e\bra g+\Omega(t)\ket e\bra s+\mathrm{h.c.}\right),\\
L_{\rm ex}&=\sqrt{\kappa_{\rm ex}}a,\quad L_{\rm loss}=\sqrt{\kappa_{\rm loss}}a,\\
L_g&=\sqrt{\Gamma_g}\ket g\bra e,\quad
L_s=\sqrt{\Gamma_s}\ket s\bra e.
\end{align}
Here $\kappa=\kappa_{\rm ex}+\kappa_{\rm loss}$ and $\Gamma=\Gamma_g+\Gamma_s$ are \emph{population} decay rates. The associated amplitude linewidths are $\kappa/2$ and $\Gamma/2$. Additional dark-state noise and leakage operators are specified separately. A polarization memory needs the multilevel/dual-rail extension identified above. No collective bosonic approximation is made for the ion.

\paragraph{An explicit single-photon capture and readout kernel.}
The incoming continuum wavepacket can be reduced to a single-excitation amplitude equation for a fixed classical-noise realization. Let $\boldsymbol v=(c,e,s)^T$ denote the cavity-photon, optical-excited-state, and stored-state amplitudes. The coherent/no-jump branch obeys
\begin{equation}
\dot{\boldsymbol v}=\mathsf A(t)\boldsymbol v+
\sqrt{\kappa_{\rm ex}}\boldsymbol e_c f_{\rm in}(t),
\label{eq:ion-amplitude-ode}
\end{equation}
where $\boldsymbol e_c=(1,0,0)^T$, $\boldsymbol e_s=(0,0,1)^T$, and
\begin{equation}
\mathsf A(t)=
\begin{pmatrix}
-\kappa/2-i\delta_c&-ig^*&0\\
-ig&-\Gamma/2-i\Delta&-i\Omega(t)\\
0&-i\Omega^*(t)&-\gamma_\ell/2-i\delta(t)
\end{pmatrix}.
\end{equation}
Here $\gamma_\ell$ is population leakage out of the selected storage state, not a relabeling of logical $1/T_2$. Input/output normalization gives $f_{\rm out}=f_{\rm in}-\sqrt{\kappa_{\rm ex}}c$. For a normalized incident photon and initially empty cavity/unexcited ion, let $\mathsf U_w(t,t')$ be the time-ordered propagator of $\mathsf A$ during writing. The coherent storage amplitude is
\begin{equation}
s_w=\sqrt{\kappa_{\rm ex}}\int_0^{t_w}
\boldsymbol e_s^T\mathsf U_w(t_w,t')\boldsymbol e_c
f_{\rm in}(t')\dd t',\qquad \eta_w^{\rm coh}=|s_w|^2.
\end{equation}
For retrieval from the unit storage excitation with vacuum optical input,
\begin{equation}
f_r(t)=-\sqrt{\kappa_{\rm ex}}\boldsymbol e_c^T
\mathsf U_r(t,0)\boldsymbol e_s,\quad
\eta_r^{\rm coh}=\int|f_r(t)|^2\dd t.
\end{equation}
The accepted read-mode amplitude is $\int g_{\rm out}^*(t)f_r(t)\dd t$, rather than $\sqrt{\eta_r^{\rm coh}}$ unless the modes match. Multiplying the write amplitude, the storage propagator, and this read amplitude supplies $a_{j,\xi}$ for each logical rail and hence the quantities $\eta_j,\zeta$ in the Kraus reduction. This makes the dependence on $g,\Omega,\Delta,\kappa,\Gamma$, and the field mode explicit instead of treating $\eta_{\rm TI}$ as a function of $T_2$ alone.

The amplitude equations describe the coherent/no-jump contribution. Spontaneous jumps that return population to an accepted internal state can produce incoherent, unheralded errors; their trajectories or full master-equation contributions must be added through Eq.~\eqref{eq:success-instrument-app}. They may be classified as erasure only when the specified herald rejects them. The equation is therefore a physical kernel construction.

\paragraph{Eliminating the excited state without discarding dissipation.}
Write all coherent operators in frequency units, $h=H/\hbar$. Split the excitation and de-excitation couplings as $v_+$ and $v_-=v_+^\dagger$, and define $h_{\rm NH}=h_e-i\sum_jL_j^\dagger L_j/2$. Adiabatic elimination gives
\begin{align}
h_{\rm eff}&=h_g-\tfrac12v_-\left[h_{\rm NH}^{-1}+(h_{\rm NH}^{-1})^\dagger\right]v_+,\\
L_{{\rm eff},j}&=L_jh_{\rm NH}^{-1}v_+.
\label{eq:effective-ion}
\end{align}
For $h_{\rm NH}=(\Delta-i\Gamma/2)\proj e$ and $v_+=\ket e(g a\bra g+\Omega\bra s)$, the coherent part contains
\begin{align}
H_{\rm eff}/\hbar\supset-
\frac{\Delta}{\Delta^2+(\Gamma/2)^2}
\big[&|g|^2a^\dagger a\proj g+|\Omega|^2\proj s\nonumber\\
&+g^*\Omega a^\dagger\ket g\bra s+\mathrm{h.c.}\big],
\end{align}
where the Hermitian-conjugate term applies to the off-diagonal coupling. The effective spontaneous-emission operators include
\begin{equation}
L_{{\rm eff},g}=
\frac{\sqrt{\Gamma_g}}{\Delta-i\Gamma/2}
\ket g(g a\bra g+\Omega\bra s),
\end{equation}
and the analogous $\ket s$ channel. Thus the large-detuning limit gives the Raman coupling magnitude $|g\Omega/\Delta|$, together with AC Stark shifts and scattering rates proportional to $\Gamma|\Omega/\Delta|^2$. Retaining only the coherent coupling would omit precisely the open-system errors that the memory model is meant to capture \cite{reiter}.

\paragraph{Cooperativity rather than ensemble optical depth.}
With the above linewidth convention, define
\begin{equation}
C=\frac{4|g|^2}{\kappa\Gamma}.
\end{equation}
In a resonant bad-cavity limit with the required bandwidth assumptions, cavity elimination gives a desired-mode emission rate $\Gamma_c=4|g|^2/\kappa$. The branching fraction into the external port is then
\begin{equation}
\beta_{\rm ex}=\frac{\kappa_{\rm ex}}{\kappa}
\frac{\Gamma_c}{\Gamma+\Gamma_c}
=\frac{\kappa_{\rm ex}}{\kappa}\frac{C}{1+C}.
\label{eq:ion-beta}
\end{equation}
This is a limit derived for a specified cavity, not a universal write--read efficiency. The actual $\eta_{\rm TI}$ is obtained from capture, control-pulse, retrieval, spectral, and spatial-mode matching. The incoming photon can be treated as a one-photon wavepacket in a continuum or a source-cavity dilation, and Eq.~\eqref{eq:dilation-kraus} then produces the corresponding instrument. Free-space collection uses its optical-mode overlap instead of cavity $C$. An effective optical coupling parameter must not be interpreted simultaneously as a dimensionless cooperativity and a collection probability, and an ensemble optical depth $d$ is not an intrinsic single-ion parameter.

\paragraph{Dark-storage coherence and effective retrieval.}
During storage the control is off unless the protocol explicitly applies optical fields. For an additive effective rate model,
\begin{equation}
\frac1{T_{2,I}}=\gamma_0+\Gamma_B+|\Delta k|^2D_{\rm mot}+\alpha_\phi\gamma_{\rm sc},
\quad \lambda_{L,I}=e^{-\tau/T_{2,I}}.
\end{equation}
Off-resonant optical scattering contributes only over intervals in which the relevant light is present. A Ramsey/echo/DD measurement can instead provide the effective coherence law \cite{ion_memory}. A coupled-recall closure additionally assumes
\begin{equation}
\eta_I^{(2)}(\tau)=\eta_{\rm TI}e^{-2\tau/T_{2,I}}.
\end{equation}
This is a phenomenological retrieval envelope. Pure phase noise of an internal ion qubit does not alone imply that envelope; an independent optical survival measurement or microscopic derivation is needed to identify it with a physical device.

\paragraph{Population relaxation within the qubit is not erasure.}
When relaxation takes $\ket1\to\ket0$ inside the logical space, its Kraus operators are
\begin{equation}
A_0=\operatorname{diag}(1,\sqrt u),\quad
A_1=\sqrt{1-u}\ket0\bra1,\quad u=e^{-\tau/T_1}.
\end{equation}
Compose these with pure-dephasing operators $D_0=\sqrt{(1+\nu)/2}\id$ and $D_1=\sqrt{(1-\nu)/2}Z$, $\nu=e^{-\tau/T_\phi}$. Then
\begin{equation}
\rho_{01}(\tau)=\sqrt u\nu\rho_{01}(0),\qquad
\frac1{T_2}=\frac1{2T_1}+\frac1{T_\phi}.
\end{equation}
An independent optical erasure efficiency $\eta$ gives successful operators $\sqrt\eta VD_jA_k$ and the same two erasure operators as Eq.~\eqref{eq:four-kraus-app}. Its conditional entanglement fidelity is
\begin{equation}
F_{e|s}=\frac{1+u+2\sqrt u\nu}{4}.
\label{eq:ion-relax-fidelity}
\end{equation}
This additional channel is distinct from the pure-dephasing-plus-erasure closure. Replacing internal relaxation by a flag would incorrectly make an unheralded error heralded.

\subsubsection{From memory instruments to DES parameters}
\begin{table*}[!tbp]
\label{tabel:QM}
\caption{Memory-platform parameters and reduced-model alternatives. Efficiencies refer to the full declared write--store--read primitive. All rows use Eq.~\eqref{eq:four-kraus-app} only under balanced-rail and conditional pure-dephasing assumptions. Splitting total efficiency into write and conditional recall requires an explicit factorization.}
\centering\small
\begin{tabularx}{\textwidth}{P{2.4cm}YYY}
\toprule
Platform & Optical interface and native controls & Conditional coherence & Effective retrieval efficiency\\
\midrule
REID: spin-wave CRIB + DD & $d$, prepared spectrum, transfer pulses, $T_{\rm mem},m_R$, DD sequence & $\lambda_L=e^{-(\tau/T_{2,R}^{\rm DD})^{n_R}}$ & Independent loss: $(1-e^{-d})^2\eta_{\rm DD}$; coupled recall: multiply by $e^{-2\chi_R}$.\\
Cold atoms: EIT/Raman & $d$, $\Omega(t)$, detuning, bandwidth, $M_{\rm int}$, $D,\Delta k,\sigma_\delta,\gamma_0$ & $e^{-a\tau-b\tau^2}$ or fitted $e^{-(\tau/T_{2,C})^{n_C}}$ & Coupled recall: $\eta_{\rm EIT}(d)\eta_{\rm mode}(\tau)e^{-2\chi_C}$; distinguish parametric $\eta_{\rm EIT}$ from kernel-derived efficiency.\\
Trapped ion: Raman & $g,\Omega,\Delta$, cavity linewidth/collection, $C$, internal-state noise & $e^{-\tau/T_{2,I}}$ & Coupled recall: $\eta_{\rm TI}e^{-2\tau/T_{2,I}}$.\\
\bottomrule
\end{tabularx}
\end{table*}
The reduced model identifier must specify the optical-depth convention, logical encoding, recall variant, noise observable defining $T_2$, and the location of every efficiency. A write event samples $p_w$; recall samples $p_{r|w}(\tau)$, with $p_wp_{r|w}(\tau)=\eta(\tau)$ for the chosen primitive. It is not valid to sample the total $\eta$ both at write and at recall.

For a token already aged to $a$ and a monotone fitted $\lambda_L(a)>0$, an event at age $a+\Delta$ applies $\lambda_L(a+\Delta)/\lambda_L(a)$. The original write time must survive swaps at other nodes. Non-Markovian bath correlations between tokens or interventions require an augmented noise history; one-time Kraus maps alone do not determine multi-time process statistics. The  $F/W$ metadata can reproduce its chosen isotropic approximation, whereas exact composition of the above dephasing channels requires at least Bell-diagonal probabilities for Pauli-diagonal pairs. The richer state record defines the instrument-valued reference model and the state representation required by the native validation contract.

\subsection{Bell-state measurements: optical modes, detector instruments, and swapping}
\label{app:bsm}
\subsubsection{Model scope}
The model proceeds from polarization-encoded elementary pairs through photon retrieval, beam-splitter interference, detector response, accepted click patterns, and the remote conditional state. The accepted click-pattern instrument determines both herald probability and conditional output state. Bell-diagonal closure gives an efficient exact reduction within the independent Pauli-noise domain.

\subsubsection{The two incoming photons and the interferometer}
Label the remote endpoints $A,B$ and the two local repeater qubits $R_1,R_2$. Before swapping the state is $\rho_{AR_1}\otimes\rho_{R_2B}$ when the elementary resources are independent. An optical realization first retrieves $R_1,R_2$ into input ports $a,b$. With normalized auxiliary wavepackets $f,g$, a balanced beam splitter acts, up to known phases, as
\begin{align}
\hat a_p^\dagger[f]&\longmapsto
(\hat c_p^\dagger[f]+\hat d_p^\dagger[f])/\sqrt2,\\
\hat b_p^\dagger[g]&\longmapsto
(\hat c_p^\dagger[g]-\hat d_p^\dagger[g])/\sqrt2,
\quad p\in\{H,V\}.
\label{eq:BS-transform}
\end{align}
For identical auxiliary modes, the antisymmetric polarization Bell state produces the cross-port, orthogonal-polarization signature. A suitable polarization-resolved analyzer also distinguishes one further Bell state. The standard ancilla-free passive analyzer cannot unambiguously resolve all four equiprobable Bell states; the usual two-of-four device has success probability $1/2$ for independent Bell-pair inputs \cite{bsm,bsm_limit}. Detector architecture and the accepted click patterns must be specified to justify accepting both states; accepting only one gives $1/4$ in this ideal input case.

\paragraph{HOM overlap is an auxiliary-mode quantity.}
For two independent single photons with the same tested polarization and auxiliary density operators $\sigma_1,\sigma_2$, beam-splitter expansion gives
\begin{equation}
P_{\rm cc}=\tfrac12(1-V_{\rm HOM}),\qquad
V_{\rm HOM}=\tr(\sigma_1\sigma_2).
\end{equation}
For pure auxiliary states this is $|\langle f|g\rangle|^2$. The density operators in this expression describe the unresolved temporal, spectral, or spatial states involved in the interference test. Inserting the logical polarization marginals of perfect Bell pairs would instead give $\tr[(\id_2/2)^2]=1/2$ and misidentify perfect indistinguishability as 50\%. The auxiliary-mode and logical-polarization spaces must therefore be distinguished.

When spatial, spectral/temporal, and residual alignment variables are separable, their overlaps can factorize as $V_sV_tV_p$. Spectral--spatial correlations or frequency-dependent polarization transformations require the joint density operators. A mode overlap cannot be inferred from received intensity alone.

\subsubsection{Temporal modes, physical delay, and timestamp jitter}
For normalized spectral envelopes the field overlap is
\begin{equation}
V_t=\left|\int\dd\omega\,f^*(\omega)g(\omega)e^{-i\omega\delta t}\right|^2.
\end{equation}
As a derived example, let each temporal intensity have variance $\sigma_p^2$, with a relative delay $\delta t$ and carrier offset $\delta\omega$. Gaussian integration gives
\begin{equation}
V_t=\exp[-\delta t^2/(4\sigma_p^2)-\sigma_p^2\delta\omega^2].
\label{eq:gaussian-overlap-app}
\end{equation}
Physical emission-time or synchronization fluctuations randomize $\delta t$ and can reduce the averaged field overlap. Detector timestamp jitter occurs after interference: it does not retroactively change the incident wavepackets. It changes the accepted coincidence ensemble. Time-resolved postselection can couple these effects, but that coupling must be evaluated with a joint detection model rather than counted twice.

For a full relative-time coincidence window $w$, suppose the registered difference is Gaussian with mean $\bar\delta$ and variance $\sigma_{\rm rel}^2$. Then
\begin{align}
A_t(w)&=\Phi\!\left(\frac{w/2-\bar\delta}{\sigma_{\rm rel}}\right)
-\Phi\!\left(\frac{-w/2-\bar\delta}{\sigma_{\rm rel}}\right),\\
\sigma_{\rm rel}^2&=\sigma_{d,1}^2+\sigma_{d,2}^2+\sigma_{\rm sync}^2
+\sigma_{\rm intrinsic}^2,
\label{eq:relativegate}
\end{align}
where the final term is present only when an intrinsic arrival-difference spread is modeled. For independent absolute gates about two trigger times, the acceptance is instead the product of two single-detector probabilities. A squared single-detector acceptance corresponds to the latter interpretation. It must not silently be reused for a single relative coincidence window.

For example, two independent 50\,ps detector RMS jitters, zero other spread, and $w=200$\,ps give $A_t=\operatorname{erf}(1)\simeq0.842701$ for a relative gate. The product of two independent absolute-gate acceptances is $\operatorname{erf}(\sqrt2)^2\simeq0.911070$. This is an algebraic timing-convention check, not a detector calibration.

\subsubsection{Detector POVMs and the accepted-outcome instrument}
An on/off detector with efficiency $\eta_d$ and independent dark-click probability $p_d$ in its specified gate has effects
\begin{equation}
\Pi_0=(1-p_d)\sum_{n=0}^\infty(1-\eta_d)^n\proj n,
\qquad \Pi_1=\id-\Pi_0.
\label{eq:threshold-detector}
\end{equation}
The probability $p_d=1-e^{-D_dw}$ follows from a Poisson dark process of rate $D_d$ when the detector gate is $w$. A coincidence window and the physical detector-active gate need not be identical. Photon-number-resolving detectors, dead time, afterpulsing, and saturation require their own response models.

For a resolved click pattern $c$, tensor products of detector effects and the time-acceptance rule define $\Pi_c$. Let $U_{\rm opt}$ include beam splitting, polarization analysis, optical loss dilations, and all auxiliary modes, and let $\varrho$ include the two input resources, retrieved fields, and environment states. The unnormalized remote operation is
\begin{equation}
\mathcal J_c(\rho)=\tr_{R,\mathrm{opt},E}
\left[(\id_{AB}\otimes\sqrt{\Pi_c})U_{\rm opt}\varrho U_{\rm opt}^\dagger
(\id_{AB}\otimes\sqrt{\Pi_c})\right].
\label{eq:click-instrument}
\end{equation}
A bath/detector basis expansion supplies Kraus operators, exactly as in Eq.~\eqref{eq:dilation-kraus}. Its trace gives the accepted-pattern probability. The sum over accepted patterns, with the appropriate remote Pauli corrections, gives the heralded output. This formulation includes false heralds from dark counts, multi-pair emission, or one missing photon without confusing them with successful two-photon Bell projections. Empty endpoint sectors must remain flags unless a justified model replaces them with an effective qubit state.

In the restricted no-background, single-photon, indistinguishable-auxiliary-mode, balanced-loss case with two accepted Bell states and independent Bell-diagonal input resources, the genuine-herald probability reduces to
\begin{equation}
p_{\rm true,BSM}=\tfrac12\eta_{r,1}\eta_{r,2}
\eta_{o,1}\eta_{o,2}\eta_{d,1}\eta_{d,2}A_t.
\label{eq:BSMtrue}
\end{equation}
Here $\eta_{r,j}$ are conditional memory-recall probabilities, and $\eta_{o,j}$ are optical efficiencies from the recall boundary to the detectors. Efficiencies already sampled by the caller are set to unity inside this primitive. HOM visibility affects the accepted-state operation and, for a particular pattern rule, may also affect its probability. It is not generally interchangeable with $p_{\rm BSM}$, nor does $V_{\rm HOM}=1$ remove the $1/2$ analyzer factor. A phenomenological probability $p_{\rm BSM}\simeq V_{\rm HOM}$ does not represent this two-of-four optical analyzer.

\subsubsection{Exact swapping and the Bell-diagonal reduction}
For ideal resolved Bell outcome $r$, define $M_r=\bra{B_r}_{R_1R_2}$. The unnormalized endpoint state is
\begin{equation}
\tilde\rho_{AB}^{(r)}=(\id_A\otimes M_r\otimes\id_B)
(\rho_{AR_1}\otimes\rho_{R_2B})
(\id_A\otimes M_r^\dagger\otimes\id_B).
\label{eq:ideal-swap-operation}
\end{equation}
Its trace is the outcome probability, and the known Pauli correction $P_r$ acts on an endpoint before comparing to the target state. Optical loss and accepted-pattern weights multiply or generalize this trace-decreasing operation.

Let $G=\mathbb Z_2^2$ label Bell states $\ket{B_{xz}}=(\id\otimes X^xZ^z)\ket{\Phi^+}$, and let $p_g,q_g$ be independent input Bell probabilities. Pauli labels add by bitwise XOR after the outcome correction. Therefore
\begin{equation}
r_g=\sum_{h\in G}p_hq_{g\oplus h}.
\label{eq:bell-convolution-app}
\end{equation}
This equation follows either by inserting the Bell expansion into Eq.~\eqref{eq:ideal-swap-operation}, or by moving each input Pauli through the Bell projection; unobservable Pauli phases cancel in the density operator. A local Pauli-noise channel updates $p$ by the same convolution. The four probabilities are consequently a closed representation for independent Pauli noise and ideal corrected swaps. Correlated resources and non-Pauli operations are outside that closure.

For the Werner/isotropic parametrization,
\begin{equation}
p=(F_1,(1-F_1)/3,(1-F_1)/3,(1-F_1)/3),
\end{equation}
and similarly for $q$, Eq.~\eqref{eq:bell-convolution-app} gives
\begin{equation}
F_{\rm ideal}=F_1F_2+\frac{(1-F_1)(1-F_2)}{3},
\qquad W_{\rm out}=W_1W_2,
\label{eq:wernerexact-app}
\end{equation}
with $W_j=(4F_j-1)/3$. The second term represents matching nonidentity Bell errors that cancel under swapping.

\paragraph{Scalar comparator and failure of isotropic closure.}
For the diagnostic comparison in Section~\ref{sec:bell-reference}, define
\begin{equation}
F_{\rm sc}=\tfrac12[1+V(2F_1F_2-1)].
\label{eq:scalar-fswap}
\end{equation}
This expression is excluded from the physical reference backend. At $V=1$, two inputs with $F_1=F_2=1/2$ give $F_{\rm sc}=1/4$, whereas Eq.~\eqref{eq:wernerexact-app} gives $1/3$. Two completely mixed inputs give $F_{\rm sc}=1/16$ instead of $1/4$. The inconsistency concerns the stated isotropic input family and local swapping operation; a fidelity below $1/4$ is possible for other two-qubit states. Independent phase damping is represented by its Bell-error kernel rather than by forcing the state into a Werner family.

\subsubsection{Partial distinguishability and false-herald mixtures}
A mixture of indistinguishable and distinguishable optical sectors can be written at the unnormalized-instrument level as $\mathcal J=V\mathcal J_{\rm ind}+(1-V)\mathcal J_{\rm dis}$ when that decomposition is physically justified. Its conditional fidelity is
\begin{equation}
F_s=\frac{Vp_{\rm ind}F_{\rm ind}+(1-V)p_{\rm dis}F_{\rm dis}}
{Vp_{\rm ind}+(1-V)p_{\rm dis}}.
\end{equation}
A convex interpolation with weights $V,1-V$ is valid only when the branch success probabilities coincide or when $V$ already denotes the conditional weight. A universal distinguishable-branch fidelity of $1/2$ is not determined by mode overlap alone.

Similarly, for true and accidental accepted-event rates,
\begin{equation}
F_{\rm obs}=\frac{C_{\rm true}F_{\rm true}+C_{\rm acc}F_{\rm acc}}
{C_{\rm true}+C_{\rm acc}}.
\label{eq:accidental-mixture}
\end{equation}
For an explicitly assumed completely mixed false-qubit state, $F_{\rm acc}=1/4$. Different false-event mechanisms can yield other states. The scalar expression $C_{\rm true}F_{\rm swap}/(C_{\rm true}+C_{\rm acc})$ corresponds instead to assigning zero target overlap to the false branch. That choice is not white qubit noise, and the assumed false-event state must accompany any fidelity comparison.

\subsubsection{Event semantics at the repeater}
A swap reserves two distinct local qubits, evaluates their accumulated storage evolution, attempts recall, and applies the accepted detector instrument. Both operands are consumed by the destructive attempt, including an inconclusive outcome. A successful resource joins the two surviving remote endpoints and carries the outcome/frame information; it does not create a new material qubit at the repeater. Remote qubits retain their original memory-write times. Physical completion, arrival of the herald, and availability of a known Pauli frame are separate DES events. The SDN controller receives the reported outcome only after the modeled classical delay.

\subsection{FSO, adaptive optics, fiber, and orbital geometry}
\label{app:channels}
The analytical turbulence descriptions use the propagation conventions of Andrews and Phillips \cite{andrews2005}; the Gamma--Gamma irradiance model additionally follows Al-Habash, Andrews, and Phillips \cite{alhabash2001}. Receiver long-exposure phase statistics are treated using Fried's formulation \cite{fried1966}.
\subsubsection{From propagation of field modes to a quantum channel}
An end-to-end optical trial efficiency separates source preparation, geometric collection, atmospheric propagation, pointing, coupling, and detection. These categories define accounting boundaries, not necessarily independent random variables. A propagation channel is distinguished from the complete trial in which it is used. The field-mode construction below connects a mode-resolved backend to analytical, event-level approximations \cite{pirandola_sat,vasylyev,ao}.

For a scalar paraxial mode in a medium with a small refractive-index perturbation $\delta n$, a representative propagation equation is
\begin{equation}
\partial_z u=\frac{i}{2k}\nabla_\perp^2u+ik\delta n(\boldsymbol r_\perp,z,t)u
-\frac{\alpha_{\rm ext}(z,\lambda_0)}2u,
\quad k=2\pi/\lambda_0.
\label{eq:paraxial-app}
\end{equation}
The time argument denotes slow environmental evolution relative to optical flight; dynamic propagation requires its retarded value when that approximation fails. A split-step backend alternates Fresnel propagation with random phase screens and extinction. The screen spectrum, outer/inner scales, sampling resolution, aperture truncation, and temporal correlation determine its validity. A label such as ``Rytov variance'' alone does not define that backend.

For environmental realization $\xi$, let $\mathcal U_\xi$ propagate the input field to the receiver, $P_A$ restrict it to the aperture, and $C_{\rm AO,\xi}$ apply the receiver correction. In accepted input/output mode bases,
\begin{equation}
T_{mn}^{(\xi)}=\langle g_m|C_{\rm AO,\xi}P_A\mathcal U_\xi|f_n\rangle.
\label{eq:channel-mode-matrix}
\end{equation}
Passivity implies $T_\xi^\dagger T_\xi\leq\id$. A vacuum-environment dilation can be written $\hat{\boldsymbol b}=T_\xi\hat{\boldsymbol a}+L_\xi\hat{\boldsymbol e}$ with $T_\xi T_\xi^\dagger+L_\xi L_\xi^\dagger=\id$ on the output modes. Restricting to a single photon gives the successful map $\rho\mapsto T_\xi\rho T_\xi^\dagger$ and the erasure completion of Eq.~\eqref{eq:general-flagged}. For fading, average the unnormalized maps before conditioning. A polarization-dependent transfer needs a joint polarization--spatial matrix; scalar turbulence propagation alone does not establish polarization decoherence.

An offline lookup can expose $p(\eta,\mathcal E_{\rm pol},\delta t\mid\mathrm{geometry},\mathrm{weather},\mathrm{AO})$ to the DES. The joint distribution matters: conditioning on received photons favors high-transmission realizations, and averaging a conditional fidelity independently of transmission generally gives a different ensemble.

\subsubsection{Gaussian spreading and aperture collection}
The equivalent-Gaussian model uses the $1/e^2$ intensity radius
\begin{equation}
w_L=\sqrt{w_0^2+(\theta_{\rm div}L)^2},\qquad w_0=D_t/2.
\label{eq:gaussian-radius}
\end{equation}
The relation $w_0=D_t/2$ is a chosen launch filling convention, not a universal identity for a truncated telescope beam. For a diffraction-limited Gaussian, $\theta_{\rm div}=\lambda_0/(\pi w_0)$; an $M^2$ factor or a measured divergence can replace this relation. Consequently an 810/1550\,nm comparison must declare whether divergence is fixed by optics, measured separately, or held artificially equal.

For normalized intensity $I(r)=2e^{-2r^2/w_L^2}/(\pi w_L^2)$, a centered aperture of radius $a=D_r/2$ collects
\begin{equation}
\eta_{\rm geo}=\int_0^a2\pi rI(r)\dd r
=1-e^{-2a^2/w_L^2}=1-e^{-D_r^2/(2w_L^2)}.
\label{eq:aperture-app}
\end{equation}
This gives the centered-aperture collection law. It approaches unity for a large aperture and scales as $D_r^2/(2w_L^2)$ when the aperture is small.

\subsubsection{Pointing: instantaneous displacement versus mean jitter loss}
For beam-center displacement $\boldsymbol r_0$ relative to the aperture center, the exact Gaussian collection is
\begin{align}
\eta_{\rm ap}(r_0)&=\int_{|\boldsymbol r|\leq a}
\frac{2e^{-2|\boldsymbol r-\boldsymbol r_0|^2/w_L^2}}{\pi w_L^2}\dd^2r\nonumber\\
&=F_{\chi'^2_2(4r_0^2/w_L^2)}(4a^2/w_L^2).
\label{eq:offset-aperture}
\end{align}
The last form is the noncentral chi-square cumulative distribution with two degrees of freedom. It follows because each transverse coordinate of the Gaussian intensity distribution has variance $w_L^2/4$. It is directly usable for numerical sampling or quadrature.

In the small-aperture approximation, $\eta_{\rm ap}(r_0)/\eta_{\rm ap}(0)\simeq e^{-2r_0^2/w_L^2}$. If pointing displacement has independent coordinate variance $\sigma_r^2=L^2\sigma_\theta^2$ and mean offset of magnitude $b$, Gaussian integration gives
\begin{equation}
\E\left[e^{-2r_0^2/w_L^2}\right]
=\frac{\exp[-2b^2/(w_L^2+4\sigma_r^2)]}{1+4\sigma_r^2/w_L^2}.
\label{eq:mean-pointing}
\end{equation}
For zero mean and a \emph{radial} RMS displacement $r_{\rm rms}=\sqrt2\sigma_r$, the denominator is $1+2r_{\rm rms}^2/w_L^2$. The expression $e^{-2(\sigma_\theta L/w_L)^2}$ is instead a representative-offset penalty obtained by substituting one displacement into the instantaneous loss law. Using the displaced-aperture integral already accounts for geometric collection and pointing, so a second geometric or pointing factor would double count them.

\subsubsection{Extinction, elevation, and the turbulence profile}
Absorption and scattering attenuate power according to
\begin{equation}
T_{\rm ext}=T_{\rm abs}T_{\rm sca}
=\exp\left[-\int_{\rm ray}(\alpha_{\rm abs}+\alpha_{\rm sca})\dd s\right].
\end{equation}
For an approximately plane-parallel atmosphere of effective vertical thickness $h_0$ and \emph{elevation} $\epsilon$, $L_{\rm atm}\simeq h_0/\sin\epsilon$. A curved atmosphere and a minimum elevation mask are preferable near the horizon. Clouds are a separate availability/extinction process; a clear-air turbulence parameter does not model cloud blockage.

A Hufnagel--Valley profile is
\begin{align}
C_n^2(h)={}&0.00594(v/27)^2(10^{-5}h)^{10}e^{-h/1000}\nonumber\\
&+2.7\times10^{-16}e^{-h/1500}+A_0e^{-h/100},
\label{eq:HV-app}
\end{align}
with $h$ in meters and the conventional profile coefficients. $A_0$ and $v$ parameterize a selected atmospheric profile, not a measurement unless calibrated. In the plane-wave slant-path approximation, the Fried coherence diameter is
\begin{equation}
r_0=\left[0.423k^2\frac{1}{\sin\epsilon}
\int C_n^2(h)\dd h\right]^{-3/5}.
\label{eq:fried-app}
\end{equation}
The factor is cosecant when the argument is elevation. Equivalently, it is secant for the zenith angle $z=\pi/2-\epsilon$. At zenith, Eq.~\eqref{eq:fried-app} remains finite. An implementation must not interchange these angle conventions.

For a homogeneous plane wave in weak turbulence,
\begin{equation}
\sigma_R^2=1.23C_n^2k^{7/6}L^{11/6}
\end{equation}
is a standard limiting parameterization. Slant paths, spherical/Gaussian waves, finite apertures, and uplink versus downlink propagation require the corresponding path weighting. The same value of $\sigma_R^2$ does not specify beam wander, wavefront residuals, or temporal correlations uniquely. In particular, a scalar intensity distribution is not a substitute for a spatial random field \cite{vasylyev,ao}.

\subsubsection{Normalized fading and empirical mean collection}
\paragraph{Weak-fluctuation surrogate.}
A normalized lognormal irradiance is
\begin{align}
H&=e^{-s_\ell^2/2+s_\ell Z},\quad Z\sim\mathcal N(0,1),\\
\E H&=1,\qquad \operatorname{Var}(H)=e^{s_\ell^2}-1.
\label{eq:lognormal}
\end{align}
Hence $s_\ell^2=\ln(1+\sigma_I^2)$ if $\sigma_I^2$ is the scintillation index. In the weak plane-wave limit $\sigma_I^2\simeq\sigma_R^2$. A separately fitted relation such as $\sigma_I^2\simeq1.23\sigma_R^{12/5}$ is an alternative surrogate, not that weak-limit identity. Likewise, a switch near $\sigma_R^2=0.3$ is a model-selection rule, not a universal atmospheric phase boundary.

\paragraph{Gamma--Gamma surrogate.}
Let $X\sim\mathrm{Gamma}(\alpha,1/\alpha)$ and $Y\sim\mathrm{Gamma}(\beta,1/\beta)$ independently, and $H=XY$. Integrating $p_X(x)p_Y(h/x)/x$ gives
\begin{equation}
p_H(h)=\frac{2(\alpha\beta)^{(\alpha+\beta)/2}}{\Gamma(\alpha)\Gamma(\beta)}
h^{(\alpha+\beta)/2-1}K_{\alpha-\beta}(2\sqrt{\alpha\beta h}),\quad h>0.
\end{equation}
The normalization follows from the product construction; $\E H=1$ and $\operatorname{Var}(H)=\alpha^{-1}+\beta^{-1}+(\alpha\beta)^{-1}$. In terms of $s_R=\sigma_R^2$, one analytical shape prescription is
\begin{align}
\alpha^{-1}&=\exp\!\left[\frac{0.49s_R}{(1+1.11s_R^{6/5})^{7/6}}\right]-1,\\
\beta^{-1}&=\exp\!\left[\frac{0.51s_R}{(1+0.69s_R^{6/5})^{5/6}}\right]-1.
\label{eq:GG-shapes}
\end{align}
These expressions specify the shape parameters of this surrogate. Their use for a particular finite-aperture slant link still requires justification. $H$ is relative irradiance and can exceed unity. If $\eta=\bar\eta H$ is clipped to $[0,1]$, the clipping changes the mean and must be recorded. A bounded distribution of collected power or a wave-optics lookup avoids interpreting unbounded irradiance as transmissivity.

\paragraph{An explicit limitation of the mean-loss equation.}
For fixed extinction and a normalized irradiance fluctuation,
\begin{equation}
\E[T_{\rm ext}H]=T_{\rm ext}.
\end{equation}
A turbulence-dependent exponential reduction of the mean therefore does not follow from this expectation. Such a reduction can instead represent a separately calibrated \emph{mean collected-mode loss}, for example
\begin{equation}
\bar\eta_{\rm turb}=e^{-C_{\rm atm}s_R^{6/5}},\qquad
\eta\approx\eta_{\rm fixed}\bar\eta_{\rm turb}H,
\label{eq:empirical-turb}
\end{equation}
subject to boundedness and a declared joint model. Beam broadening and finite-aperture loss can change mean collected power even though normalized scintillation has mean one. We retain Eq.~\eqref{eq:empirical-turb} as an empirical backend, not as a derived consequence of Eq.~\eqref{eq:lognormal}. The single-repeater example uses $C_{\rm atm}=0.06$. The QUASAR mean-signal calculation does not apply an additional mean turbulence penalty; its turbulence quantities are separate diagnostics.

\subsubsection{Adaptive optics and conditional mode coupling}
Normalize the field before aperture loss so $\int|u|^2\dd^2r=1$, and the accepted receiver mode so $\int_A|g|^2\dd^2r=1$. Then
\begin{align}
\eta_{\rm ap}&=\int_A|u|^2\dd^2r,\\
\eta_{\rm cpl|ap}&=\frac{|\int_Ag^*(\boldsymbol r)u(\boldsymbol r)e^{-i\phi_{\rm AO}(\boldsymbol r)}\dd^2r|^2}{\eta_{\rm ap}}\leq1.
\label{eq:ao-overlap}
\end{align}
The overall accepted fraction is $\eta_{\rm ap}\eta_{\rm cpl|ap}$, with fixed mode-matching optics included consistently. This expression directly connects the corrected field modes to the single-photon transfer in Eq.~\eqref{eq:channel-mode-matrix}.

For a matched pupil with sufficiently small residual phase, a useful approximation is $\eta_{\rm cpl}\simeq\eta_{\rm ideal}e^{-\sigma_{\phi,\rm res}^2}$. Fitting error, servo lag, wavefront-sensor noise, anisoplanatism, and uncorrected modes enter $\sigma_{\phi,\rm res}^2$ according to the specified AO design. This is a regime-limited reduction, not a complete AO model. A relative gain $G_{\rm AO}=\eta_{\rm AO\,on}/\eta_{\rm AO\,off}$ may exceed one; a physical coupling probability may not. Factors $\eta_{\rm AO}$ and $G_{\rm AO}$ must be distinguished on that basis. Direct bucket detection and single-mode fiber injection do not have identical wavefront sensitivity \cite{ao}.

The empirical factors $\eta_{\rm HV}=e^{-(D_r/r_0)^{5/3}}$, $G_{\rm ap}=1-e^{-(D_r/D_{\rm sat})^2}$, and $G_{\rm bw}=1/(1+2\sigma_{\rm bw}^2/w_e^2)$ can be retained as named calibrated surrogates. They are not automatically independent of the aperture, pointing, and turbulence terms already evaluated. An independent product of mean efficiencies also differs from the mean of their correlated product.

\subsubsection{Fiber propagation and wavelength conversion}
For fiber length $L_f$ in kilometers, power-attenuation coefficient $\alpha_f$ in dB/km, and insertion loss $\ell_{\rm ins}$ in dB,
\begin{equation}
\eta_f=10^{-(\alpha_fL_f+\ell_{\rm ins})/10}.
\end{equation}
A mode-resolved narrowband propagation matrix can be written
\begin{align}
T_f(\omega)&=\sqrt{\eta_f(\omega)}\,J_f(\omega)e^{i\varphi_f(\omega)},\\
\varphi_f(\omega)&=L_f[\beta_0+\beta_1(\omega-\omega_0)\nonumber\\
&\hspace{18mm}+\tfrac12\beta_2(\omega-\omega_0)^2+\cdots],
\end{align}
using consistent length units in the phase. $J_f$ represents birefringence or polarization-dependent transfer. Group delay is $n_gL_f/c$; dispersion changes the outgoing temporal mode and therefore BSM overlap. Fiber attenuation alone is an erasure process, not a depolarizing map. Random birefringence, polarization-mode dispersion, and detector-conditioned background can reduce conditional fidelity through distinct mechanisms.

A downlink at 810\,nm and a telecom fiber or a Pr/alkali/ion memory are not automatically spectrally compatible. An explicit frequency-conversion interface must specify conversion efficiency, temporal/spectral transfer, added noise, and polarization preservation. Merely assigning the same token to those devices bypasses a physical interface that can dominate the experiment.

\subsubsection{Orbital state, visibility, and time of flight}
For a Keplerian benchmark with semimajor axis $a_o$, eccentricity $e_o$, and mean motion $n_o=\sqrt{\mu/a_o^3}$, solve
\begin{equation}
M(t)=M_0+n_o(t-t_0)=E_o-e_o\sin E_o.
\end{equation}
The inertial position follows from rotations of
\begin{align}
\boldsymbol r_{\rm orb}&=
\begin{pmatrix}a_o(\cos E_o-e_o)\\a_o\sqrt{1-e_o^2}\sin E_o\\0\end{pmatrix},\\
\boldsymbol r_I&=R_3(\Omega_o)R_1(i_o)R_3(\omega_o)\boldsymbol r_{\rm orb}.
\end{align}
This ideal propagation is not a replacement for TLE/OMM propagation with its appropriate SGP4 conventions. Epoch, reference frame, time scale, Earth rotation, and the propagator must be recorded \cite{vallado}. The single-repeater local tangent-plane trajectory is a separate controlled geometry used to isolate offset and crossing-angle effects; its labels must not be mistaken for a mission ephemeris.

For ground position $\boldsymbol r_g$ and local up unit vector $\boldsymbol u_g$ in the same Earth-fixed frame as the satellite,
\begin{equation}
L_g=\|\boldsymbol r_s-\boldsymbol r_g\|,\qquad
\epsilon_g=\arcsin\left[\boldsymbol u_g\cdot
(\boldsymbol r_s-\boldsymbol r_g)/L_g\right].
\end{equation}
A link is usable only if the elevation mask, tracking constraints, optical-terminal allocation, weather model, and hardware readiness all permit it. For spherical Earth radius $R_E$ and altitude $h$, the geometric slant-range check is
\begin{equation}
L_g(\epsilon)=\sqrt{(R_E+h)^2-R_E^2\cos^2\epsilon}-R_E\sin\epsilon.
\end{equation}
It gives $L_g(\pi/2)=h$ and $L_g(0)=\sqrt{(R_E+h)^2-R_E^2}$. An inter-satellite line segment must remain outside the selected Earth/atmosphere exclusion radius. Find the closest point $\boldsymbol r_1+u(\boldsymbol r_2-\boldsymbol r_1)$ with $u$ clamped to $[0,1]$ and test its norm, rather than applying a ground-elevation mask to an inter-satellite link.

Photon arrival time solves $t_a-t_e=\|\boldsymbol r_r(t_a)-\boldsymbol r_t(t_e)\|/c$ in vacuum, with optical-path corrections when required. The frozen-geometry approximation uses $L(t_e)/c$ and should be checked at the desired resolution. First-order Doppler is $\delta\omega/\omega\simeq-\dot L/c$; spectral compensation and synchronization affect memory acceptance and Eq.~\eqref{eq:gaussian-overlap-app}. The DES tracks propagation, storage, heralding, and correction delays separately.

\subsubsection{From source and count processes to state quality}
Source repetition $f_{\rm rep}$, mean pairs per excitation $\mu$, pair flux $R_p$, and the simulator's per-mode attempt rate have different units and meanings. For an ideal single two-mode-squeezed pair mode, $P_n=\mu^n/(1+\mu)^{n+1}$ and $R_p=f_{\rm rep}\mu$. A multimode polarization-entangled SPDC source needs its appropriate photon-number and joint spectral model. A ``1000 MHz emitted'' label must therefore be linked to either a pulse clock or an emitted pair flux before inferring multi-pair noise.

For a direct two-downlink coincidence benchmark in the low-occupancy regime,
\begin{align}
C_{\rm true}(t)&=R_p(t)\eta_A(t)\eta_B(t)A_t(w),\\
C_{\rm acc}(t)&\simeq S_A(t)S_B(t)w,
\label{eq:count-rates}
\end{align}
where $S_j$ are the relevant independent singles processes, including background and dark counts, and the accidental expression assumes a full-width relative gate with negligible saturation. Genuine paired counts must not be counted again as independent accidentals. At high brightness or appreciable $S_jw$, an explicit point-process/detector model replaces this approximation.

A dimensionally defined signal-to-noise ratio is $C_{\rm true}/C_{\rm acc}$, with both in s$^{-1}$ or both counted in the same window. If genuine pairs are ideal and accidental qubit states are $\id_4/4$, then
\begin{equation}
W=\frac{C_{\rm true}}{C_{\rm true}+C_{\rm acc}},\qquad
F=\frac{1+3W}{4}.
\end{equation}
This derives the SNR-to-Werner relation only for that mixture model. It does not say that pure propagation loss itself depolarizes a surviving photon. Source, background, and dark-count parameters must use an explicit counts-or-rates convention before evaluation. At a delivered-service boundary, memory and swapping are further operations and cannot be inferred from a direct coincidence rate.

\subsection{Space--time-based distribution and segmented nesting}
\label{app:stbd}
STBD and segmented nesting are adopted from the space--time distribution framework of Koudia et al.\ \cite{stbd2024}. This subsection formalizes their use with explicit two-ended resources, age-dependent noise, and execution constraints.
\subsubsection{Planning and physical execution}
STBD uses a space--time graph, product utilities, destination-time search, binary nesting, and early nesting of snapshot-consecutive segments. Causal edge durations, endpoint memory accounting, final-state quality, and failure handling connect that routing structure to an executable entanglement plan. Optimality of the selected additive route cost is distinct from global optimality for a stochastic, multi-request network.

Let the management horizon be divided by $t_0<t_1<\cdots<t_K$. A snapshot $G_k=(V,E_k)$ records candidate physical contacts. A lifted vertex $(v,k)$ denotes node $v$ at time $t_k$, not a qubit that has been copied into every layer. Two different edge types are required. A communication edge $(u,k)\to(v,k')$ represents a physical elementary-link opportunity whose generation and required heralding can complete no earlier than $t_{k'}$. A waiting edge $(v,k)\to(v,k+1)$ represents keeping a physical excitation in an allocated memory slot. Waiting edges require nonzero memory capacity and an appropriate survival/quality model.

The snapshot-to-space--time construction is described in Ref.~\cite{stbd2024}. Communication edges across layers specify link opportunities, not complete storage reservations or physical swaps. A communication edge creates a \emph{two-ended} entanglement resource; it does not forward an unknown qubit like a classical packet. Memory waiting, endpoint inventories, and operation prerequisites complete the execution model.

\subsubsection{Weights, horizons, and the shortest-path transformation}
For each eligible edge, define
\begin{align}
U_e&=(U_e^{\rm fid})^{\omega_f}(U_e^{\rm mem})^{\omega_m},\\
U_e^{\rm mem}&=e^{-\Delta t_e/t_c},\quad
\omega_f,\omega_m\geq0,\quad\omega_f+\omega_m=1,\\
U(P)&=\prod_{e\in P}U_e.
\label{eq:stbd-utility}
\end{align}
The function $U_e^{\rm fid}\in(0,1]$ is part of the policy configuration, not a unique consequence of the physical backend. Choosing a monotone function of predicted fidelity assigns a preference among routes.

For positive utilities,
\begin{equation}
\arg\max_P U(P)=\arg\min_P\sum_{e\in P}c_e,
\qquad c_e=-\log U_e\geq0.
\label{eq:logutility}
\end{equation}
Zero-utility edges are removed. This proves equivalence between multiplicative-utility routing and an additive shortest path for the chosen graph and weights. It does not include resource contention, correlated generation failures, or output-state constraints unless they are encoded in the state and feasible edge set. All target time layers can connect to a terminal sink; a single shortest-path search then replaces an explicit loop over destination times. If edges advance time, the graph is acyclic and a topological dynamic program suffices. Otherwise nonnegative-weight shortest-path methods are appropriate.

The planning horizon may be tied to a coherence scale $t_c$ or fixed independently. Automatically setting it to the swept physical $T_2$ changes both memory physics and planning foresight. A memory-sensitivity study therefore fixes the horizon, while a separate horizon sweep measures the routing-policy effect. A stretched-exponential $T_2$ is an operational coherence parameter, not a sharp deterministic failure boundary.

For independent Poisson opportunities of intensity $\lambda_e(t)$ with no resource blocking, an optional planning predictor is
\begin{equation}
p_{e,\rm window}=1-\exp\left[-\int_{t_a}^{t_b}\lambda_e(t)\dd t\right].
\end{equation}
This is the probability of at least one generation opportunity over that interval, conditional on the declared link model. 

\subsubsection{The physical plan and resource constraints}
Represent an elementary or swapped pair by
\begin{equation}
\mathsf P=(\mathrm{id},u,m_u,t_{w,u},v,m_v,t_{w,v},\boldsymbol p,
\mathrm{frame},\mathrm{request}),
\end{equation}
where $\boldsymbol p$ is a Bell-probability vector in the refined Pauli-diagonal model, and the two original write times are retained. A scalar $F/W$ record remains a compact approximation-specific adapter. A generation action declares the source, propagation direction, retained local half, remote loading operation, and successful-herald condition.

For an executable schedule $\pi$, physical memory occupancy obeys
\begin{equation}
\sum_{q:\,q\ \mathrm{stored\ at}\ v\ \mathrm{during}\ t}1\leq M_v(t).
\label{eq:memconstraint}
\end{equation}
Optical terminals, source flux, detector dead time, and BSM processors impose analogous constraints. A memory bank created dynamically for every link label must not create unlimited physical capacity: per-label reservations must sum to the hardware budget. If a source of total attempt rate $r_v$ serves several modes/links, their assigned rates sum to at most $r_v$, unless independent additional sources are explicitly modeled.

A swap at node $r$ can consume $\mathsf P_{ur}$ and $\mathsf P_{rv}$ only if their local qubits are distinct, present, unconsumed, not reserved by another operation, and causally known to the executor. Their remote endpoints become $u,v$ on success. The pair identity changes, but $t_{w,u},t_{w,v}$ do not. A failure consumes both local inputs and invalidates their pair correlations; cancellation/heralds reach the remote records according to the modeled classical delay.

\subsubsection{Exact age bookkeeping and the end-to-end quality constraint}
For a fitted coherence $\lambda(a)=e^{-\chi(a)}$ and an already applied age $a$, a later event at $b$ must apply
\begin{equation}
\lambda_{a\to b}=\frac{\lambda(b)}{\lambda(a)}
=e^{-[\chi(b)-\chi(a)]},\qquad b\geq a.
\label{eq:age-ratio-app}
\end{equation}
This telescopes over arbitrary event subdivisions. Replacing it by $e^{-\chi(b-a)}$ is generally wrong for $\chi(a)=(a/T_2)^n$ with $n\neq1$. Consequently a memory-wait edge has an age-dependent quality cost unless exponential decay is assumed. Exact optimization may require age-augmented states; a simpler planner must verify actual ages during execution. This bookkeeping describes the chosen monotone effective channel, not general multi-time non-Markovian noise under arbitrary interventions.

Under an isotropic independent-noise approximation, define $\mathcal H(\pi)$ as the set of all actual local-qubit storage histories, from write to local consumption or final delivery. Then
\begin{equation}
W_{\rm end}(\pi)=\prod_{e\in P}W_e\,
\prod_{q\in\mathcal H(\pi)}\lambda^{\rm iso}_{q}(\tau_q)
\prod_{s\in\pi}\nu_s,
\label{eq:history-product}
\end{equation}
where $\nu_s$ is included only if the BSM error is explicitly modeled as an independent isotropic channel. Every elementary link factor and every storage history appears once. Transition waiting is included through the actual surviving endpoint histories, not through an unspecified constant.

For pure-dephasing memory models, use the Bell convolution in Eq.~\eqref{eq:bell-convolution-app}, with each local phase-error distribution $[(1+\lambda_L)/2,(1-\lambda_L)/2,0,0]$ in the declared Bell ordering, instead of multiplying a Werner parameter by $\lambda_L$. Local amplitude damping or state-dependent loss requires a more general representation. The physical channel and the metadata update must agree.

When Eq.~\eqref{eq:history-product} is valid and $F_{\min}>1/4$, the final-quality condition is
\begin{equation}
-\ln W_{\rm end}\leq-\ln[(4F_{\min}-1)/3].
\end{equation}
Checking $F_e\geq F_{\min}$ for every elementary edge is insufficient. Four independent Werner links with $F_e=0.9$ have, even with perfect memories and swaps,
\begin{equation}
F_{\rm end}=\frac{1+3[(4\times0.9-1)/3]^4}{4}
\simeq0.673126.
\end{equation}
This explicit counterexample explains the need for final-state gating beyond per-link threshold filtering.

\subsubsection{Binary and segmented nesting}
A path of $n$ nodes has $\ell=n-1$ elementary links and $\ell-1$ repeater swaps. In a balanced binary schedule, adjacent available \emph{segments} are joined at their shared repeater, with an unmatched segment carried to the next level. The logical depth is
\begin{equation}
L_{\rm nest}=\lceil\log_2\ell\rceil\quad(\ell\geq1),
\end{equation}
so a direct elementary link needs zero rounds and a two-link path needs one. A BSM acts on two local qubits, not on two different repeater nodes. Parallel operations must use disjoint local operands and obey processor constraints.

For segmented nesting, split the selected route into snapshot-consecutive subpaths $P^{(1)},\ldots,P^{(K')}$. Establish elementary links in each segment, join ready local subsegments, store the remaining segment endpoints across the required transitions, and stitch neighboring segment resources when the prerequisites are available. The DES, not the coarse snapshot index, sets every generation, recall, BSM, and herald timestamp.

\begin{proposition}[Restricted memory-quality comparison]
Suppose a segmented and a late-nesting schedule use the same successful elementary pairs and the same accepted ideal swaps. Suppose their final delivery time and endpoint memories are fixed, and segmentation weakly shortens every internal-qubit storage history without introducing any extra waiting, noise, or resource conflict. Under independent monotone isotropic storage factors, $F_{\rm seg}\geq F_{\rm late}$. The inequality is strict only if a changed history has strictly decaying coherence and the remaining visibility product is nonzero.
\end{proposition}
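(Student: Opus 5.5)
The comparison will go through the history-product formula, Eq.~\eqref{eq:history-product}, which holds under the stated isotropic independent-noise assumption. Both schedules use the same successful elementary pairs and the same accepted ideal swaps. Hence the link factors $\prod_{e\in P}W_e$ agree, and so do the BSM factors $\prod_s\nu_s$ (or both are absent). The schedules can therefore differ only in the storage factors $\prod_{q\in\mathcal H(\pi)}\lambda_q^{\rm iso}(\tau_q)$.

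The plan is to set up a bijection between the two history sets. Each physical qubit $q$ is written once and is either consumed locally or delivered. Because the same elementary pairs are used, the same qubits appear in both schedules, and I index the histories by qubit. The final delivery time and endpoint memories are fixed, so the two surviving endpoint histories have identical durations and contribute equal factors. The hypothesis that segmentation weakly shortens every internal-qubit history gives $\tau_q^{\rm seg}\le\tau_q^{\rm late}$ for every remaining $q$. The assumption of no extra waiting, noise, or resource conflict ensures that no new factor enters $\mathcal H(\pi_{\rm seg})$.

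Monotonicity then finishes the comparison. Each $\lambda_q^{\rm iso}$ is nonincreasing in age, and the age is measured from the original write time, consistent with Eq.~\eqref{eq:age-ratio-app}, so subdividing events does not reset anything. Hence $\lambda_q(\tau_q^{\rm seg})\ge\lambda_q(\tau_q^{\rm late})\ge0$. A product of nonnegative factors, each weakly larger, is weakly larger, so $W_{\rm end}^{\rm seg}\ge W_{\rm end}^{\rm late}$. Since $F=(1+3W)/4$ is strictly increasing, this gives $F_{\rm seg}\ge F_{\rm late}$.

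For strictness, suppose some history strictly shortens with $\lambda_q(\tau^{\rm seg})>\lambda_q(\tau^{\rm late})$. Write $W^{\rm late}=c\,\lambda_q(\tau^{\rm late})$ and $W^{\rm seg}\ge c\,\lambda_q(\tau^{\rm seg})$, where $c$ is the remaining product in the late schedule. Then $W^{\rm seg}>W^{\rm late}$ exactly when $c>0$, which is the nonzero-visibility condition. Conversely, if $c=0$, or if every changed factor is flat on the relevant interval, equality is possible, which justifies the ``only if''.

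The main obstacle is the bookkeeping step, namely justifying that the two $\mathcal H$ sets correspond qubit-by-qubit with no extra factor. I would handle this by stating explicitly that identical elementary pairs and swaps fix the set of written and consumed qubits. Only their consumption or delivery times differ, and the hypothesis excludes any additional waiting edges.
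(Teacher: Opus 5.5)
Your proof of $F_{\rm seg}\ge F_{\rm late}$ follows the paper's argument. Every factor of Eq.~\eqref{eq:history-product} other than the shortened internal storage histories coincides between the two schedules. Each shortened factor weakly increases by monotonicity, and $F=(1+3W)/4$ preserves the order. Your termwise comparison of products is slightly more robust than the paper's cancellation in the ratio, which implicitly divides by factors that might vanish.

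Both versions, however, tacitly need the common factor $\prod_e W_e\prod_s\nu_s$ to be nonnegative, for instance $W_e\ge0$ (i.e.\ $F_e\ge1/4$) and $\nu_s\ge0$. A Werner parameter can be as low as $-1/3$, and a negative common factor would reverse the inequality. You should state this assumption explicitly rather than absorb it into the phrase about a product of nonnegative factors.

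The strictness paragraph does not prove the necessity direction. Showing that equality is \emph{possible} when the condition fails says nothing about necessity; you must show that failure of the condition \emph{forces} equality. Your intermediate claim that $W^{\rm seg}>W^{\rm late}$ holds exactly when $c>0$, with $c$ the remaining late-schedule product, is only a sufficient condition. Suppose another changed history $q'$ has $\lambda_{q'}(\tau_{q'}^{\rm late})=0<\lambda_{q'}(\tau_{q'}^{\rm seg})$ (a monotone factor with a cutoff) and all segmented-schedule factors are positive. Then $c=0$ but $W^{\rm seg}>0=W^{\rm late}$.

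Argue from the strict inequality instead. If $W^{\rm seg}>W^{\rm late}$, the storage factors cannot all coincide, so some changed history has $\lambda_q(\tau_q^{\rm seg})>\lambda_q(\tau_q^{\rm late})$; this is the strictly decaying coherence. Moreover, $W^{\rm seg}>W^{\rm late}\ge0$ gives $W^{\rm seg}\neq0$, so every factor of the segmented product is nonzero, in particular the product common to both schedules. For necessity to hold, the remaining visibility product in the proposition must be read as these common factors, not your late-side $c$. The paper's one-line remark that the stated conditions determine strictness leaves exactly this implicit.
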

\begin{proof}
Each unchanged factor in Eq.~\eqref{eq:history-product} cancels in the ratio. Every shortened history increases or preserves its nonnegative coherence factor. Hence $W_{\rm seg}\geq W_{\rm late}$ and $F=(1+3W)/4$ preserves the ordering. The stated conditions determine strictness.
\end{proof}
This proposition identifies the conditions under which early nesting preserves or improves memory-induced quality. 

\subsubsection{Executable request and retry procedure}
\begin{algorithm}[t]
\caption{STBD with resource-constrained physical execution}
\label{alg:stbd-expanded}
\begin{algorithmic}[1]
\Require Endpoints $(A,B)$, $F_{\min}$, deadline $D$, horizon $H$, physical budgets, model identifiers
\State Accept request; create request/plan epoch and deadline events
\State Build time-expanded contacts and feasible waiting edges from available or forecast information
\State Rank candidate paths by Eq.~\eqref{eq:logutility}; declare $U^{\rm fid}$
\For{candidate path in the configured search budget}
  \State Build segmented swap dependencies and endpoint reservations
  \If{predicted final quality, timing, and hardware budgets are feasible}
    \State Issue generation schedule; enter event-driven execution
    \State \textbf{break}
  \EndIf
\EndFor
\While{request active and simulation time $<D$}
  \State Process arrived storage/swap reports, not future outcomes
  \State Revalidate token identities, original ages, and reservations
  \If{two prerequisites for a permitted local swap are ready}
    \State Schedule recall, detector instrument, and result herald
  \EndIf
  \If{a pair reaches $(A,B)$ with an available correction frame}
    \State Evaluate final-state quality at the delivery boundary
    \If{$F\geq F_{\min}$} \State Deliver and release unused reservations \EndIf
  \EndIf
  \If{failure, expiry, or invalidated contact prevents completion}
    \State Apply declared local retry, cycle reset, or replanning policy
  \EndIf
\EndWhile
\State Abort unresolved request; notify endpoints and clear its resources
\end{algorithmic}
\end{algorithm}
Algorithm~\ref{alg:stbd-expanded} combines path selection, segmented nesting, feedback, and retry with explicit resource validation. Candidate enumeration, reservation method, replanning trigger, and retry locality are policy parameters.

With $K$ layers and $|V|$ physical nodes, the basic lifted graph has $O(K|V|)$ vertices and $O(\sum_k|E_k|+K|V|)$ edges. A nonnegative-weight heap-based shortest path costs $O((|\mathcal V|+|\mathcal E|)\log|\mathcal V|)$; an acyclic graph can be processed in $O(|\mathcal V|+|\mathcal E|)$. Age augmentation, multi-commodity reservations, and candidate backtracking add cost. Planning time and event-execution time should be reported separately.

\subsubsection{Throughput and control-plane KPIs}
A product of source, memory, swap, and transmission probabilities multiplied by a mode count is a per-trial yield scaling, not a rate. A trial-frequency or elapsed-time factor is required for units of pairs per second. The delivered-rate estimator is
\begin{equation}
R_{\rm usable}=\frac{1}{T_{\rm obs}}
\sum_j\mathbf1\{\mathrm{delivered}_j,\ F_j\geq F_{\min},\ t_j\leq D_j\}.
\label{eq:useful-rate-app}
\end{equation}
In a stationary two-link reset toy model with independent exponential readiness times $X_i\sim\mathrm{Exp}(r_i)$, ideal unlimited waiting, and each cycle consuming both links,
\begin{equation}
\E\max(X_1,X_2)=\frac1{r_1}+\frac1{r_2}-\frac1{r_1+r_2}.
\end{equation}
Adding a fixed swap/notification overhead $t_o$ and independent accepted success probability $p_b$ gives $R=p_b/[\E\max(X_1,X_2)+t_o]$. This is a useful limiting-case check, not an approximation for a fading, finite-buffer constellation without further analysis.

Memory-mode time per usable pair is $\int_0^{T_{\rm obs}}\sum_vB_v(t)\dd t/N_{\rm usable}$; retries and drops are separated by cause. Controller telemetry carries request/plan epoch, pair ID, endpoints, mode IDs, original write times, quality/model metadata, and event timestamp. Classical messages are scheduled with propagation, serialization, processing, queuing, and optional polling delay. Ryu HTTP wall-clock time is a runtime measurement, not automatically simulated network latency. Controller comparisons must hold information availability, policy, and physical random streams fixed when testing only the transport boundary.

\subsection{Scenario configurations and counting conventions}
\label{si:records}
\subsubsection{Single-repeater configurations}
The controlled S1 sweep and dated Paris--Vienna run have distinct geometric and capacity settings. Section~\ref{sec:s1} defines the two co-located limits and the two 500\,km separated-terminal cases; Table~\ref{tab:campaign-parameters} records their numerical defaults. The satellite stores one local half of each elementary pair and performs the BSM, while the remote halves belong to the two OGSs. Hardware implementation requires its own loading and spectral-interface specification.
\begin{table*}[!tbp]
\centering\small
\caption{Recorded numerical settings for the controlled S1 sweep and dated Paris--Vienna run. The sweep and geographical experiment use the same recorded optical and memory defaults, with the listed capacity and elevation-mask differences. All efficiencies are dimensionless; angular jitter and the static pointing factor are separate configuration fields.}
\label{tab:campaign-parameters}
\begin{tabularx}{\textwidth}{P{4.6cm}YY}
\toprule
Parameter & Four-archetype sweep & Dated Paris--Vienna case\\
\midrule
Execution / source rate & Local event execution; $5\times10^5$\,s$^{-1}$ & Local event execution; $5\times10^5$\,s$^{-1}$\\
Memory platform / depth / exponent & Cold atom; $d=20$; $n=1.3$ & Cold atom; $d=20$; $n=1.3$\\
$T_2$ and retrieval envelope & $T_2=100$\,s default; $T_{\rm mem}=1000$\,s, $m=1$ & $T_2=100$\,s; $T_{\rm mem}=1000$\,s, $m=1$\\
Initial mode efficiency / capacity & $0.9$; 8192 modes per label & $0.9$; 256 modes per label\\
Wavelength / apertures & 810\,nm; $D_t=0.15$\,m, $D_r=0.50$\,m & Same\\
Divergence / pointing & $12\,\mu$rad; static factor $0.97$; jitter $0.01\,\mu$rad & Same\\
Source / coupling / detection factors & $0.85/0.70/0.75$ & Same\\
Atmospheric fields & $t_{\rm abs}=t_{\rm sca}=0.94$, $c_{\rm atm}=0.06$ & Same\\
Turbulence / AO & $\sigma_R^2=0.08$ default; loss/quality coefficients $0.08/0.03$; AO gain $1$ & Same\\
Stochastic fading / jitter & Both enabled & Both enabled\\
Polarization / phase / temporal mismatch & $0.02$\,rad / $0.1$\,rad / $20$\,ps; coherence-width field $100$\,ps & Same\\
Count-model fields & $p_t=0.5$, $n_0=10^{-5}$, $n_{\rm dc}=120$\,s$^{-1}$, $n_{\rm conv}=0$, gate $1$\,ns & Same\\
BSM configuration & $V_{\rm HOM}=0.97$, $\eta_d=0.82$, jitter $50$\,ps, window $200$\,ps, $p_{\rm dc}=10^{-6}$ & Same\\
Elevation mask / bin width & $20^\circ$ / $2$\,s & $10^\circ$ / $2$\,s\\
Replication / seed & Four repeats per setting; base seed 62345 & One run; seed 103445\\
\bottomrule
\end{tabularx}
\end{table*}

\subsubsection{Run-bin statistics and model representation}
The complete controlled campaign contains 32 settings, four repeats per setting, 24\,320 run-bin rows, and 6080 ensemble rows. Every rate is a counted number of successes divided by its 2\,s bin width, followed by aggregation over repeats. Fidelity in the plotted sweep is the median of supported repeat-bin mean fidelities. Empty-support bins are undefined. The integrated count for each run is the sum of its bin counts. Means and sample standard deviations are calculated across the four full runs.

The supplied compact-state trajectories retain the state-update conventions of their numerical configuration. They are distinguished from the exact Bell-vector and general instrument constructions in Sections~\ref{app:memories} and \ref{app:bsm}. In particular, a low-fidelity scalar trajectory is not retrospectively assigned the pure-dephasing limit. The full-grid plots, original run-bin records, and parameter definitions remain together in the source-data package.

\subsubsection{S2 constellation counting boundary}
The S2 figure retains the four views of the same timed distribution experiment. Its 150\,s snapshots, 300\,s request spacing, 600\,s horizon, and 30\,s delivery cadence are separately specified. Counts are per event. Zero quality values between plotted windows are a no-delivery display convention. No controller-equivalence or routing-baseline gain is inferred from this single service trace.

\subsubsection{S3 observation and reporting boundary}
The S3 topology is specified in Fig.~\ref{fig:s3-layout}, and its replacement delivery and control results occupy the two parts of Fig.~\ref{fig:s3-campaign}. Raw delivery counts endpoint pairs before fidelity qualification; qualified delivery applies $F\geq0.8$ to the delivered resources. Raw rates use 1\,s bins. The conditional endpoint fidelity is count weighted, whereas the cumulative qualified curves accumulate only accepted deliveries. A count-weighted bin mean is not substituted for the individual qualification decision.

The reported diagnostics distinguish actual storage ages at BSM readout, storage-age distributions at the OGS and remote endpoints, reserved/stored OGS slots, and BSM busy-time fraction. Reserved/stored occupancy is not interpreted as the number of ready, high-fidelity pairs. The resource/policy cases are labeled FIFO with 100\,$\mu$s service, freshest-first control, 50\,ms BSM service, and 0.1\,s optical interpolation. The 100\,$\mu$s and 50\,ms values are service durations; 0.1\,s is an optical interpolation setting, not a delivery-count bin or a mandatory swap interval. The panel sets contain one counting realization and an across-realization mean but do not specify their replication count or confidence intervals; cumulative totals and other numerical summaries are therefore quoted only at plot-reading precision. The curves do not by themselves identify a microscopic memory or BSM instrument, whose selection belongs to the simulation configuration.

\subsection{Dated Paris--Vienna configuration, records, and geographical context}
\label{si:dated-context}\label{sec:dated-pass}
The dated design case uses city-center proxies at Paris $(48.8566^\circ\mathrm N,2.3522^\circ\mathrm E)$ and Vienna $(48.2082^\circ\mathrm N,16.3738^\circ\mathrm E)$, with zero station altitude. The assumed midpoint is 25 September 2026 at 17:00 UTC. The circular 360\,km orbit has inclination approximately $85.77^\circ$ and spans a 600\,s observation. The shared interval above $10^\circ$ lasts approximately 312.656\,s. Geometric, propagation, and counting timestamps are separately retained in the numerical archive.

The mean midpoint budget per arm includes about 26.770\,dB geometrical loss, 0.539\,dB atmospheric loss, 0.038\,dB configured mean turbulence loss, 0.132\,dB static pointing loss, 1.549\,dB coupling loss, 0.706\,dB source loss, and 1.249\,dB detection loss. The total is about 30.984\,dB per arm. The deterministic budget excludes memory recall and BSM.

For the same 300 two-second bins, the total failure column closes to 313\,914\,582 stage-specific increments: 313\,853\,542 link failures, 14\,937 recall failures, 45\,771 BSM failures, and 332 full-storage rejections. These categories are not divided by one common denominator to infer a composite success probability. The final retained inventory is excluded from delivered counts.

\subsubsection{Geographical and illumination overlays}
The global and European maps and all three pass snapshots are retained in Fig.~\ref{fig:dated-context}(a,b). Illumination is a diagnostic overlay: it does not gate the fixed-background channel in this design case. At the midpoint, modeled solar elevation is approximately $+6.45^\circ$ at Paris and $-2.78^\circ$ at Vienna.  Terrain, atmospheric refraction, and a satellite-eclipse calculation are outside these overlays.
\begin{figure*}[!tbp]
\centering
\begin{subfigure}[t]{\textwidth}
\centering
\includegraphics[width=.98\linewidth]{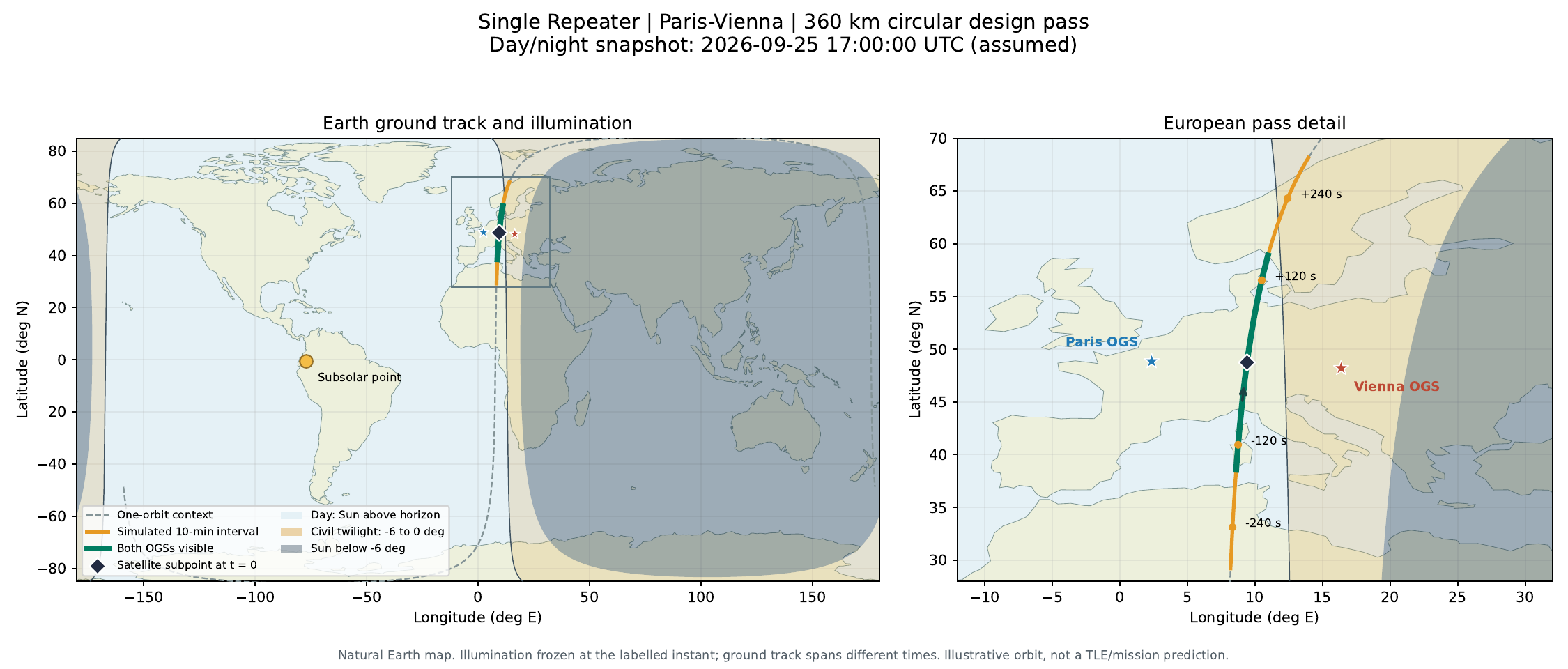}
\caption{Global illumination and European ground-track context at the labeled midpoint.}
\label{fig:dated-map}
\end{subfigure}
\par\vspace{4mm}
\begin{subfigure}[t]{\textwidth}
\centering
\includegraphics[width=.98\linewidth]{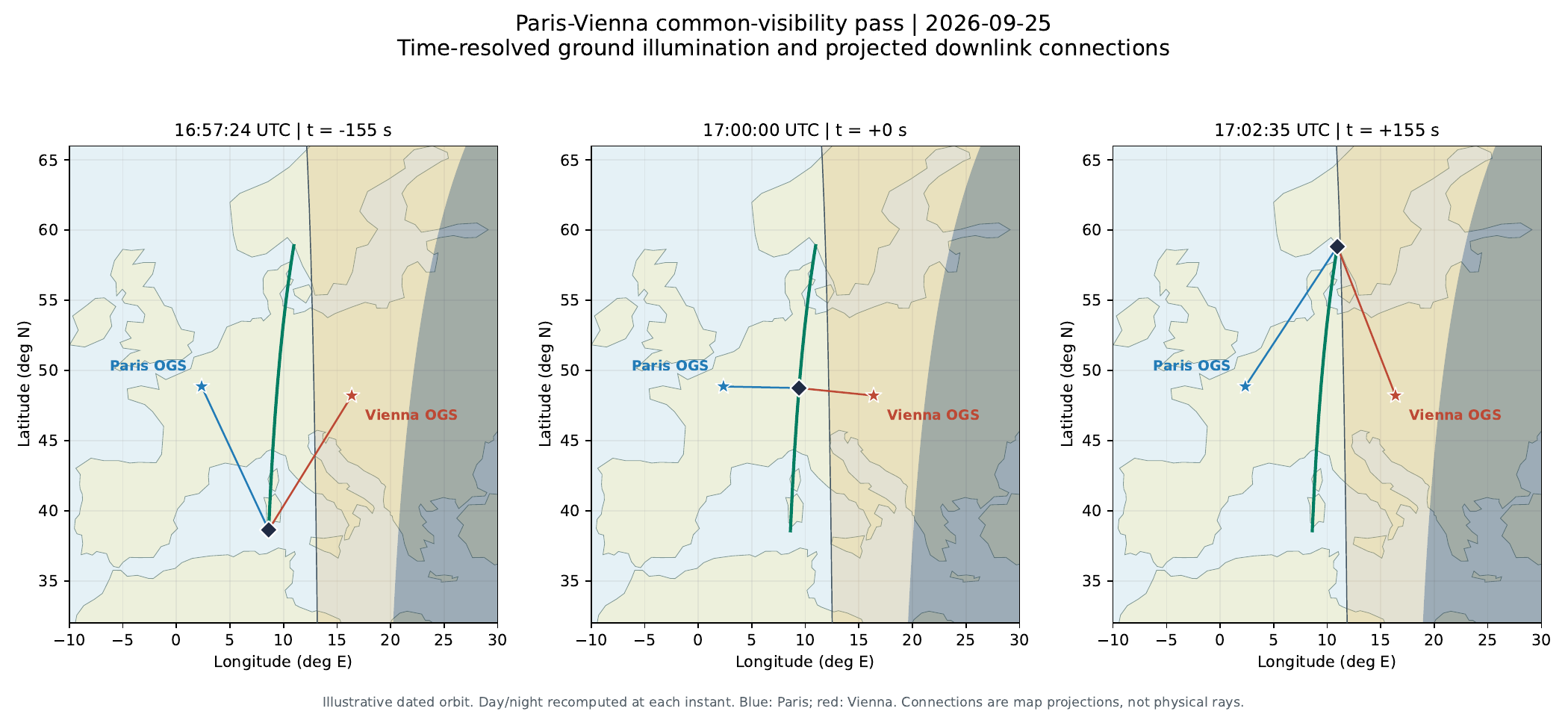}
\caption{Snapshots near the beginning, midpoint, and end of common visibility.}
\label{fig:dated-snapshots}
\end{subfigure}
\caption{\textbf{Dated S1 geographical and illumination context.} (a) The colored ground track spans different times, with common visibility highlighted, while illumination is evaluated at the labeled midpoint. (b) Ground illumination is recomputed for each of the three labeled instants; colored lines are projected connections to Paris and Vienna, not physical ray paths. The stations are city-center proxies and the orbit is a circular design case rather than a TLE or reconstructed spacecraft pass. Solar positions follow the NOAA approximate geometric equations; the base map is Natural Earth public-domain land data. Second-rounded UTC stamps are the timestamp references, while offset annotations are coarsely rounded. Illumination is diagnostic and does not modify the simulated background.}
\label{fig:dated-context}
\end{figure*}

\subsection{Direct Matera--Helmos inputs and extended receiver analysis}
\label{si:direct}
\subsubsection{Geometry, optical assumptions, and reference point}
The direct-distribution case uses one onboard entangled-pair source and two simultaneous downlinks, with one photon sent to Matera and the other to Helmos. It isolates optical collection and receiver count statistics from memory synchronization and swapping. Table~\ref{tab:quasar-inputs} defines the geometry and optical assumptions. This case corresponds to the memory-free resource stage. The 810 and 1550\,nm cases use equal wavelengths on their two arms, equal detector efficiency, and equal timing jitter; they compare wavelength-dependent propagation under a common receiver model.

\begin{table*}[!tbp]
\centering\small
\caption{Direct-downlink configuration. Geographic positions use WGS84. The two-body elements have a J2000-like frame and an assumed UTC epoch. Additional optical and receiver values are stated modeling assumptions.}
\label{tab:quasar-inputs}
\begin{tabularx}{\textwidth}{P{4.7cm}P{4.6cm}Y}
\toprule
Parameter & Matera / common & Helmos / interpretation\\
\midrule
Latitude, longitude & $40.6493^\circ$, $16.7041^\circ$ & $37.9854^\circ$, $22.1983^\circ$\\
Station altitude; receiver diameter & $536.9$\,m; $1.5$\,m & $2340$\,m; $2.3$\,m\\
Receiver blocked-area fraction & $0.1000$ & $0.1035$\\
Minimum elevation & $20^\circ$ & Both stations\\
Semimajor axis; eccentricity & $6939.1284$\,km; $0$ & Circular two-body reference\\
Inclination; RAAN & $97.6355^\circ$; $284.5^\circ$ & Inertial element convention\\
Argument of periapsis; initial anomaly & $0^\circ$; $54.0783^\circ$ & Phase at the stated epoch\\
Epoch; calculation spacing & 1 Jan. 2032, 22:59:28 UTC; $1$\,s & Assumed time/frame interpretation\\
Transmitter diameter; waist radius & $0.25$\,m; $0.125$\,m & Each arm; equivalent untruncated Gaussian\\
Beam quality; pointing RMS & $M^2=1$; $1\,\mu$rad per axis & Independent, zero-mean Gaussian pointing\\
Wavelengths; source fidelity & $810$ and $1550$\,nm; $F_0=0.99$ & One equal-wavelength case at a time\\
TX efficiency; RX efficiency; PDE & $0.80$; $0.50$; $0.60$ & Each applied once, both wavelengths\\
Zenith atmospheric transmission & $T_z=0.90$ & Slant transmission $T_z^{1/\sin\epsilon}$\\
Detector jitter; detectors per station & $50$\,ps RMS each; $4$ & Independent jitter; evenly loaded outputs\\
Dark rate; spectral bandwidth & $100$\,s$^{-1}$ per detector; $1$\,nm & $400$\,s$^{-1}$ per station\\
Field-stop half-angle; sky radiance & $10\,\mu$rad; $10^{-6}$\,W\,m$^{-2}$\,sr$^{-1}$\,nm$^{-1}$ & Same reference radiance at both sites/wavelengths\\
HV surface coefficient; wind & $1.7\times10^{-14}$\,m$^{-2/3}$; $21$\,m/s & Atmospheric-profile sensitivities in Section~\ref{sec:fieldstop}\\
\bottomrule
\end{tabularx}
\end{table*}

The reference two-body period is $95.88$\,min. Maximum elevations are $62.64^\circ$ at Matera and $63.21^\circ$ at Helmos. Shared samples above the elevation mask run from 23:20:00 to 23:24:52 on the assumed UTC date, corresponding to an approximately $293$\,s exposure on the $1$\,s grid.  This is shown in Fig. \ref{fig:direct-geometry}.

\begin{figure*}[!tbp]
\centering
\includegraphics[width=.92\textwidth]{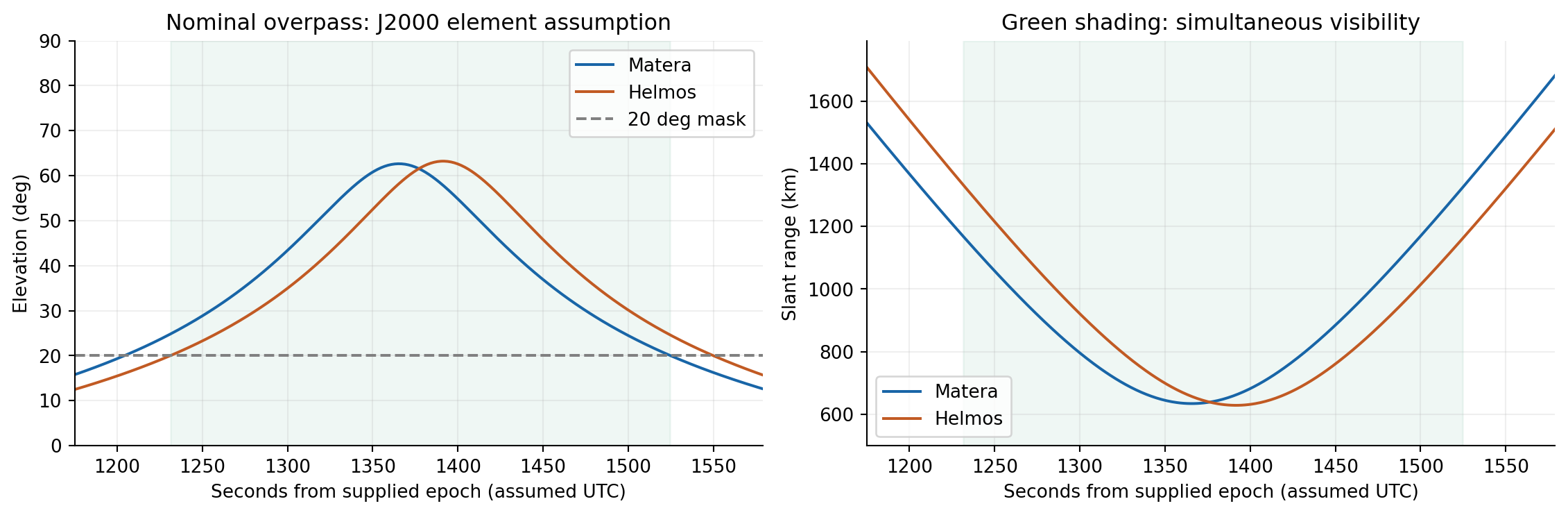}
\caption{\textbf{Direct baseline: Matera--Helmos pass geometry.} Elevations and slant ranges for the supplied circular-orbit interpretation. Shading identifies simultaneous visibility above the $20^\circ$ mask. The per-pass optical analysis and the frozen-peak operating-point calculation use different observation boundaries, stated in the text.}
\label{fig:direct-geometry}
\end{figure*}

The equivalent Gaussian model includes pointing through
\begin{align}
\theta_{\rm div}&=\frac{M^2\lambda_0}{\pi w_0},\\
w_{\rm eff}^2&=w_0^2+L^2(\theta_{\rm div}^2+4\sigma_\theta^2),
\label{eq:quasar-weff}\\
\eta_{\rm ann}&=e^{-2r_i^2/w_{\rm eff}^2}-e^{-2r_o^2/w_{\rm eff}^2},
\end{align}
where $r_o=D_r/2$ and $r_i=r_o\sqrt{o}$ for blocked-area fraction $o$. The per-arm detection probability before the separate field-stop sensitivity is
\begin{equation}
\eta_j=\eta_{\rm TX}\eta_{{\rm ann},j}T_z^{1/\sin\epsilon_j}\eta_{\rm RX}\eta_d.
\label{eq:quasar-detect}
\end{equation}
Gaussian pointing is already included in $w_{\rm eff}$ \cite{pirandola_sat}. Table~\ref{tab:quasar-loss} separates the incremental pointing contribution diagnostically.

\begin{table*}[!tbp]
\centering\small
\caption{Peak reference loss budget in dB. Collection and the incremental pointing contribution sum to the pointing-broadened annular collection. The total detection probability is used only once in the count equations.}
\label{tab:quasar-loss}
\begin{tabularx}{\textwidth}{Yrrrr}
\toprule
Contribution & 810 Matera & 810 Helmos & 1550 Matera & 1550 Helmos\\
\midrule
Gaussian annular collection & 3.86 & 1.97 & 8.43 & 5.24\\
Incremental pointing contribution & 2.17 & 1.35 & 0.91 & 0.80\\
Slant atmospheric loss & 0.52 & 0.52 & 0.52 & 0.52\\
TX and RX fixed loss & 3.98 & 3.98 & 3.98 & 3.98\\
Total optical loss & 10.53 & 7.82 & 13.83 & 10.53\\
Detection loss & 2.22 & 2.22 & 2.22 & 2.22\\
Total including detection & 12.74 & 10.04 & 16.05 & 12.75\\
\bottomrule
\end{tabularx}
\end{table*}

The frozen probabilities are $(\eta_M,\eta_H)=(0.053163022,0.099111603)$ at 810\,nm and $(0.024818782,0.053066425)$ at 1550\,nm. These numerical inputs retain guard digits for exact reproduction; they do not imply corresponding physical precision. The ideal linear-detector, $10^9$\,pairs/s stress reference gives approximately $5.269\times10^6$ and $1.317\times10^6$ peak genuine coincidences/s. Its station singles loading reaches $53.16$ and $99.11$\,Mcps at 810\,nm, motivating the fidelity-qualified and finite-live-time operating points below.

\subsubsection{Background, coincidence acceptance, and a fidelity constraint}
For entrance-plane sky spectral radiance $\mathcal L_\lambda$, clear area $A_{\rm clear}$, field solid angle $\Omega$, and bandwidth $\Delta\lambda$, the station sky rate is \cite{bonato}
\begin{equation}
B_j=\mathcal L_\lambda A_{{\rm clear},j}\Omega\Delta\lambda
\eta_{\rm RX}\eta_d\frac{\lambda_0}{h_{\rm P}c}.
\label{eq:quasar-background}
\end{equation}
Use wavelength in meters in the photon-energy factor and nm for the bandwidth when radiance is per nm. The reference sky rates are $(611.2,1431.4)$\,s$^{-1}$ at 810\,nm and $(1169.6,2739.2)$\,s$^{-1}$ at 1550\,nm. With the station dark rate, the total noise inputs are $(1011.2,1831.4)$\,s$^{-1}$ and $(1569.6,3139.2)$\,s$^{-1}$.

For CW Poisson pair brightness $R$, all-pairs time-tag association, and a full relative window $\tau$, define \cite{szlachetka}
\begin{align}
g(\tau)&=\operatorname{erf}\!\left(\frac{\tau}{2\sqrt2\sigma_\Delta}\right),
\quad \sigma_\Delta=\sqrt2\times50\,\mathrm{ps},\\
C&=R\eta_M\eta_Hg(\tau),\\
A&=(R\eta_M+N_M)(R\eta_H+N_H)\tau,\\
F_{\rm mix}&=\frac{F_0C+A/4}{C+A}.
\label{eq:quasar-fmix}
\end{align}
The accidental term includes different-pair coincidences and independent sky/dark contributions. The $1/4$ overlap declares a completely mixed effective accidental-qubit state. The statistic is the fidelity of that accepted mixture model.

For $1/4<F_\star<F_0$, write
\begin{equation}
\zeta=\frac{F_0-F_\star}{F_\star-1/4},\quad
b_j=\frac{N_j}{\eta_j},\quad K=\frac{\zeta g(\tau)}{\tau}-b_M-b_H.
\end{equation}
Then $F_{\rm mix}\ge F_\star$ is equivalent to $R^2-KR+b_Mb_H\le0$. If $K\ge2\sqrt{b_Mb_H}$, the feasible brightness interval is
\begin{equation}
R_-\le R\le R_+,\qquad R_\pm=\frac{K\pm\sqrt{K^2-4b_Mb_H}}2.
\label{eq:brightness-bound}
\end{equation}
Thus background sets a lower brightness limit while different-pair accidentals set an upper limit. At $R=10^9$\,s$^{-1}$ and $\tau=1$\,ns, $R\tau=1$ and $F_{\rm mix}\simeq0.62$; this stress point fails the selected $F_\star=0.90$ target.

\subsubsection{Detected singles, live time, and optimized operation}
For four evenly loaded nonparalyzable detectors per station and dead time $t_d$, the independent live-fraction approximation gives
\begin{equation}
\ell_j=\left[1+\frac{(R\eta_j+N_j)t_d}{4}\right]^{-1},
\qquad C_{\rm live}=C\ell_M\ell_H.
\label{eq:quasar-live}
\end{equation}
The argument is \emph{detected singles before dead-time suppression}. Applying the same live fractions to independent accidentals preserves Eq.~\eqref{eq:quasar-fmix} within this approximation. Table~\ref{tab:target-rates} reports fidelity-qualified brightness and genuine-coincidence contributions.

\begin{table*}[!tbp]
\centering\small
\caption{Frozen-peak operating points with full window $\tau=1$\,ns. The last two columns are genuine coincidences/s at the stated mixture-fidelity target. The live-time column assumes four detectors per station and $t_d=50$\,ns.}
\label{tab:target-rates}
\begin{tabularx}{\textwidth}{Yrrrr}
\toprule
Wavelength (nm) & $F_\star$ & Brightness $R_+$ (million pairs/s) & $C$ (pairs/s) & $C_{\rm live}$ (pairs/s)\\
\midrule
810 & 0.90 & 138.42 & 729,366 & 570,131\\
1550 & 0.90 & 138.34 & 182,199 & 160,008\\
810 & 0.95 & 57.11 & 300,892 & 270,728\\
1550 & 0.95 & 57.02 & 75,098 & 71,100\\
\bottomrule
\end{tabularx}
\end{table*}

Narrowing the window alone has a finite limit because
\begin{equation}
\frac{\tau}{g(\tau)}\longrightarrow\sqrt{2\pi}\sigma_\Delta
\quad (\tau\to0).
\end{equation}
At $10^9$\,pairs/s and the specified jitter, the zero-background limiting fidelity is $0.8786$. A $0.90$ target therefore requires a brightness reduction as well as timing selection. Optimizing brightness and a $20$--$5000$\,ps window, subject to $R\le10^9$\,s$^{-1}$, gives the values in Table~\ref{tab:joint-optimum}. These are peak stationary optima of the declared model; pass-integrated adaptive operation is a different optimization.

\begin{table}[!tbp]
\centering\small
\caption{Joint brightness/window optimization at $F_\star=0.90$ and $t_d=50$\,ns, retaining the frozen optical probabilities.}
\label{tab:joint-optimum}
\begin{tabular}{rrrr}
\toprule
$\lambda$ (nm)& $R$ ($10^6$ pairs/s)&$\tau$ (ps)&$C_{\rm live}$ (pairs/s)\\
\midrule
810&489.92&265.51&$1.139\times10^6$\\
1550&534.14&233.63&$4.017\times10^5$\\
\bottomrule
\end{tabular}
\end{table}

\subsubsection{Signal and background through the same field stop}
\label{sec:fieldstop}
The receiver sensitivity treats field acceptance consistently on the signal and noise sides. For a normalized angular point-spread function $P_j(\boldsymbol\alpha)$ conditional on entrance-pupil collection,
\begin{equation}
E_j(\theta)=\int_{|\boldsymbol\alpha|\le\theta}P_j(\boldsymbol\alpha)\,d^2\boldsymbol\alpha,
\quad\eta_j(\theta)=\eta_{j,\rm pre}E_j(\theta).
\end{equation}
The same half-angle sets $\Omega=2\pi(1-\cos\theta)$ in Eq.~\eqref{eq:quasar-background}. A uniform sky is integrated over that accepted solid angle.

For a uniformly illuminated annular pupil and no AO, the phase-only Kolmogorov long-exposure optical transfer function is \cite{fried1966,andrews2005}
\begin{equation}
T(\nu)=T_{\rm pupil}(\lambda_0\nu)
\exp[-3.44(\lambda_0\nu/r_0)^{5/3}],
\end{equation}
where $\nu$ is angular spatial frequency in cycles/radian and $T_{\rm pupil}$ is the normalized annular-pupil autocorrelation. Fourier--Bessel inversion gives
\begin{equation}
E(\theta)=2\pi\theta\int_0^{D/\lambda_0}T(\nu)J_1(2\pi\theta\nu)\,d\nu.
\label{eq:fieldstop-energy}
\end{equation}
The Fried parameter uses the HV profile integrated from station height to 30\,km, with $r_0=[0.423k^2\csc\epsilon\int C_n^2(h)\,dh]^{-3/5}$.

At 810\,nm, $20^\circ$ elevation, and a $10\,\mu$rad half-angle, a sea-level-referenced surface term yields $E_M=0.9251$, $E_H=0.9643$, and $0.50$\,dB combined field-stop loss. Rebasing only the surface exponential to each station height yields $0.5314$, $0.5723$, and $5.17$\,dB. Equation~\eqref{eq:fieldstop-energy} defines the angular sensitivity. These two prescribed profiles are separate physical assumptions, not fits to one dataset. Doubling the half-angle approximately quadruples the sky term, while the signal gain follows $E(\theta)$.


The receiver calculation is separate from Tables~\ref{tab:target-rates}--\ref{tab:joint-optimum}, which hold the original optical probabilities fixed. Multiplying their probabilities by $E_j$ and recomputing the bounds gives the corresponding combined sensitivity. The present calculation omits amplitude scintillation, nonuniform finite-beam pupil illumination, receiver tracking residuals, and measured AO errors; those model boundaries are consolidated in Section~\ref{sec:limitations}.



\subsection{Controller messages and service-accounting conventions}
\label{methods:control-contract}
Table~\ref{tab:api} specifies the messages exchanged between the controller and execution layer.
\begin{table*}[!tbp]
\centering\small
\caption{Controller--execution contract. Every message carries request/plan epoch and simulation timestamps; resource references include allocation versions.}
\label{tab:api}
\begin{tabularx}{\textwidth}{P{3.0cm}P{4.8cm}Y}
\toprule
Message/action & Payload & Resource meaning\\
\midrule
\code{ENT\_REQUEST} & Endpoints, service, $F_{\min}$, deadline & Creates request state and admission decision.\\
\code{StartSchedule} & Link/source ID, interval, rate/modes & Authorizes attempts within source and terminal budgets.\\
\code{REPORT\_STORAGE} & Endpoint/mode IDs, write and herald times, state tag & Updates the controller after signaling delay.\\
\code{InstructSwap} & BSM ID, two operands/versions, expiry & Atomically validates and reserves two local qubits.\\
\code{SWAP\_RESULT} & Outcome, consumed IDs, surviving endpoints, frame & Records destructive execution and surviving histories.\\
\code{StoreToken} & Derived pair identity and surviving endpoints & Associates metadata; an extra physical write is explicit.\\
\code{DeliverPair} & Pair ID, frame, quality, service-ready time & Announces the usable service at the endpoints.\\
\code{ClearRequest} & Request ID and epoch & Clears residual resources and invalidates stale actions.\\
\bottomrule
\end{tabularx}
\end{table*}
The full service definitions additionally include network access area, auxiliary-channel traffic, and synchronization precision. Stocked-service latency starts with suitable elementary inventory already available; cold-start latency includes generation. Both physical-ready and frame-known completion can be retained. Contact availability is geometrical; service availability includes quality, timing, and resource requirements.

\section*{Data availability}

The numerical data supporting the findings of this study will be made
publicly available upon publication through the QNETSIM repository at
\url{https://github.com/seidkoudia-max/QNETSIM}. The release will include
the simulation outputs underlying the figures and tables, scenario
configurations, physical-model parameters, random seeds, and analysis
and plotting scripts. Software-version and execution-environment
information will accompany the data to support reproduction of the
reported results.

\section*{Code availability}
The QNETSIM repository is \url{https://github.com/seidkoudia-max/QNETSIM}. The reproducibility release will contain the simulator, scenario drivers, physical-model configurations, environment information, and commands for regenerating the results. The source package includes the numerical reanalysis and reference-calculation scripts used in this article. 

\section*{Author contributions}

S.K. developed the quantum-memory models. S.K. and H.P. developed the
free-space optical (FSO) channel models, and H.P. modeled the
Hong--Ou--Mandel (HOM) interference. F.V. developed the models for
segmented entanglement swapping and scheduling. J.C. developed the
application programming interfaces (APIs). I.M. and I.B. supervised
the development of the software-defined networking (SDN) component.
All authors contributed to the discussion of the QNETSIM results.
S.K. and S.C. supervised the overall work.

\section*{Acknowledgements}

This work was carried out as part of the European Space Agency (ESA)
project \emph{Quantum information NETworks SIMulator} (QNETSIM),
under ESA Contract No.~40001 49440125/NL/FGL/lf.
\bibliographystyle{IEEEtran}
\begingroup
\small
\bibliography{references-2}

@article{parny2023,
  author  = {de Forges de Parny, L. and others},
  title   = {Satellite-based quantum information networks: use cases, architecture, and roadmap},
  journal = {Communications Physics},
  volume  = {6},
  pages   = {12},
  year    = {2023},
  doi     = {10.1038/s42005-022-01123-7}
}

@article{netsquid,
  author  = {Coopmans, T. and others},
  title   = {{NetSquid, a NETwork Simulator for QUantum Information using Discrete events}},
  journal = {Communications Physics},
  volume  = {4},
  pages   = {164},
  year    = {2021},
  doi     = {10.1038/s42005-021-00647-8}
}

@misc{qunetsimdocs,
  author       = {{QuNetSim authors}},
  title        = {Introduction},
  howpublished = {Official {QuNetSim} documentation},
  url          = {https://tqsd.github.io/QuNetSim/intro.html},
  urldate      = {2026-09-24}
}

@misc{qnetdocs,
  author       = {{Baidu Quantum}},
  title        = {{qcompute-qnet}},
  howpublished = {{PyPI}, official-maintainer package description and tutorial inventory},
  url          = {https://pypi.org/project/qcompute-qnet/},
  urldate      = {2026-09-24}
}

@misc{sequence,
  author        = {Wu, X. and Kolar, A. and Chung, J. and Jin, D. and Zhong, T. and Kettimuthu, R. and Suchara, M.},
  title         = {{SeQUeNCe: A Customizable Discrete-Event Simulator of Quantum Networks}},
  year          = {2020},
  howpublished  = {arXiv:2009.12000},
  eprint        = {2009.12000},
  archivePrefix = {arXiv},
  note          = {Revised 2021}
}

@misc{euroqci2025,
  author       = {{European Commission}},
  title        = {Commission and {European Space Agency} sign
                  {EuroQCI} implementation agreement},
  howpublished = {Shaping Europe's Digital Future},
  year         = {2025},
  month        = jan,
  url          = {https://digital-strategy.ec.europa.eu/en/news/commission-and-european-space-agency-sign-euroqci-implementation-agreement},
  
  urldate      = {2026-09-26}
}

@misc{eagle1,
  author       = {{European Space Agency}},
  title        = {{Eagle-1}},
  howpublished = {Official mission overview},
  url          = {https://www.esa.int/Applications/Connectivity_and_Secure_Communications/Eagle-1},
  
  urldate      = {2026-09-26}
}

@techreport{etsi_qkd014,
  author      = {{European Telecommunications Standards Institute}},
  title       = {Quantum Key Distribution ({QKD});
                 Protocol and data format of {REST}-based
                 key delivery {API}},
  institution = {ETSI},
  type        = {Group Specification},
  number      = {ETSI GS QKD 014 V1.1.1},
  year        = {2019},
  month       = feb,
  url         = {https://www.etsi.org/deliver/etsi_gs/QKD/001_099/014/01.01.01_60/gs_qkd014v010101p.pdf}
}

@techreport{etsi_qkd015,
  author      = {{European Telecommunications Standards Institute}},
  title       = {Quantum Key Distribution ({QKD});
                 Control Interface for Software Defined Networks},
  institution = {ETSI},
  type        = {Group Specification},
  number      = {ETSI GS QKD 015 V2.1.1},
  year        = {2022},
  month       = apr,
  url         = {https://www.etsi.org/deliver/etsi_gs/QKD/001_099/015/02.01.01_60/gs_QKD015v020101p.pdf}
}

@techreport{etsi_qkd020,
  author      = {{European Telecommunications Standards Institute}},
  title       = {Quantum Key Distribution ({QKD});
                 Protocol and data format of {REST}-based
                 Interoperable Key Management System {API}},
  institution = {ETSI},
  type        = {Group Specification},
  number      = {ETSI GS QKD 020 V1.1.1},
  year        = {2026},
  month       = jun,
  url         = {https://www.etsi.org/deliver/etsi_gs/QKD/001_099/020/01.01.01_60/gs_QKD020v010101p.pdf}
}

@techreport{rfc9340,
  author      = {Kozlowski, Wojciech and
                 Wehner, Stephanie and
                 Van Meter, Rodney and
                 Rijsman, Bruno and
                 Cacciapuoti, Angela Sara and
                 Caleffi, Marcello and
                 Nagayama, Shota},
  title       = {Architectural Principles for a Quantum Internet},
  institution = {RFC Editor},
  type        = {RFC},
  number      = {9340},
  year        = {2023},
  month       = mar,
  doi         = {10.17487/RFC9340},
  url         = {https://www.rfc-editor.org/rfc/rfc9340.html},
  note        = {Informational; Internet Research Task Force}
}

@article{pirandola2017,
  author  = {Pirandola, Stefano and Laurenza, Riccardo and
             Ottaviani, Carlo and Banchi, Leonardo},
  title   = {Fundamental limits of repeaterless quantum communications},
  journal = {Nature Communications},
  volume  = {8},
  pages   = {15043},
  year    = {2017},
  doi     = {10.1038/ncomms15043}
}

@article{wehner2018,
  author  = {Wehner, Stephanie and Elkouss, David and Hanson, Ronald},
  title   = {Quantum internet: A vision for the road ahead},
  journal = {Science},
  volume  = {362},
  number  = {6412},
  pages   = {eaam9288},
  year    = {2018},
  doi     = {10.1126/science.aam9288}
}

@inproceedings{quisp,
  author        = {Satoh, R. and others},
  title         = {{QuISP: a Quantum Internet Simulation Package}},
  booktitle     = {IEEE International Conference on Quantum Computing and Engineering},
  year          = {2022},
  pages         = {353--364},
  eprint        = {2112.07093},
  archivePrefix = {arXiv},
  note          = {Preprint: arXiv:2112.07093 (2021)}
}

@article{boone2015,
  author  = {Boone, K. and Bourgoin, J.-P. and Meyer-Scott, E. and Heshami, K. and Jennewein, T. and Simon, C.},
  title   = {Entanglement over global distances via quantum repeaters with satellite links},
  journal = {Physical Review A},
  volume  = {91},
  pages   = {052325},
  year    = {2015},
  doi     = {10.1103/PhysRevA.91.052325}
}

@article{liorni2021,
  author  = {Liorni, Carlo and Kampermann, Hermann and Bru{\ss}, Dagmar},
  title   = {Quantum repeaters in space},
  journal = {New Journal of Physics},
  volume  = {23},
  pages   = {053021},
  year    = {2021},
  doi     = {10.1088/1367-2630/abfa63}
}

@article{khatri2021,
  author  = {Khatri, Sumeet and Brady, Anthony J. and Desporte, Ren{\'e}e A. and Bart, Manon P. and Dowling, Jonathan P.},
  title   = {Spooky action at a global distance: analysis of space-based entanglement distribution for the quantum internet},
  journal = {npj Quantum Information},
  volume  = {7},
  pages   = {4},
  year    = {2021},
  doi     = {10.1038/s41534-020-00327-5}
}

@article{gundogan2021,
  author  = {G{\"u}ndo{\u g}an, Mustafa and Sidhu, Jasminder S. and Henderson, Victoria and Mazzarella, Luca and Wolters, Janik and Oi, Daniel K. L. and Krutzik, Markus},
  title   = {Proposal for space-borne quantum memories for global quantum networking},
  journal = {npj Quantum Information},
  volume  = {7},
  pages   = {128},
  year    = {2021},
  doi     = {10.1038/s41534-021-00460-9}
}

@article{wallnofer2022,
  author  = {Walln{\"o}fer, J. and others},
  title   = {Simulating quantum repeater strategies for multiple satellites},
  journal = {Communications Physics},
  volume  = {5},
  pages   = {169},
  year    = {2022},
  doi     = {10.1038/s42005-022-00945-9}
}

@article{yehia,
  author  = {Yehia, R. and Schiavon, M. and Marulanda Acosta, V. and Coopmans, T. and Kerenidis, I. and Elkouss, D. and Diamanti, E.},
  title   = {Connecting quantum cities: simulation of a satellite-based quantum network},
  journal = {New Journal of Physics},
  volume  = {26},
  pages   = {073015},
  year    = {2024},
  doi     = {10.1088/1367-2630/ad5b13}
}

@article{meister,
  author  = {Meister, J. and Kleinpa{\ss}, P. and Orsucci, D.},
  title   = {Simulation of satellite and optical link dynamics in a quantum repeater constellation},
  journal = {EPJ Quantum Technology},
  volume  = {12},
  pages   = {5},
  year    = {2025},
  doi     = {10.1140/epjqt/s40507-025-00307-8}
}

@misc{stbd2024,
  author        = {Koudia, S. and ur Rehman, J. and Chatzinotas, S.},
  title         = {{Space-Based Quantum Internet: Entanglement Distribution in Time-Varying LEO Constellations}},
  year          = {2024},
  howpublished  = {arXiv:2409.17032},
  eprint        = {2409.17032},
  archivePrefix = {arXiv},
  doi           = {10.48550/arXiv.2409.17032}
}

@misc{shi2026,
  author        = {Shi, Yaliang and Wang, Zi and Yuan, Bangguo and Wu, Gaojie and Zhao, Zhiwei},
  title         = {{QUASAR}: Quantum satellite architecture and routing simulator},
  year          = {2026},
  howpublished  = {arXiv:2609.04920v1},
  eprint        = {2609.04920v1},
  archivePrefix = {arXiv},
  note          = {Preprint}
}

@article{yin2017,
  author  = {Yin, Juan and others},
  title   = {Satellite-based entanglement distribution over 1200 kilometers},
  journal = {Science},
  volume  = {356},
  pages   = {1140--1144},
  year    = {2017},
  doi     = {10.1126/science.aan3211}
}

@misc{netsquiddocs,
  author       = {{NetSquid authors}},
  title        = {Official simulator overview and interface examples},
  howpublished = {Official {NetSquid} website},
  url          = {https://netsquid.org/},
  urldate      = {2026-09-24}
}

@misc{ryu2026,
  author       = {{Ryu maintainers}},
  title        = {{Ryu} project {README}},
  year         = {2026},
  url          = {https://github.com/faucetsdn/ryu},
  urldate      = {2026-09-25}
}

@article{fleischhauer,
  author  = {Fleischhauer, M. and Lukin, M. D.},
  title   = {Quantum memory for photons: Dark-state polaritons},
  journal = {Physical Review A},
  volume  = {65},
  pages   = {022314},
  year    = {2002},
  doi     = {10.1103/PhysRevA.65.022314}
}

@article{gorshkov_prl,
  author  = {Gorshkov, A. V. and Andr{\'e}, A. and Fleischhauer, M. and S{\o}rensen, A. S. and Lukin, M. D.},
  title   = {Universal approach to optimal photon storage in atomic media},
  journal = {Physical Review Letters},
  volume  = {98},
  pages   = {123601},
  year    = {2007},
  doi     = {10.1103/PhysRevLett.98.123601}
}

@article{gorshkov_free,
  author        = {Gorshkov, A. V. and Andr{\'e}, A. and Lukin, M. D. and S{\o}rensen, A. S.},
  title         = {Photon storage in {$\Lambda$}-type optically dense atomic media. {II}. Free-space model},
  journal       = {Physical Review A},
  volume        = {76},
  pages         = {033805},
  year          = {2007},
  doi           = {10.1103/PhysRevA.76.033805},
  eprint        = {quant-ph/0612083},
  archivePrefix = {arXiv},
  note          = {arXiv:quant-ph/0612083}
}

@article{crib_kraus,
  author  = {Kraus, B. and Tittel, W. and Gisin, N. and Nilsson, M. and Kr{\"o}ll, S. and Cirac, J. I.},
  title   = {Quantum memory for nonstationary light fields based on controlled reversible inhomogeneous broadening},
  journal = {Physical Review A},
  volume  = {73},
  pages   = {020302(R)},
  year    = {2006},
  doi     = {10.1103/PhysRevA.73.020302}
}

@article{crib_sangouard,
  author  = {Sangouard, N. and Simon, C. and Afzelius, M. and Gisin, N.},
  title   = {Analysis of a quantum memory for photons based on controlled reversible inhomogeneous broadening},
  journal = {Physical Review A},
  volume  = {75},
  pages   = {032327},
  year    = {2007},
  doi     = {10.1103/PhysRevA.75.032327}
}

@article{dd_filter,
  author  = {Bylander, J. and others},
  title   = {Noise spectroscopy through dynamical decoupling with a superconducting flux qubit},
  journal = {Nature Physics},
  volume  = {7},
  pages   = {565--570},
  year    = {2011},
  doi     = {10.1038/nphys1994}
}

@article{ion_memory,
  author  = {Drmota, P. and others},
  title   = {Robust quantum memory in a trapped-ion quantum network node},
  journal = {Physical Review Letters},
  volume  = {130},
  pages   = {090803},
  year    = {2023},
  doi     = {10.1103/PhysRevLett.130.090803}
}

@article{reiter,
  author        = {Reiter, F. and S{\o}rensen, A. S.},
  title         = {Effective operator formalism for open quantum systems},
  journal       = {Physical Review A},
  volume        = {85},
  pages         = {032111},
  year          = {2012},
  doi           = {10.1103/PhysRevA.85.032111},
  eprint        = {1112.2806},
  archivePrefix = {arXiv},
  note          = {arXiv:1112.2806}
}

@article{bsm,
  author  = {L{\"u}tkenhaus, N. and Calsamiglia, J. and Suominen, K.-A.},
  title   = {Bell measurements for teleportation},
  journal = {Physical Review A},
  volume  = {59},
  pages   = {3295},
  year    = {1999},
  doi     = {10.1103/PhysRevA.59.3295}
}

@article{bsm_limit,
  author  = {Calsamiglia, J. and L{\"u}tkenhaus, N.},
  title   = {Maximum efficiency of a linear-optical {Bell}-state analyzer},
  journal = {Applied Physics B},
  volume  = {72},
  pages   = {67--71},
  year    = {2001},
  doi     = {10.1007/s003400000484}
}

@article{pirandola_sat,
  author  = {Pirandola, S.},
  title   = {{Satellite Quantum Communications: Fundamental Bounds and Practical Security}},
  journal = {Physical Review Research},
  volume  = {3},
  pages   = {023130},
  year    = {2021},
  doi     = {10.1103/PhysRevResearch.3.023130}
}

@book{andrews2005,
  author    = {Andrews, Larry C. and Phillips, Ronald L.},
  title     = {{Laser Beam Propagation through Random Media}},
  edition   = {2nd},
  publisher = {SPIE Press},
  address   = {Bellingham, WA},
  year      = {2005},
  doi       = {10.1117/3.626196}
}

@article{fried1966,
  author  = {Fried, David L.},
  title   = {Optical resolution through a randomly inhomogeneous medium for very long and very short exposures},
  journal = {Journal of the Optical Society of America},
  volume  = {56},
  number  = {10},
  pages   = {1372--1379},
  year    = {1966},
  doi     = {10.1364/JOSA.56.001372}
}

@article{alhabash2001,
  author  = {Al-Habash, M. A. and Andrews, L. C. and Phillips, R. L.},
  title   = {Mathematical model for the irradiance probability density function of a laser beam propagating through turbulent media},
  journal = {Optical Engineering},
  volume  = {40},
  number  = {8},
  pages   = {1554--1562},
  year    = {2001},
  doi     = {10.1117/1.1386641}
}

@article{vasylyev,
  author  = {Vasylyev, D. and Semenov, A. A. and Vogel, W.},
  title   = {Atmospheric quantum channels with weak and strong turbulence},
  journal = {Physical Review Letters},
  volume  = {117},
  pages   = {090501},
  year    = {2016},
  doi     = {10.1103/PhysRevLett.117.090501}
}

@article{ao,
  author        = {Marulanda Acosta, V. and others},
  title         = {Analysis of satellite-to-ground quantum key distribution with adaptive optics},
  journal       = {New Journal of Physics},
  volume        = {26},
  pages         = {023039},
  year          = {2024},
  doi           = {10.1088/1367-2630/ad231c},
  eprint        = {2111.06747},
  archivePrefix = {arXiv},
  note          = {arXiv:2111.06747}
}

@techreport{vallado,
  author      = {Vallado, D. A. and Crawford, P. and Hujsak, R. and Kelso, T. S.},
  title       = {Revisiting {Spacetrack Report \#3}},
  institution = {AIAA},
  type        = {AIAA Paper},
  number      = {2006-6753},
  year        = {2006},
  note        = {Reference implementation and test cases maintained through {CelesTrak}}
}

@misc{skyfield,
  author       = {Rhodes, B.},
  title        = {{Skyfield}: high precision research-grade positions for planets and {Earth} satellites},
  howpublished = {Software documentation, ``Coordinates'' and ``Positions''},
  url          = {https://rhodesmill.org/skyfield/}
}

@article{heralded_memory,
  author  = {Tanji, H. and Ghosh, S. and Simon, J. and Bloom, B. and Vuleti{\'c}, V.},
  title   = {Heralded single-magnon quantum memory for photon polarization states},
  journal = {Physical Review Letters},
  volume  = {103},
  pages   = {043601},
  year    = {2009},
  doi     = {10.1103/PhysRevLett.103.043601}
}

@article{szlachetka,
  author  = {Szlachetka, J. and Joarder, K. and Kolenderski, P.},
  title   = {Ultrabright source of non-degenerate polarization-entangled photon pairs based on off-the-shelf polarization optics},
  journal = {Applied Physics Letters},
  volume  = {123},
  pages   = {144001},
  year    = {2023},
  doi     = {10.1063/5.0159000}
}

@article{bonato,
  author  = {Bonato, C. and Tomaello, A. and Da Deppo, V. and Naletto, G. and Villoresi, P.},
  title   = {Feasibility of satellite quantum key distribution},
  journal = {New Journal of Physics},
  volume  = {11},
  pages   = {045017},
  year    = {2009},
  doi     = {10.1088/1367-2630/11/4/045017}
}

@article{afc,
  author  = {Afzelius, M. and Simon, C. and de Riedmatten, H. and Gisin, N.},
  title   = {Multimode quantum memory based on atomic frequency combs},
  journal = {Physical Review A},
  volume  = {79},
  pages   = {052329},
  year    = {2009},
  doi     = {10.1103/PhysRevA.79.052329}
}

@article{ma2007,
  author  = {Ma, X. and Fung, C.-H. F. and Lo, H.-K.},
  title   = {Quantum key distribution with entangled photon sources},
  journal = {Physical Review A},
  volume  = {76},
  pages   = {012307},
  year    = {2007},
  doi     = {10.1103/PhysRevA.76.012307}
}
\endgroup

\end{document}